\documentclass{article}

\usepackage{amsmath,amssymb,latexsym,color,mathrsfs}
\usepackage{float}
\usepackage{soul}
\usepackage{appendix}
\usepackage{hyperref}
\hypersetup{
    colorlinks=true,
    linkcolor=blue,
    citecolor=blue,
    urlcolor=blue,
    pdfborder={0 0 0}
}

\newtheorem{theorem}{Theorem}[section]
\newtheorem{lemma}[theorem]{Lemma}

\newtheorem{claim}[theorem]{Claim}

\newtheorem{corollary}[theorem]{Corollary}

\newtheorem{definition}[theorem]{Definition}
\newtheorem{example}[theorem]{Example}

\newtheorem{proposition}[theorem]{Proposition}
\newtheorem{remark}{Remark}[section]

\newenvironment{proof}[1][Proof]{\noindent\textbf{#1.} }{\ \rule{0.5em}{0.5em}}

\begin{document}

\title{Continuity of equilibria in spaces of Bochner and Gel'fand economies}
\author{Mat\'ias Fuentes \\
  {\small Departamento de An\'alisis Econ\'omico: Econom\'ia Cuantitativa (UDI: Matem\'aticas)} \\
  {\small Universidad Aut\'onoma de Madrid} \\
  {\small Campus Universitario de Cantoblanco, 28049, Madrid, Spain} \\
  {\small \tt matias.fuentes@uam.es}
}
\date{}
\maketitle

\begin{abstract}
We examine the continuity of equilibrium correspondences in infinite-dimensional settings where the commodity spaces are Banach lattices. Economies are modeled as Borel probability measures on a space of characteristics, with aggregate endowments defined via Bochner or Gel'fand integrals. Within this framework, we prove that the equilibrium correspondence is continuous on a dense subset of the domain of economies admitting equilibria, endowed with a suitable Polish topology. These results extend both classical and recent continuity theorems by providing a unified analytical treatment applicable to a substantially broader class of locally convex spaces and encompass models with infinite planning horizons, monopolistic competition, neoclassical economies, financial equilibria, and asymmetric information. Importantly, this study demonstrates that there is no necessity to impose differentiability assumptions that are typically required in regular economies to study equilibrium continuity.

\ 

\noindent {\bf JEL Code}: C62; C65; D50.

\ 

\noindent {\bf MSC 2020}: 91B50; 46B42; 28A25; 28A33.

\

\noindent {\bf keywords}: Equilibrium correspondence; Banach lattices; Large economies; Distributional economies; Bochner integral; Gel'fand integral.

\end{abstract}

\section{Introduction}

Having established the existence of a competitive equilibrium, an examination of the equilibrium set could encompass inquiries into its uniqueness, regularity, or efficiency. In particular, the equilibrium set exhibits sensitivity to perturbations in exogenous parameters that define the agents and, consequently, the entire economic system. This interplay between parameters and equilibrium sets has been formalized in the literature through the concept of equilibrium correspondences, which map economies to their corresponding equilibria.

Building upon the foundational work of \cite{kannai_continuity_1970}, \cite{hildenbrand_economies_1970} and \cite{hildenbrand_upper_1972}, the present study delves into the continuity properties of the equilibrium correspondence for pure exchange economies. These works mentioned above, along with \cite{balasko_graph_1975}, all crucially recognize the role of exogenous parameters as defining characteristics of agents. This recognition is pivotal in establishing a topology within the space of economies.

The continuity of the equilibrium correspondence is a highly desirable property from both theoretical and applied perspectives. In its absence, seemingly minor measurement errors in the underlying parameters could lead to drastically different equilibrium sets. Consequently, the explanatory power of general equilibrium theory is critically dependent upon the continuity of the equilibrium correspondence.

The pioneering work of \cite{debreu_economies_1970} laid the groundwork for this analysis by proving that, in economies with regular properties, equilibria are locally unique if initial endowments are strictly positive. Furthermore, if the set of equilibria is compact, this is equivalent to saying that the equilibrium set is finite. The key mathematical tool used by Debreu was Sard's Theorem \cite{sard_measure_1942} under differentiability assumptions. We recall that a regular economy is an economy characterized by an excess demand function with the property that its slope at any equilibrium price vector is non-zero. This means that equilibrium prices are isolated and finite for any regular economy. Even more, \cite{debreu_economies_1970} proved that the set of economies that do not have a finite number of equilibria are negligible, i.e. of Lebesgue measure zero. Subsequently, \cite{hildenbrand_continuity_1972} strengthened Debreu's result by relaxing the assumption of strictly positive initial endowments.  

\cite{dierker_topological_1974} marked a significant progress by establishing the first result on the continuity of the equilibrium correspondence without differentiability assumptions (Corollary 8.5, p. 85).  The importance of relaxing that demand functions are continuously differentiable lies in its alignment with a key advancement in general equilibrium theory: the ability to move beyond the limitations of differential calculus. This achievement involved expanding the domain of analysis from regular economies to essential economies. In simpler terms, essential economies encompass all economies that possess essential equilibria. An essential equilibrium can be understood as one that can be approximated by equilibria from ``similar" economies. These similar economies are close to the original economy under a well-defined metric within the space of economies. The concept of essential equilibria, or essential stability, originates from fixed point theory through the work of Fort \cite{fort_essential_1950}, and its applicability has extended not only to general equilibrium theory but also to game theory (e.g., \cite{yang_hadamard_2017}). This broader applicability comes at a cost, however. Essential stability, unlike regularity, does not guarantee local isolation of equilibria. In simpler terms, essential equilibria may not be the only equilibria in a small neighborhood of a given economy.

Recent contributions to the study of the continuity of the equilibrium correspondence include the works of \cite{dubey_remark_2015} and \cite{he_modeling_2017}. The former investigates continuity by relaxing assumptions for economies with a finite number of agents, while the latter examines continuity within the framework of an atomless agent space structure. A significant constraint is the limitation to finite-dimensional commodity spaces, which restricts applicability to settings beyond the $\ell$-dimensional Euclidean space $\mathbb{R}^{\ell}$. Infinite-dimensional commodity spaces are essential for tackling problems in financial markets, dynamic general equilibrium theory, optimal growth theory, commodity differentiation, and the allocation of resources over time, encompassing both renewable and non-renewable ones. 

Building upon the previous works, \cite{cea_continuity_2024} recently extended the analysis to separable Banach spaces with a nonempty interior in the positive cone. Under mild assumptions, these authors proved that a topologically very large subset within the set of economies with equilibria exhibits continuous equilibria, that is, the equilibrium correspondence is continuous on that subset. This work provided an affirmative answer to a question raised by \cite{dubey_remark_2015} about the possibility of extending their result to infinite dimensional commodity space economies. The theoretical scope is rather limited due to the requirement of both a nonempty interior in the positive cone and separability, two properties satisfied by only a few topological spaces, namely, the space of continuous functions on a compact metric space $K$, $C(K)$, and weakly compact generated subspaces of $L_\infty$ and $\ell_\infty$, the spaces of real valued measurable essentially bounded functions on a finite measure space and the space of all bounded sequences respectively. From an economic point of view, it allows us to study the continuity of the equilibrium correspondence when considering forms of uncertainty, infinite horizon and commodity differentiation (see \cite{cea_continuity_2024}).

The main goal of the present work is to establish continuity results as in \cite{cea_continuity_2024} but encompassing many more economic models. Specifically, we focus on Banach lattices, which allows us to include most of the pertinent commodity spaces, offering a comprehensive approach to economic analysis as those of infinite plan horizon, monopolistic competition, neoclassical economies, financial equilibrium models and equilibria under asymmetric information, among others. Moreover, our study extends the scope to include not just subspaces of $L_\infty$ or $\ell_\infty$ as in \cite{cea_continuity_2024}, but the spaces themselves. This broader approach is made possible by avoiding the requirement of separability\footnote{It is noteworthy that while both $L_\infty$ and $\ell_\infty$ are not norm separable, their weakly compact generated subspaces are.}.

From a mathematical standpoint, our approach not only encompasses a wide range of topological vector spaces but also integrates diverse topologies within each space. This flexibility is critical because each economy is characterized as a Borel probability distribution over a space of characteristics, endowed with its corresponding Borel $\sigma$-algebra. The choice of the topology on this characteristic space determines the nature of its Borel 
$\sigma$-algebra, which is generated by (or associated with) the open sets of the characteristic space. Different topologies generally yield distinct Borel $\sigma$-algebras, thereby potentially giving rise to different economic models or scenarios depending on the underlying topology. This observation highlights the sensitivity of economic analyses to the mathematical structure chosen for the characteristic space.

In close association with this issue, our study contributes to advancing the treatment of total initial endowments in economic modeling. While \cite{cea_continuity_2024} made use of Bochner integrals over norm-compact spaces of initial endowments, our analysis extends to spaces where the Bochner integral is also defined over Mackey or weakly compact sets of initial endowments. This expansion enables a broader range of economic scenarios, accommodating more flexible variations in initial endowment structures. Furthermore, by incorporating dual spaces into our framework, we broaden even more our analytical scope, addressing diverse topological considerations in economic analysis. This extension allows us to consider economies where total endowments are defined through Gel'fand integrals over compact sets for different topologies.

We define the space of economies by a metric space that requires parameterizing the family of economies with respect to the dimensions of similarity. In our setting, these are the preference relations and the endowments termed the \emph{characteristics set}. Borel probability distributions on the characteristics set are a general way of defining economies at a macro level, which has been applied in many areas.  

To fully explore the ramifications of our continuity result, we transition to the analysis of individualized economies, that is, economies are defined as measurable mappings from an atomless measure space of agents directly to the characteristic set. This framework sheds light on the broader applicability of our continuity theorem. Crucially, it removes the common assumption in the literature that economies must share a homogenous agent space. Furthermore, this approach empowers us to establish two noteworthy properties: 1. any atomless probability space of consumers can be approached as the limit of a sequence with a progressively increasing number of finitely-typed agents. 2. The limit of a sequence of economies characterized by a finite and growing number of agent types can, itself, be an atomless probability space. 

This paper is structured as follows. Section \ref{maths} establishes the essential mathematical concepts employed throughout the analysis. Section \ref{sec:model} formally defines economic models, with a preliminary discussion on commodity-price pairings. Section \ref{economies} examines in detail the properties of distributional economies. Section \ref{continuity} introduces the equilibrium correspondence and presents our core results on continuity. Section \ref{individual} explores the alternative representation of individualized economies using measurable functions. Section \ref{applications} applies the theoretical framework developed in the preceding sections to analyze concrete economic problems. Finally, all formal proofs and special result are relegated to Appendices for improved readability.

\section{Mathematical preliminaries}\label{maths}

\ 

A subset of a topological space $(T, \mathcal T)$ is \emph{nowhere dense} if and only if its closure has empty interior. A subset of a topological space $(T, \mathcal T)$ is \emph{meager} in $T$ if and only if it is the countable union of nowhere dense sets. The complement of a meager set is called \emph{co-meager} or \emph{residual} in $T$. A subset which is not meager is called \emph{non-meager} in $T$. If every non-empty open subset is non-meager in $T$, then $T$ is called a \emph{Baire space}. 

\begin{theorem} \textnormal{(Baire - \cite{kelley_general_1955})} Let $T$ be either a complete pseudo-metric space or a locally compact regular space. Then the intersection of a countable family of open dense subsets of $T$ is itself dense in $T$.
\end{theorem}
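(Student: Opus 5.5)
The plan is the standard nested-sets argument, run separately in the two cases. Let $\{U_n\}_{n\ge 1}$ be open dense subsets of $T$ and let $W$ be any nonempty open set. It suffices to show $W\cap\bigcap_n U_n\neq\emptyset$, since then the intersection meets every nonempty open set and is therefore dense. In both cases I will build a decreasing sequence of nonempty sets $B_1\supseteq \overline{B_2}\supseteq B_2\supseteq\cdots$ with $\overline{B_n}\subseteq W\cap U_1\cap\cdots\cap U_n$. A compactness or completeness property will then force $\bigcap_n \overline{B_n}\neq\emptyset$.

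\emph{Complete pseudo-metric case.} Since $U_1$ is open and dense, $W\cap U_1$ is nonempty and open. Choose $x_1$ and $0<r_1<1$ with $\overline{B(x_1,r_1)}\subseteq W\cap U_1$, using the closed ball $\{y: d(x_1,y)\le r_1\}$, which is closed and contains the closure of the open ball. Inductively, $B(x_{n},r_{n})\cap U_{n+1}$ is nonempty and open by density of $U_{n+1}$. Pick $x_{n+1}$ and $0<r_{n+1}<\min(r_n,2^{-n})$ with the closed ball of radius $r_{n+1}$ about $x_{n+1}$ contained in $B(x_n,r_n)\cap U_{n+1}$. For $m>n$ we have $x_m\in B(x_n,r_n)$, so $d(x_m,x_n)<r_n\to 0$ and $(x_n)$ is Cauchy. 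By completeness it converges to some $x$; in a pseudo-metric space limits need not be unique, but any limit works. Each closed ball about $x_n$ contains the tail of the sequence, so it contains $x$. Hence $x\in W\cap\bigcap_n U_n$.

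\emph{Locally compact regular case.} Here I replace balls by open sets with compact closure. In a locally compact regular space, every point of a nonempty open set $O$ has an open neighbourhood $V$ with $\overline V$ compact and $\overline V\subseteq O$. This holds because a point has a neighbourhood with compact closure, and regularity lets us shrink it so that its closure lies inside $O$. The closure stays compact because it is a closed subset of a compact set. Apply this first to $O=W\cap U_1$ to get $V_1$. Then, inductively, apply it to $O=V_n\cap U_{n+1}$, which is nonempty by density, to get $V_{n+1}$ with $\overline{V_{n+1}}\subseteq V_n\cap U_{n+1}$. The sets $\overline{V_n}$ are nonempty, closed, and nested inside the compact set $\overline{V_1}$, so they have the finite intersection property. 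Their intersection is therefore nonempty, and any point in it lies in $W\cap\bigcap_n U_n$.

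The main obstacle is the shrinking lemma in the second case, namely getting compact closure inside a prescribed open set. If "locally compact" is taken to mean only that each point has some compact neighbourhood, I would proceed as follows. Take the point $p$, a compact neighbourhood $K$ of $p$, and the open set $O\cap\operatorname{int}K$. By regularity choose an open $V\ni p$ with $\overline V\subseteq O\cap\operatorname{int}K$. Then $\overline V$ is a closed subset of the compact set $K$. In a non-Hausdorff setting, $\overline V$ is closed in $T$ and hence closed in $K$, so it is compact. The remaining steps are routine.
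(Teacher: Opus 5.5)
Your proof is correct, and it is essentially the standard nested-sets argument from Kelley's \emph{General Topology}, which the paper only cites: closed balls with radii tending to zero plus completeness in the pseudo-metric case, and open sets with compact closures plus the finite intersection property in the locally compact regular case. Your extra step showing that $\overline V$ is compact without a Hausdorff assumption, as a closed subset of the compact neighbourhood $K$, is the right precaution, since Kelley's notion of regularity does not include the $T_1$ axiom.
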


The above theorem says that every locally compact space and every complete metrizable space is a Baire space. Further, it is straightforward to verify that the complement of a meager subset of complete metric space is dense. The next result is a variant of a classical result of \cite{fort_essential_1950} by \cite{wen_essential_1962}. This result will become very useful in our discussion of the continuity of equilibrium. 

\begin{lemma}\label{essential} Let $T$ and $M$ be a metric space and a Baire space respectively. Suppose further that $\Lambda : M \twoheadrightarrow T$ is a compact-valued and upper hemicontinuous correspondence with $\Lambda (m)\not = \emptyset$ for all $m\in M$. Then there exists a dense $G_{\delta}$ subset $M'$ of M such that $\Lambda$ is lower hemi-continuous at every point in $M'$.
\end{lemma}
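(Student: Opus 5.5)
The plan is the classical Fort argument: write the set of points where lower hemicontinuity fails as a countable union of closed nowhere dense sets, and then invoke the Baire property of $M$. Since $T$ is metric and $\Lambda$ is compact-valued, every value $\Lambda(m)$ is totally bounded. So for every $\varepsilon>0$ there is a finite set of open balls of radius $\varepsilon$ covering $\Lambda(m)$ and each meeting it. For $n\in\mathbb N$ I will define
\[
F_n=\{m\in M:\ \text{there is an open set } U \text{ with } U\cap\Lambda(m)\neq\emptyset \text{ such that every neighbourhood } V \text{ of } m \text{ contains } m' \text{ with } \Lambda(m')\cap U=\emptyset,\ U \text{ of the form } B(t,1/n)\}.
\]
More concretely, I will use balls of radius $1/n$ centred at points of $\Lambda(m)$. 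Lower hemicontinuity at $m$ is equivalent to the following: for every $n$ and every $t\in\Lambda(m)$, the set $\{m':\Lambda(m')\cap B(t,1/n)\neq\emptyset\}$ is a neighbourhood of $m$. Hence the set of discontinuity points is $\bigcup_n F_n$, and I must show that each $F_n$ is contained in a closed nowhere dense set.

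A cleaner alternative avoids balls and uses a semicontinuity argument. Let $\mathcal K(T)$ be the space of nonempty compact subsets of $T$ with the Hausdorff metric $d_H$. Upper hemicontinuity of a compact-valued $\Lambda$ gives, for each fixed $m$ and $\varepsilon>0$, a neighbourhood $V$ with $\Lambda(m')\subset B(\Lambda(m),\varepsilon)$ for $m'\in V$. Lower hemicontinuity at $m$ then amounts to the statement that for each $\varepsilon$ there is a neighbourhood on which $\Lambda(m)\subset B(\Lambda(m'),\varepsilon)$, that is, $\Lambda$ is $d_H$-continuous at $m$. I will therefore define
\[
G_n=\{m\in M:\ \exists\, V\ni m \text{ open},\ d_H(\Lambda(m'),\Lambda(m''))<1/n\ \forall m',m''\in V\}.
\]
Each $G_n$ is open by construction. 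Moreover $\bigcap_n G_n$ is exactly the set of points where $\Lambda$ is $d_H$-continuous, and these are points of lower hemicontinuity. It then suffices to show that each $G_n$ is dense; Baire gives the conclusion with $M'=\bigcap_n G_n$.

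The main obstacle is density of $G_n$. Fix a nonempty open $W\subset M$; I need a point of $G_n$ inside $W$. The tool is upper hemicontinuity combined with a finiteness/stabilization argument. The set $\Lambda(W)$ need not be totally bounded, so I will localize first. Pick $m_0\in W$ and, by upper hemicontinuity, shrink $W$ so that $\Lambda(m')\subset B(\Lambda(m_0),1/(4n))$ on $W$. Cover the compact set $\Lambda(m_0)$ by finitely many balls $B_1,\dots,B_k$ of radius $1/(2n)$. For $m'\in W$ let $S(m')=\{i:\Lambda(m')\cap \overline{B_i'}\neq\emptyset\}$, where $B_i'$ are slightly enlarged closed balls. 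Upper hemicontinuity makes $\{m': \Lambda(m')\cap C=\emptyset\}$ open for closed $C$, so $m'\mapsto S(m')$ is locally non-increasing with respect to inclusion. Choose $m_1\in W$ with $|S(m_1)|$ minimal; then $S$ is constant on a neighbourhood $V\subset W$ of $m_1$. On $V$ all values of $\Lambda$ lie within $1/(4n)$ of $\Lambda(m_0)$ and meet exactly the same balls, so any two of them are within Hausdorff distance about $2/n$. Starting with radius $1/(4n)$ instead of $1/(2n)$ brings this below $1/n$, hence $m_1\in G_n$. Careful bookkeeping of the enlarged balls and of the constants is the only delicate step. Nonemptiness of the values ensures $S(m')\neq\emptyset$, but this is not essential.
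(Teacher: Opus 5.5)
Your second argument, the one using the Hausdorff metric, is correct. The opening sketch with the sets $F_n$ is superseded by it and can be dropped. The paper gives no proof of this lemma to compare against: it imports the result as Wen's variant of Fort's theorem, so your argument is a self-contained substitute.

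It also differs from the usual textbook proof of Fort's theorem. That proof assumes $T$ has a countable base $\{U_k\}$ and places the points where lower hemicontinuity fails inside $\bigcup_k \partial\{m:\Lambda(m)\cap\overline{U_k}\neq\emptyset\}$. Each of these sets is closed by upper hemicontinuity, so its boundary is nowhere dense. Your route needs no separability of $T$. You localize at the compact, hence totally bounded, set $\Lambda(m_0)$. The integer-valued, locally non-increasing map $m'\mapsto |S(m')|$ then forces the finite hitting pattern to stabilize on an open set. This matches exactly the generality of the statement (arbitrary metric $T$). It even gives the stronger conclusion that $\Lambda$ is $d_H$-continuous on the dense $G_\delta$ set $\bigcap_n G_n$.

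Two points need to be made precise when you write it up.

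First, the enlarged closed balls cannot be only ``slightly'' enlarged. Let $\rho$ be the radius of the cover of $\Lambda(m_0)$ and $r$ the upper-hemicontinuity margin. The enlarged radius must be at least $\rho+r$, so that every point of every $\Lambda(m')$ with $m'\in W$ lies in one of the enlarged balls. Then for $m',m''\in V$, each $t\in\Lambda(m')$ satisfies $d(t,\Lambda(m''))<2(\rho+r)$. The supremum over the compact set $\Lambda(m')$ is attained, so $d_H(\Lambda(m'),\Lambda(m''))<2(\rho+r)$, and $\rho=r=1/(4n)$ suffices.

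Second, you use the Baire property in the form ``countable intersections of dense open sets are dense.'' This is equivalent to the paper's definition (every nonempty open set is non-meager), because the complement of a dense open set is closed and nowhere dense.
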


\ 

Consider a Banach space $(L, \Vert\cdot\Vert )$. We shall call the norm topology simply the $\|\cdot\|$-topology. Let $L^*$ be the topological dual space of $L$, i.e. the space of continuous linear functionals on $L$. In addition to the norm topology, we consider two topologies consistent with the duality $\langle L, L^*\rangle$. The weakest one, the weak topology $\sigma (L, L^*)$ and the strongest one, the Mackey topology $\tau(L, L^*)$. We also shall consider on $L^*$ the norm, the weak$^*$ and the Mackey$^*$ topology denoted $\|\cdot\|^*$, $\sigma(L^*, L)$ and $\tau(L^*, L)$, respectively. 

Let $(A, \mathcal A, \nu)$ be a finite measure space. A function $f:A\rightarrow L$ is called \emph{simple} if there exist $x_1, x_2, ..., x_n\in L$ and $A_1, A_2, ..., A_n\in\mathcal A$ such that $f = \sum_{i = 1}^n\chi_{A_i}x_i$ where $\chi_{A_i}(a) = 1$ if $a\in A_i$ and $\chi_{A_i}(a) = 0$ otherwise. A function $f:A\rightarrow L$ is called $\nu$-\emph{measurable} or \emph{strongly measurable} if there exists a sequence of simple functions $(f_n)
$ with $\text{lim}_n\|f_n(a) - f(a)\| = 0$ a.e. A function $f:A\rightarrow L$ is called \textit{weakly measurable} or \textit{$L^*$-measurable} if for each $p\in L^*$ the functional $p\cdot f: A\rightarrow \mathbb R$ is measurable.  More generally, for $L'\subset L^*$ $f$ is called $L'$-measurable if $p\cdot f: A\rightarrow \mathbb R$ is measurable for every $p\in L'$.

\begin{lemma}\label{measurestrongmeasure} A function $f: A\rightarrow L$ is strongly measurable if and only if it is $\left(\mathcal A, \mathcal B\left (L^{\|\cdot\|}\right )\right )$-measurable and its values lie for $\nu$-almost all $a$ in a separable subset of $L$.
    
\end{lemma}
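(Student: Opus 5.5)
We prove the two implications separately; the one real difficulty is a separable-range measurable function that is not a pointwise limit of simple functions, which we handle by building countable-valued approximants from a dense sequence.

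\emph{Necessity.} Let $f$ be strongly measurable and let $(f_n)$ be simple functions with $\|f_n(a)-f(a)\|\to 0$ for all $a$ outside a null set $N$.
\begin{itemize}
\item Each $f_n$ takes finitely many values, so $D=\bigcup_n f_n(A)$ is countable.
\item Its norm closure $\overline{D}$ is separable, and $f(a)\in\overline{D}$ for every $a\notin N$.
\item Each $f_n$ is $(\mathcal A,\mathcal B(L^{\|\cdot\|}))$-measurable, since $f_n^{-1}(B)$ is a finite union of the sets $A_i$.
\item Modify $f$ and the $f_n$ on $N$ (set them equal to $0$ there), which is harmless for a.e. statements. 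Then $f$ is an everywhere pointwise norm limit of Borel measurable maps into the metric space $L$.
\item A pointwise limit of measurable maps into a metric space is measurable. For an open $U$, put $U_k=\{x:\mathrm{dist}(x,L\setminus U)>1/k\}$. Then $f^{-1}(U)=\bigcup_k\bigcup_m\bigcap_{n\ge m} f_n^{-1}(U_k)$.
\end{itemize}
This gives measurability of $f$ up to the null set. If $\mathcal A$ is not complete, we read ``measurable'' as ``a.e. equal to a measurable function'', consistent with the a.e. formulation of the statement.

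\emph{Sufficiency.} Suppose $f$ is Borel measurable and $f(a)\in S$ for $a\notin N$, with $S$ separable. Replace $S$ by its closure, which is still separable, and fix a dense sequence $(x_k)_{k\ge1}$ in $S$. Redefine $f=x_1$ on $N$; this changes nothing a.e. and keeps $f$ measurable, since $N\in\mathcal A$. For each $n$ and $a$, let $k_n(a)$ be the least index $k\le n$ minimising $\|f(a)-x_k\|$, and set $g_n(a)=x_{k_n(a)}$.
\begin{itemize}
\item Each map $a\mapsto\|f(a)-x_k\|$ is measurable, because it is the composition of $f$ with a continuous function.
\item Hence the sets $\{k_n=k\}$ are measurable, being defined by finitely many inequalities among these functions.
\item So each $g_n$ is a simple function.
\item By density of $(x_k)$ in $S$, we get $\|g_n(a)-f(a)\|=\min_{k\le n}\|f(a)-x_k\|\to 0$ for every $a$.
\end{itemize}
Thus $f$ is strongly measurable.

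The only delicate points are the bookkeeping with the null set and measurability of the argmin sets, and both are routine. The nearest-point construction over the first $n$ dense points is exactly what resolves the obstacle named at the start, because it yields simple functions directly.
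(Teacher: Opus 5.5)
Your proof is correct. The paper gives no proof of this lemma to compare against. It is quoted as standard background, placed just before the Pettis measurability theorem (Theorem~\ref{pettismeasurability}) taken from Diestel--Uhl.

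Within the paper's toolkit the natural route would go through Pettis.
\begin{itemize}
\item For sufficiency, each $p\in L^*$ is norm-continuous, so a Borel measurable $f$ is weakly measurable. Together with the a.e.\ separable range, Theorem~\ref{pettismeasurability} then gives strong measurability.
\item For necessity, Pettis supplies the essentially separable range. Borel measurability still has to come from a pointwise-limit argument like yours.
\end{itemize}
Your nearest-point construction over the first $n$ terms of a dense sequence is essentially the second half of the proof of Pettis' theorem. Because $f$ is assumed Borel measurable, the functions $a\mapsto\|f(a)-x_k\|$ are measurable directly, so you need no norming sequence of functionals and no Hahn--Banach. Your route is therefore more elementary and produces the approximating simple functions explicitly; the Pettis route is shorter only because it outsources that work.

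Your completeness caveat is not pedantry: as literally stated, the ``only if'' direction is false when $\mathcal A$ is not complete. On $[0,1]$ with the Borel sets and Lebesgue measure, let $E$ be a non-Borel subset of the Cantor set and take $f=\chi_E$. Then $f$ vanishes a.e., so it is strongly measurable, but it is not Borel measurable. So you need one of two fixes.
\begin{itemize}
\item If $\mathcal A$ is complete, $f$ differs from your Borel modification only on a subset of a null set, so $f$ is itself measurable. It would be worth saying this explicitly.
\item Otherwise, measurability must be read modulo null sets, as you do.
\end{itemize}

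Two small bookkeeping points, neither of which affects the argument:
\begin{itemize}
\item The null set $N$ should be chosen in $\mathcal A$, since the set where $f_n$ fails to converge to $f$ need not be measurable.
\item The paper's representation $\sum_i\chi_{A_i}x_i$ allows the $A_i$ to overlap. Hence $f_n^{-1}(B)$ is a finite union of atoms of the algebra generated by the $A_i$, not of the $A_i$ themselves, unless you first pass to a disjoint representation.
\end{itemize}
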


\begin{theorem}[\cite{diestel_vector_1977}, Pettis Measurability Theorem.]\label{pettismeasurability} $f:A\rightarrow L$ is strongly measurable if and only if

\begin{enumerate}
    \item $f$ is weakly measurable.
    \item There exists a $\nu$-null set $N\subset \mathcal A$ such that $f\left(A\setminus N\right)$ is norm separable.
\end{enumerate}
\end{theorem}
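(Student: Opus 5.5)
Both directions reduce to Lemma \ref{measurestrongmeasure}. The link between weak measurability and Borel measurability is supplied by the Hahn--Banach theorem: on a separable subset of $L$ the norm, and hence every open ball, can be described by countably many functionals. We assume $\mathcal A$ is $\nu$-complete, or that "measurable" is understood modulo $\nu$-null sets, as is standard.

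\emph{Necessity.} Suppose $f$ is strongly measurable, and take simple $f_n$ with $f_n(a)\to f(a)$ in norm for all $a\notin N$, where $\nu(N)=0$. For each $p\in L^*$ the function $p\cdot f_n$ is a real simple function. Since $p$ is norm continuous, $p\cdot f_n\to p\cdot f$ on $A\setminus N$. Hence $p\cdot f$ is measurable, being an a.e. limit of measurable functions, and this gives condition 1. For condition 2, let $D$ be the countable set of all values taken by all the $f_n$. Then $f(A\setminus N)\subset\overline{D}^{\|\cdot\|}$, which is separable.

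\emph{Sufficiency.} Assume conditions 1 and 2. Let $L_0=\overline{\mathrm{span}}\, f(A\setminus N)$, a separable closed subspace. By Hahn--Banach, choose a sequence $(p_k)\subset L^*$ with $\|p_k\|\le 1$ such that $\|x\|=\sup_k|p_k\cdot x|$ for all $x\in L_0$. To get it, take a dense sequence $(x_k)$ in $L_0$ and norming functionals $p_k$ with $p_k\cdot x_k=\|x_k\|$; the identity then follows by density. For any $x_0\in L_0$, the function
\[
a\mapsto\|f(a)-x_0\|=\sup_k |p_k\cdot f(a)-p_k\cdot x_0|
\]
is measurable on $A\setminus N$, as a countable supremum of measurable functions. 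Hence $f^{-1}(B)\cap(A\setminus N)$ is measurable for every open ball $B$ centered in $L_0$. Every relatively open subset of the separable metric space $L_0$ is a countable union of such balls, so the restriction of $f$ to $A\setminus N$ is Borel measurable into $L_0$, and therefore into $L$. Since the values of $f$ on $A\setminus N$ lie in a separable subset, Lemma \ref{measurestrongmeasure} yields strong measurability.

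If one prefers to avoid Lemma \ref{measurestrongmeasure}, the simple functions can be built directly. Fix a dense sequence $(d_j)$ in $f(A\setminus N)$. Set $f_n(a)=d_{j}$, where $j\le n$ is the least index minimizing $\|f(a)-d_j\|$. These $f_n$ are simple, and they are measurable by the displayed formula. Moreover $\|f_n(a)-f(a)\|=\min_{j\le n}\|f(a)-d_j\|\to 0$ by density.

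\emph{Main obstacle.} The key step is passing from weak measurability to measurability of the norm distance. This requires the countable norming family, which exists only because of the separability in condition 2. Without it the supremum over all of $L^*$ is uncountable, and the argument breaks down.
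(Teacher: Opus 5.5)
The paper gives no proof of this theorem and simply quotes it from \cite{diestel_vector_1977}; your argument is correct and is essentially the standard proof found there: a countable norming family $(p_k)$ on the separable closed span $L_0$ makes $a\mapsto\|f(a)-x_0\|$ measurable off $N$, and $f$ is then approximated by finitely (or countably) valued measurable functions. Your caveat that the necessity direction, with weak measurability phrased as in the paper, needs $\nu$-completeness or the a.e.\ convention is correct, and your direct nearest-point construction is the more self-contained way to finish, since the nontrivial direction of Lemma~\ref{measurestrongmeasure} is sometimes itself derived from Pettis's theorem.
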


A strongly measurable function $f:A\rightarrow L$ is called \emph{Bochner integrable} if there exists a sequence of simple functions $(f_n)
$ such that $\text{lim}_n\int_A\|f_n (a)- f(a)\|d\nu = 0$ where the latter is an ordinary Lebesgue integral. Thus for each $ A'\in \mathcal A$, $\int_{ A'}fd\nu = \text{lim}_n\int_{ A'}f_n d\nu$. A useful characterization for Bochner integrability is the following: 

\begin{theorem}[\cite{diestel_vector_1977}, Theorem 2.]\label{bochnerintegral}
A strongly measurable function $f:A\rightarrow L$ is Bochner integrable if and only if $\int_A \|f(a)\|d\nu<\infty.$
\end{theorem}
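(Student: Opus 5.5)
One direction is immediate. If $f$ is Bochner integrable, pick simple functions $(f_n)$ with $\int_A\|f_n-f\|\,d\nu\to 0$ and fix $n$ with $\int_A\|f_n-f\|\,d\nu<\infty$. Since $\|f\|\le\|f_n\|+\|f_n-f\|$ pointwise and $\|f_n\|$ is a real simple function on a finite measure space, we get $\int_A\|f\|\,d\nu<\infty$. Measurability of $a\mapsto\|f(a)\|$ holds because $f$ is strongly measurable: it is the a.e. limit of the measurable functions $\|f_n(a)\|$.

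For the converse, assume $f$ is strongly measurable and $\int_A\|f\|\,d\nu<\infty$. The plan is to build simple functions $g_n$ that converge to $f$ pointwise a.e. and satisfy the domination $\|g_n(a)\|\le 2\|f(a)\|$; the Lebesgue dominated convergence theorem then does the rest.

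1. By strong measurability there are simple $f_n$ with $f_n\to f$ a.e. Define
\[ g_n(a)=f_n(a)\,\chi_{\{\|f_n(a)\|\le 2\|f(a)\|\}}(a). \]
The set $\{a:\|f_n(a)\|\le 2\|f(a)\|\}$ is measurable because both norms are measurable real functions. Since $f_n$ takes finitely many values, $g_n$ is again simple.
2. Check pointwise convergence $g_n\to f$ a.e. If $f(a)\neq 0$, then $\|f_n(a)\|\to\|f(a)\|<2\|f(a)\|$, so $g_n(a)=f_n(a)$ eventually and hence $g_n(a)\to f(a)$. If $f(a)=0$, then either $g_n(a)=0$ or $\|g_n(a)\|\le 2\|f(a)\|=0$, so $g_n(a)=0=f(a)$.
3. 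Bound the error: $\|g_n-f\|\le 3\|f\|$, which is integrable, and $\|g_n-f\|\to 0$ a.e. Dominated convergence gives $\int_A\|g_n-f\|\,d\nu\to 0$, which is exactly Bochner integrability.

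The main obstacle is producing simple approximants with an integrable dominating bound, since the original $f_n$ need not have one. The truncation in step 1 is the device that handles this. The only remaining care is that $\nu$ is finite, so simple functions are integrable, and that the exceptional null set from strong measurability is harmless.
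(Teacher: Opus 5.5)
Your proof is correct and follows the standard argument in Diestel--Uhl (Theorem 2), which the paper cites without reproducing. The easy direction uses $\|f\|\le\|f_n\|+\|f-f_n\|$; the converse truncates the a.e.\ simple approximants to $\{\|f_n\|\le 2\|f\|\}$, so that $\|g_n-f\|\le 3\|f\|$ and dominated convergence applies, and the only details you leave implicit (measurability of $\|f\|$ and $\|g_n-f\|$ up to null sets) follow routinely from strong measurability.
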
\label{simple}
\begin{remark}\label{simplefunctionsbochner} By Theorems \ref{pettismeasurability} and \ref{bochnerintegral}, simple functions are Bochner integrable.
\end{remark}

\begin{theorem}[\cite{diestel_vector_1977}, Theorem 3, \textit{Dominated Convergence Theorem}.]\label{DCT} Let $(f_n)$ be a sequence of Bochner integrable $F$-valued functions on $A$ for some $F\subset L$. If $\textnormal{lim}_n f_n = f$ in $\nu$-measure and if there exists a real valued Lebesgue integrable function $g$ on $A$ with $\|f_n\|\le g$, $\nu$-a.e., then $f$ is Bochner integrable and $\textnormal{lim}_n\int_Af_n d\nu = \int_A f d\nu$. In fact, $\textnormal{lim}_n\int_A\|f_n - f\|d\nu = 0$.
\end{theorem}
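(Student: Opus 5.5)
We prove the Dominated Convergence Theorem (Theorem \ref{DCT}) by reducing it to the scalar dominated convergence theorem, applied to the real-valued functions $\|f_n - f\|$. The first task is to show that $f$ is itself Bochner integrable. Convergence in $\nu$-measure gives a subsequence $(f_{n_k})$ with $f_{n_k}(a)\to f(a)$ in norm for $\nu$-almost every $a$. Each $f_{n_k}$ is strongly measurable, so it is the a.e.\ limit of simple functions. Hence:
\begin{enumerate}
\item By Lemma \ref{measurestrongmeasure}, each $f_{n_k}$ is Borel measurable and takes values, off a null set $N_k$, in a separable subset $S_k$ of $L$.
\item Off the null set $N=\bigcup_k N_k$ together with the exceptional set of the subsequence, $f$ is a pointwise norm limit of the $f_{n_k}$. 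So $f$ is Borel measurable and takes values in the closure of $\bigcup_k S_k$, which is separable.
\item Lemma \ref{measurestrongmeasure} then gives that $f$ is strongly measurable. (Alternatively, weak measurability passes to pointwise limits, and Theorem \ref{pettismeasurability} applies.)
\end{enumerate}
Passing to the limit in $\|f_{n_k}\|\le g$ gives $\|f\|\le g$ a.e. Since $g$ is Lebesgue integrable, $\int_A\|f\|\,d\nu<\infty$, and Theorem \ref{bochnerintegral} yields that $f$ is Bochner integrable.

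Next we prove $\lim_n\int_A\|f_n-f\|\,d\nu=0$. The function $h_n=\|f_n-f\|$ is real-valued and measurable, being the norm of a strongly measurable function. It satisfies $0\le h_n\le 2g$ a.e., and $h_n\to 0$ in $\nu$-measure by hypothesis. The scalar dominated convergence theorem, in its convergence-in-measure form, gives $\int h_n\,d\nu\to 0$. To stay with the pointwise version, we argue by subsequences. Suppose $\int h_n\,d\nu\not\to 0$. Then some subsequence has $\int h_{n_j}\,d\nu\ge\varepsilon>0$. A further subsequence converges a.e., because it still converges in measure. The classical dominated convergence theorem applied to that sub-subsequence gives integrals tending to $0$, a contradiction.

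Finally, we deduce convergence of the integrals from the estimate $\|\int_A\phi\,d\nu\|\le\int_A\|\phi\|\,d\nu$ for Bochner integrable $\phi$. For simple $\phi$ this is the triangle inequality. For general $\phi$ it follows by passing to the limit along the approximating simple functions in the definition of the Bochner integral. Applying it to $\phi=f_n-f$, and using linearity of the integral, gives $\|\int_A f_n\,d\nu-\int_A f\,d\nu\|\le\int_A\|f_n-f\|\,d\nu\to0$.

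The main obstacle is the measurability step: guaranteeing that the limit $f$ is essentially separably valued and measurable, so that $\|f_n-f\|$ is a legitimate measurable function and $f$ qualifies for Theorem \ref{bochnerintegral}. The plan handles this by extracting an a.e.-convergent subsequence and taking the countable union of the separable ranges. The subset $F$ in the statement plays no role, since all estimates are carried out in $L$.
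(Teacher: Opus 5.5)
Your proof is correct and is essentially the standard argument of Diestel--Uhl (Theorem II.2.3), which the paper only cites and does not reprove. As there, an a.e.-convergent subsequence gives strong measurability of $f$ and $\|f\|\le g$; the scalar dominated convergence theorem applied to $\|f_n-f\|\le 2g$, together with $\|\int_A\phi\,d\nu\|\le\int_A\|\phi\|\,d\nu$, then gives the rest (the only care needed is that $f$ is Borel measurable only off the exceptional null set, which is harmless because strong measurability in the paper's sense is an almost-everywhere notion).
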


The following result is an adaptation of the uniform version of the Lebesgue Dominated Convergence Theorem.
\begin{theorem}[\cite{tausk_some_2001}, Lemma 3.2.]\label{UDCT} Let $J$ be and arbitrary set. Let $(f^j_n)_{j\in J, \ n\in\mathbb N}$ and $(f^j)_{i\in J}$ be a family of maps $f^j_n: A\rightarrow\mathbb R$, $f^j: A\rightarrow\mathbb R$ such that: (i) $f^j_n$ is measurable for all $n$ and $j\in J$, (ii) $\textnormal{lim}_n\underset{j\in J}{\textnormal{sup}}\left\{\left|f^j_n(a) - f^j(a)\right|\right\} = 0$ a.e. $a\in A$, and (iii) there exists an integrable map $h: A\rightarrow\mathbb R$ such that $|f^j_n(a)| \le h(a)$ for all $a\in A$, $n\in\mathbb N$ and $j\in J$. Then, all $f^j_n$ and $f^j$ are Lebesgue integrable and $\textnormal{lim}_n\underset{j\in J}{\textnormal{sup}}\left\{\left|\int_Af^j_n(a)d\nu(a) - \int_Af(a)d\nu(a)\right|\right\} = 0$.

\end{theorem}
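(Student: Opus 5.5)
The strategy is to reduce the uniform statement to the scalar dominated convergence theorem, applied to a single envelope function. For each $n$ I define
\[
g_n(a) = \sup_{j\in J}\left|f^j_n(a) - f^j(a)\right| .
\]
By (ii), $g_n(a)\to 0$ for almost every $a$.

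The first step handles integrability of the individual functions. Each $f^j_n$ is measurable by (i) and satisfies $|f^j_n|\le h$ by (iii), so it is Lebesgue integrable. For fixed $j$, (ii) gives $f^j_n(a)\to f^j(a)$ almost everywhere. Hence $f^j$ agrees almost everywhere with a pointwise limit of measurable functions, and is measurable provided $(A,\mathcal A,\nu)$ is complete. Otherwise I replace $f^j$ by $\limsup_n f^j_n$, which changes nothing on a null set. Passing to the limit also gives $|f^j|\le h$ almost everywhere, so $f^j$ is integrable.

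The second step is the uniform estimate. For every $j$,
\[
\left|\int_A f^j_n\,d\nu - \int_A f^j\,d\nu\right| \le \int_A |f^j_n - f^j|\,d\nu \le \int_A g_n\,d\nu .
\]
The right-hand side does not depend on $j$, so the supremum over $j\in J$ is bounded by $\int_A g_n\,d\nu$. I also have $0\le g_n\le 2h$ almost everywhere, since $|f^j_n|\le h$ and $|f^j|\le h$ almost everywhere.

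The main obstacle is that $g_n$ is a supremum over an arbitrary, possibly uncountable, index set $J$, so it need not be measurable. To handle this I pass to upper integrals. I pick a measurable $\tilde g_n$ with $g_n\le \tilde g_n\le 2h$ almost everywhere, chosen as a measurable envelope of $g_n$ capped at $2h$.

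Measurable envelopes are only determined up to null sets, and I need $\tilde g_n\to 0$ almost everywhere. To secure this, I instead set
\[
\tilde g_n = \min\Bigl(2h,\ \text{an envelope of } \sup_{m\ge n} g_m\Bigr).
\]
These functions are decreasing in $n$, up to null sets. Their limit $\tilde g$ is measurable, and $\tilde g$ lies almost everywhere above $\limsup_n g_n = 0$.

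This alone does not force $\tilde g$ to vanish. I would note that the clean way out is the standard assumption, implicit in the paper's applications, that the null set in (ii) can be taken so that the suprema are measurable; for instance, $J$ may be countable or the maps may be continuous in $j$ on a separable parameter space. In that case $\tilde g_n = g_n$, and the scalar dominated convergence theorem applied with dominating function $2h$ gives $\int_A g_n\,d\nu\to 0$. Combined with the second step, this yields $\lim_n \sup_{j\in J}\left|\int_A f^j_n\,d\nu - \int_A f^j\,d\nu\right| = 0$.

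If full generality is needed, I would instead use Egorov's theorem on outer-measurable sets. Fix $\varepsilon>0$ and choose $\delta$ so that $\int_E 2h\,d\nu<\varepsilon$ whenever $\nu(E)<\delta$. Then find a measurable $B$ with $\nu(A\setminus B)<\delta$ on which $g_n\to 0$ uniformly. On $B$ the integrand is at most $\varepsilon$ for large $n$, and off $B$ it is at most $2h$. This gives $\sup_{j\in J}\left|\int_A f^j_n\,d\nu - \int_A f^j\,d\nu\right| \le \varepsilon\,\nu(A) + \varepsilon$ for large $n$. Justifying this Egorov step for non-measurable $g_n$ is where I expect the real technical work to lie.
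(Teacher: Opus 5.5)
Your first two steps are fine, but everything after them depends on showing that the upper integral of $g_n=\sup_{j\in J}|f^j_n-f^j|$ tends to $0$. For uncountable $J$ this is false, so neither your envelope route nor your Egorov route can be completed. The Egorov step is not a technical gap; it is a false statement.

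Here is a counterexample. Take $A=[0,1)$ with Lebesgue measure $\lambda$ and a decreasing sequence $E_n\subset[0,1)$ with outer measure $\lambda^*(E_n)=1$ and $\bigcap_n E_n=\emptyset$. For instance, let $E_n=\bigcup_{k\ge n}V_k$, where $(V_k)$ is the partition of $[0,1)$ into the rational translates mod $1$ of a Vitali set. Let $J=[0,1)$, $f^j_n=\chi_{\{j\}}$ if $j\in E_n$ and $f^j_n=0$ otherwise, $f^j=0$ and $h=1$.
\begin{itemize}
\item All hypotheses hold, and $g_n=\chi_{E_n}\to 0$ everywhere.
\item The upper integral of $g_n$ equals $1$ for every $n$, so your $\tilde g_n$ is $1$ almost everywhere.
\item Any measurable $B$ on which $g_n\to 0$ uniformly must be disjoint from some $E_{n_0}$, hence is null.
\end{itemize}
Meanwhile $\sup_j\bigl|\int f^j_n\,d\lambda-\int f^j\,d\lambda\bigr|=0$ here. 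So the bound by the integral of $g_n$ is simply too lossy. Your fallback of assuming the suprema are measurable adds a hypothesis the statement does not contain. The statement is for arbitrary $J$, and in the paper it is applied with $J=G$, a $w'$-compact convex circled subset of $L'$, which is uncountable.

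The missing idea is never to integrate the supremum, and to use only countably many indices. Suppose the conclusion fails. Then there are $c>0$, indices $n_1<n_2<\cdots$ and $j_k\in J$ with $\bigl|\int_A (f^{j_k}_{n_k}-f^{j_k})\,d\nu\bigr|>c$.
\begin{itemize}
\item Set $u_k=f^{j_k}_{n_k}-f^{j_k}$. Each $u_k$ is measurable, in the completion, as in your first step.
\item Let $N$ be the null set from (ii). For $a\notin N$, $|u_k(a)|\le g_{n_k}(a)\to 0$.
\item Also $|u_k|\le 2h$ off $N$. This is because for $a\notin N$ the limit $f^j(a)=\lim_n f^j_n(a)$ holds for all $j$ simultaneously, so $|f^j(a)|\le h(a)$ for every $j$.
\end{itemize}
The scalar dominated convergence theorem then gives $\int_A|u_k|\,d\nu\to 0$, a contradiction. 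Equivalently, pick $j_n$ with $\int_A|f^{j_n}_n-f^{j_n}|\,d\nu\ge\sup_{j}\int_A|f^j_n-f^j|\,d\nu-1/n$ and apply dominated convergence to that single sequence. In this argument (ii) is used only pointwise along one sequence of indices, so measurability of $g_n$ is never needed. Note that the paper gives no proof of its own for this statement; it simply cites Tausk.
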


We adapt a result from \cite{dunford_linear_1958} to our setting.
\begin{lemma}[\textbf{\cite{dunford_linear_1958}, Theorem 19 (c).}]\label{lebesgueintegral} Let $f:A\rightarrow L$ be a Bochner integrable function and $p$ belongs to $L^*$, then $p\cdot f$ is integrable and $p\cdot\int_A f(a)d\nu = \int_A p\cdot f(a)d\nu$ 
\end{lemma}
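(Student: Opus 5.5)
The plan is the standard route for Bochner integrals: prove the identity for simple functions by linearity, then pass to the limit using continuity of $p$ and the defining approximation of the Bochner integral. Throughout, $f$ is Bochner integrable with an approximating sequence of simple functions $(f_n)$ satisfying $\lim_n\int_A\|f_n(a)-f(a)\|d\nu=0$, and $p\in L^*$.

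\textbf{Simple functions.} For $f_n=\sum_{i=1}^{k}\chi_{A_i}x_i$, the scalar function $p\cdot f_n=\sum_i \chi_{A_i}\,(p\cdot x_i)$ is a real simple function, hence integrable. By linearity of $p$,
\[
p\cdot\int_A f_n\,d\nu=p\cdot\sum_i \nu(A_i)x_i=\sum_i \nu(A_i)(p\cdot x_i)=\int_A p\cdot f_n\,d\nu .
\]

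\textbf{Measurability and integrability of $p\cdot f$.} Since $f$ is strongly measurable, Theorem \ref{pettismeasurability} gives that $f$ is weakly measurable, so $p\cdot f$ is measurable. Alternatively, $p\cdot f_n\to p\cdot f$ a.e. because $|p\cdot f_n(a)-p\cdot f(a)|\le\|p\|^*\|f_n(a)-f(a)\|$. For integrability, use the pointwise bound $|p\cdot f(a)|\le \|p\|^*\|f(a)\|$ together with Theorem \ref{bochnerintegral}, which yields $\int_A\|f\|d\nu<\infty$.

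\textbf{Passage to the limit.} First,
\[
\Big|\int_A p\cdot f_n\,d\nu-\int_A p\cdot f\,d\nu\Big|\le \|p\|^*\int_A\|f_n-f\|\,d\nu\to0 .
\]
On the vector side, $\int_A f_n\,d\nu\to\int_A f\,d\nu$ in norm, either by definition of the Bochner integral or from $\|\int_A (f_n-f)\,d\nu\|\le\int_A\|f_n-f\|\,d\nu$. Norm continuity of $p$ then gives $p\cdot\int_A f_n\,d\nu\to p\cdot\int_A f\,d\nu$. Combining the two limits with the simple-function identity gives the claim.

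\textbf{Main obstacle.} The only delicate point is justifying $\int_A f_n\,d\nu\to\int_A f\,d\nu$ in norm, which needs the estimate $\|\int_A g\,d\nu\|\le\int_A\|g\|\,d\nu$ for Bochner integrable $g$. I would prove this estimate first for simple $g$ using the triangle inequality, and then extend it by the same density argument. Alternatively, since the paper defines $\int_A f\,d\nu$ as the norm limit of $\int_A f_n\,d\nu$, I could simply invoke that definition and avoid the estimate altogether.
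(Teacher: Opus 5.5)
Your proof is correct. It is the standard argument: prove the identity for simple functions by linearity, then pass to the limit using $|p\cdot g|\le\|p\|^*\|g\|$ and the norm continuity of $p$. This is essentially how the cited result of Dunford and Schwartz is proved; the paper itself gives no proof and simply invokes that reference. Your option of taking the norm convergence $\int_A f_n\,d\nu\to\int_A f\,d\nu$ directly from the paper's definition of the Bochner integral is legitimate, so the ``main obstacle'' needs no extra work.
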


A function $f:A\rightarrow L^*$ is called \textit{weakly$^*$-measurable} or $L$-measurable (when $L$ is identified with its image under the natural embedding of $L$ into $L^{**}$) if for each $p\in L$ the functional $p\cdot f: A\rightarrow\mathbb R$ is measurable. 

 The weak$^*$-measurable function $f$ is \emph{Gel'fand integrable} over a set $ A'\in\mathcal A$ if there exists some $\int_{A'} f(a)d\nu\in L^*$ satisfying $p\cdot\left(\int_{A'} f(a)d\nu\right) = \int_{_{A'}}p\cdot f(a)d\nu$ for each $p \in L\subset L^{**}$. The vector $\int_{A'} f(a)d\nu$ is called the \emph{Gel'fand integral} of $f$ over $A'$. 
 
\begin{theorem}[\cite{aliprantis_infinite_2006}\label{Gel'fandintegral}, Corollary 11.53.]
If $f$ is weakly$^*$-measurable for all $p\in L\subset L^{**}$ and $\|f(a)\|^*\le s$ for some $s > 0$ and all $a\in A$, $f$ is Gel'fand integrable.
\end{theorem}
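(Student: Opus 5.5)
The plan is to construct the Gel'fand integral directly as a bounded linear functional on $L$, and then check that it belongs to $L^*$ and has the defining property. Fix $A'\in\mathcal A$. For each $p\in L$, the scalar function $a\mapsto p\cdot f(a)$ is measurable by weak$^*$-measurability. The uniform bound gives
\[
|p\cdot f(a)|\le \|p\|\,\|f(a)\|^*\le s\|p\|
\]
for all $a\in A$. Since $\nu$ is finite, the constant $s\|p\|$ is integrable, so $p\cdot f$ is Lebesgue integrable on $A'$. We may therefore define $\varphi:L\rightarrow\mathbb R$ by
\[
\varphi(p)=\int_{A'}p\cdot f(a)\,d\nu .
\]

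Next we show that $\varphi\in L^*$. Linearity of $\varphi$ follows from linearity of the pairing $p\mapsto p\cdot f(a)$ for each fixed $a$, together with linearity of the Lebesgue integral. Boundedness follows from the estimate $|\varphi(p)|\le s\,\nu(A')\,\|p\|$. Hence $\varphi$ is a norm-continuous linear functional on $L$, i.e. $\varphi\in L^*$, and $\|\varphi\|^*\le s\,\nu(A')$.

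Setting $\int_{A'}f\,d\nu:=\varphi$, the identity $p\cdot\int_{A'}f\,d\nu=\int_{A'}p\cdot f\,d\nu$ holds by construction for every $p\in L\subset L^{**}$. This is exactly the definition of Gel'fand integrability over $A'$. Uniqueness is immediate because $L$ separates the points of $L^*$, although the definition only requires existence.

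There is no real obstacle in this argument. The one point that needs care is that the integral must land in $L^*$ itself and not merely in the algebraic dual of $L$. Norm continuity of $\varphi$ settles this, and it is exactly where the uniform bound $s$ is used. Without the bound one would need a Dunford-type closed graph argument, which is not needed here.
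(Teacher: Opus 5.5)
Your argument is correct and complete. Note that the paper does not prove this statement; it imports it from Aliprantis--Border. There it appears as a corollary of the general Gel'fand integrability criterion, which only asks that each scalar function $p\cdot f$, $p\in L$, be integrable. In that criterion, continuity of $p\mapsto\int_{A'}p\cdot f\,d\nu$ is obtained from a closed graph (Dunford-type) argument, and the bounded case then follows as a special instance.

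You bypass that machinery entirely. With the uniform bound $\|f(a)\|^*\le s$ and the finiteness of $\nu$, continuity follows at once from $|\varphi(p)|\le s\,\nu(A')\,\|p\|$. The paper fixes $\nu$ finite in Section 2, and you correctly rely on it, since without finiteness a bounded function need not be integrable.

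Your route has three advantages. It is fully elementary and self-contained. It yields the explicit estimate $\|\int_{A'}f\,d\nu\|^*\le s\,\nu(A')$. It does not even use completeness of $L$.

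The cited general route covers unbounded, merely scalarly integrable $f$, at the cost of requiring $L$ to be Banach and invoking the closed graph theorem. For the uses made of the statement in this paper (Lemma \ref{Gintegration} and Proposition \ref{GDCT}, where $F$ is norm bounded), your bounded version is exactly what is needed.
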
 

We finish this section with the following result regarding the \textit{ change of the variable formula} for strongly measurable functions.

\begin{lemma}[\cite{dunford_linear_1958}, Lemma 8 (e).]
Let $A$ and $B$ be sets and let $f:A \rightarrow B$. Let $\mathcal B$ be a $\sigma$-algebra in $B$. Then $\mathcal A = \{f^{-1}(B'): B'\in\mathcal B\}$ is a $\sigma$-algebra in $A$. For a countably additive set function $\nu$ in $\mathcal B$, $\mu\circ f^{-1} = \nu$ defines a countably additive set function $\mu$ in $\mathcal A$. If $\nu$ is positive and a function $g:B\rightarrow L$ is strongly measurable and non-negative, then $$\int_{B'}g(b)d\nu = \int_{B'}g(b)d(\mu\circ f^{-1}) =  \int_{f^{-1}(B')}g\circ f(a)d\mu(a)$$

for every $B'$ in $\mathcal B$.

\end{lemma}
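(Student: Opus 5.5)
The statement is the change of variables formula. I will prove it by the standard ``simple functions first, then pass to the limit'' scheme, using the Bochner machinery already recalled.

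\textbf{Setup.} First I verify that $\mathcal A=\{f^{-1}(B'):B'\in\mathcal B\}$ is a $\sigma$-algebra. This holds because preimages commute with complements and countable unions: $f^{-1}(B\setminus B')=A\setminus f^{-1}(B')$ and $f^{-1}(\bigcup_n B'_n)=\bigcup_n f^{-1}(B'_n)$, and $A=f^{-1}(B)$.

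Next I check that $\mu$ is well defined on $\mathcal A$ by $\mu(f^{-1}(B'))=\nu(B')$. If $f$ is not surjective, two sets $B'_1,B'_2$ may have the same preimage. Then $B'_1\triangle B'_2$ lies outside $f(A)$, and for the identity $\mu\circ f^{-1}=\nu$ to be consistent one must read the statement as saying that such sets are $\nu$-null. Equivalently, $\nu$ is concentrated on $f(A)$ in the sense that $\nu(C)=0$ whenever $f^{-1}(C)=\emptyset$. I will state this explicitly as the standing interpretation of ``$\mu\circ f^{-1}=\nu$ defines $\mu$''. Given well-definedness, countable additivity of $\mu$ follows from that of $\nu$: disjoint preimages $f^{-1}(B'_n)$ can be realized by disjoint $B'_n$, by replacing $B'_n$ with $B'_n\setminus\bigcup_{k<n}B'_k$, which does not change the preimages.

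\textbf{Simple functions.} For $g=\sum_i\chi_{B_i}x_i$, the composition is $g\circ f=\sum_i\chi_{f^{-1}(B_i)}x_i$, so $g\circ f$ is $\mathcal A$-simple. Therefore
$$\int_{f^{-1}(B')}g\circ f\,d\mu=\sum_i\mu\bigl(f^{-1}(B_i\cap B')\bigr)x_i=\sum_i\nu(B_i\cap B')x_i=\int_{B'}g\,d\nu .$$

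\textbf{General $g$.} Take $g$ strongly measurable and Bochner integrable. If $\int\|g\|\,d\nu=\infty$, both sides are read as undefined or infinite in the same way. The scalar version of the formula, applied to $\|g\|$, shows that integrability transfers in both directions. Choose simple $g_n$ with $g_n\to g$ $\nu$-a.e. and $\|g_n\|\le 2\|g\|$; the standard construction truncates the approximants of Lemma \ref{measurestrongmeasure} and Theorem \ref{pettismeasurability}. Then $g_n\circ f\to g\circ f$ $\mu$-a.e., because the exceptional set $N$ satisfies $\mu(f^{-1}(N))=\nu(N)=0$. Also $\|g_n\circ f\|\le 2\|g\|\circ f$, and this bound is $\mu$-integrable by the scalar change of variables (classical, or via the simple-function step plus monotone convergence). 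Theorem \ref{DCT} applied on both sides then lets me pass to the limit in the simple-function identity. Non-negativity of $g$, which presupposes a lattice order on $L$, plays no role beyond ensuring that the integrals make sense.

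\textbf{Main obstacle.} The only delicate point is the well-definedness of $\mu$ when $f$ is not onto. I expect to handle it as described above, by making explicit the implicit hypothesis that $\nu$ vanishes on sets whose preimage is empty. Everything else is routine measure theory.
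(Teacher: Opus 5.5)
The paper does not prove this lemma. It is quoted from Dunford--Schwartz and used only through Corollary \ref{chvariable}, so there is no argument in the paper to compare yours against.

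Your argument is correct and is the standard self-contained proof. It runs as follows: $\mathcal A$ is a $\sigma$-algebra; $\mu$ is well defined and countably additive after disjointifying the $B'_n$; the identity holds for $\mathcal B$-simple functions by direct computation; truncated simple approximants with $\|g_n\|\le 2\|g\|$ and Theorem \ref{DCT} applied on both sides pass to the limit; and integrability transfers through the scalar formula for $\|g\|$.

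Compared with the bare citation, your proof exposes a real defect in the statement as transcribed. When $f$ is not onto, the relation $\mu\circ f^{-1}=\nu$ need not define anything. Take $B=\{0,1\}$, $\mathcal B=2^B$, $A=\{0\}$, $f(0)=0$ and $\nu=\delta_1$: then $\nu(\{1\})=1$ while $f^{-1}(\{1\})=\emptyset$. Your extra hypothesis, that $\nu(C)=0$ whenever $f^{-1}(C)=\emptyset$, is exactly equivalent to the existence of such a $\mu$. In the only form the paper uses (Corollary \ref{chvariable}, where $\mu$ is given and $\nu$ is its image) it holds automatically, so nothing downstream is affected.

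Two small technical points:
\begin{itemize}
\item If $\mathcal B$ is not $\nu$-complete, $\|g\|$ is only a.e. equal to a $\mathcal B$-measurable function. Truncate with $h=\limsup_n\|s_n\|$ instead of $\|g\|$, so that each $g_n$ is genuinely $\mathcal B$-simple and $g_n\circ f$ is $\mathcal A$-simple. Also enlarge the exceptional set to a $\mathcal B$-measurable $\nu$-null set $N$ before pulling it back.
\item Non-negativity of $g$ is indeed irrelevant, but it does not by itself make the Bochner integrals meaningful. That is precisely the job of your transfer of the condition $\int\|g\|\,d\nu<\infty$.
\end{itemize}
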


Taking into account that $\mathcal{A}$ is a sub-$\sigma$-algebra of $\mathcal{B}(A)$, so that whenever $(A, \mathcal{B}(A), \mu)$ is a measure space, $(A, \mathcal{A}, \mu)$ is also a measure space, we obtain the following corollary.

\begin{corollary}\label{chvariable}
Let $\mathcal{B}(A)$ and $\mathcal{B}(B)$ be the Borel $\sigma$-algebras of $A$ and $B$ respectively. Let $f : A \rightarrow B$ be $(\mathcal{B}(A), \mathcal{B}(B))$-measurable. If $\mu$ is a measure on $\mathcal{B}(A)$ and $\nu$ is a measure on $\mathcal{B}(B)$ such that $\mu \circ f^{-1} = \nu$, then for any strongly measurable and non-negative function $g : B \rightarrow L$, we have
\[
\int_{B} g(b) \, d\nu = \int_{B} g(b) \, d(\mu\circ f^{-1}) = \int_{A} (g \circ f)(a) \, d\mu.
\]
\end{corollary}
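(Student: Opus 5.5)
The corollary follows from the preceding lemma (\cite{dunford_linear_1958}, Lemma 8 (e)) once we set up the $\sigma$-algebras correctly. Apply that lemma with $\mathcal B=\mathcal B(B)$. It yields the $\sigma$-algebra $\mathcal A_f=\{f^{-1}(B'):B'\in\mathcal B(B)\}$ on $A$.

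\textbf{Step 1: $\mathcal A_f$ sits inside the Borel sets and is the right domain for the measure.} Since $f$ is $(\mathcal B(A),\mathcal B(B))$-measurable, every $f^{-1}(B')$ is Borel, so $\mathcal A_f\subset\mathcal B(A)$. The restriction $\mu|_{\mathcal A_f}$ is therefore a countably additive positive measure on $\mathcal A_f$. By hypothesis $\mu\circ f^{-1}=\nu$, so this restriction is exactly the set function that the lemma associates with $\nu$; positivity of $\nu$ is inherited from $\mu$. The lemma, applied with $B'=B$ (so $f^{-1}(B)=A$), then gives
\[
\int_B g\,d\nu=\int_B g\,d(\mu\circ f^{-1})=\int_A (g\circ f)\,d\mu|_{\mathcal A_f}.
\]

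\textbf{Step 2: pass from $\mathcal A_f$ to $\mathcal B(A)$.} We must show that the integral of $g\circ f$ with respect to $\mu|_{\mathcal A_f}$ equals its integral with respect to $\mu$ on $\mathcal B(A)$.
\begin{itemize}
\item For simple $\mathcal A_f$-measurable functions $\sum\chi_{A_i}x_i$ the two integrals coincide, because $\mu$ and $\mu|_{\mathcal A_f}$ agree on each $A_i\in\mathcal A_f$.
\item $g\circ f$ is strongly measurable with respect to $\mathcal A_f$: if $g_n$ are simple functions on $B$ with $g_n\to g$ $\nu$-a.e., then $g_n\circ f$ are simple $\mathcal A_f$-measurable functions converging to $g\circ f$ off $f^{-1}(N)$, where $N$ is $\nu$-null. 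That exceptional set has $\mu(f^{-1}(N))=\nu(N)=0$.
\item These $g_n\circ f$ are also $\mathcal B(A)$-simple. Bochner integrability in either setting is a statement about $\int\|g_n\circ f-g\circ f\|\,d\mu\to 0$, and this scalar integral has the same value under both $\sigma$-algebras by the scalar change of variables (or by the same simple-function argument).
\end{itemize}
Hence the Bochner limits, and so the two integrals, agree.

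\textbf{Main obstacle.} The only delicate point is the measurability and integrability bookkeeping when enlarging from $\mathcal A_f$ to $\mathcal B(A)$: the approximating sequence must be simple and convergent in $L^1$ under both measures. I would handle it through Theorem \ref{bochnerintegral}, using $\int_A\|g\circ f\|\,d\mu=\int_B\|g\|\,d\nu$, which is the scalar change-of-variables formula. This shows that finiteness of the norm integral, and hence Bochner integrability, transfers in both directions.
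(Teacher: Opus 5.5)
Your proposal is correct and follows the paper's route. The paper obtains the corollary exactly as in your Step 1: it applies Lemma 8(e) of Dunford--Schwartz on the sub-$\sigma$-algebra $f^{-1}(\mathcal B(B))\subset\mathcal B(A)$, with $\mu$ restricted to it. Your Step 2 checks that the Bochner integral of $g\circ f$ is unchanged when passing from $f^{-1}(\mathcal B(B))$ to $\mathcal B(A)$, with existence of either side governed by $\int_A\|g\circ f\|\,d\mu=\int_B\|g\|\,d\nu$; this only makes explicit a point the paper takes for granted.
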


In Appendix \ref{measurability}, we shall state the conditions under which the above Lemma holds for weakly$^*$ measurable functions. Notice that weak$^*$-measurability is weaker than strong measurability when the Banach space is a dual space. Hence, Bochner integral implies that of Gel'fand integral. Furthermore, if $L$ is finite dimensional, the two of them coincide while they are equivalent to the Lebesgue integral when $L = \mathbb R^n$ for some $n$. 

\ 

A note on notation. Throughout the paper we use the symbol $\sigma$ in its two usual meanings: to denote a $\sigma$-algebra, and the weak topology $\sigma(L,L')$. This is a standard convention in measure theory and functional analysis, and its intended meaning should be clear from the context.

\section{ The basic of the model.}\label{sec:model}

\subsection{Discussion on the commodity-price pairing.}\label{}

\ 

Building on the previous section, we consider a Banach space $L$, its dual $L^*$, and the dual pairing $\langle L, L' \rangle$, where $L'\subseteq L^*$. We denote the weak topology on $L$ generated by $L'$ as $w$, defined by $\sigma(L, L')$, and the Mackey topology generated by $L'$ as $\tau$, defined by $\tau(L, L')$, which are, respectively, the weakest and strongest Hausdorff, locally convex topologies compatible with the duality $\langle L, L' \rangle$ (\cite{schaefer_topological_1999}, 1.2, p. 124 and Corollary 1., p. 131). In a similar way, we denote respectively by $w' = \sigma(L', L)$ and $\tau' = \tau(L', L)$ the weak and the Mackey topologies on $L'$. In addition to these topologies, we consider the norm topologies on $L$ and $L^*$ induced by the norms $\|\cdot\|$ and $\|\cdot\|^*$, respectively. If $L'\subset L^*$, then $\sigma(L, L')\subset \sigma(L, L^*)$ (\cite{schaefer_topological_1999}, Corollary, p. 125.) and, of course, $\sigma(L, L^*)\subset\tau(L, L^*)$. If $L' = L^*$, then $\tau(L, L^*)$ is the $\|\cdot\|$-topology (\cite{aliprantis_infinite_2006}, Corollary 6.23, p. 236).

A sequence (more precisely, a net) $(x_n)$ in $L$ $w$-converges to $x$ or converges weakly to $x$ if and only if $(p \cdot x_n)$ converges to $p \cdot x$ for every $p \in L'$, and $(x_n)$ in $L$ $\tau$-converges to $x$ if and only if for every $w'$-compact, convex and circled subset of $L'$, $G$, the sequence $\left ( \underset{p\in G}{\text{sup}}\left\{\left|p \cdot (x_n - x)\right|: p\in G\right\}\right )$ converges to $0$. That is to say, $p\cdot x_n\rightarrow p\cdot x$ uniformly for $p$ in $G$. In the same way, a sequence $(p_n)$ in $L'$ $w'$-converges to $p$ if and only if $(p_n \cdot x)$ converges to $p \cdot x$ for every $x \in L$, and $(p_n)$ in $L'$ $\tau'$-converges to $p$ if and only if for every $w$-compact, convex and circled subset of $L$, $C$, the sequence $\left (\underset{x\in C}{\text{sup}}\left\{\left|(p_n - p)\cdot x\right|: x\in C\right\}\right )$ converges to $0$. That is to say, $p_n \cdot x$ converges to $p \cdot x$ uniformly for $x$ in any $w$-compact subset of $L$. 

The subset $L_+ = \{x \in L \mid x \ge 0\}$ is the positive cone of $L$, which is $w$-closed. Let us consider a set $Y \subseteq L$. When $Y$ is endowed with the (relative) topology $w$ (respectively, with $\tau$ and $\|\cdot\|$), we denote it by $Y^w$ (respectively, $Y^\tau$ and $Y^{\|\cdot\|}$) unless it is understood. The $w$-closure (respectively, $\tau$ and $\|\cdot\|$-closure) of $Y$ will be denoted by $\overline{Y}^w$ (respectively, $\overline{Y}^\tau$ and $\overline{Y}^{\|\cdot\|}$). Notice that by the above reasons, $Y^\tau = Y^{\|\cdot\|}$ when $L' = L^*$. Whenever a topological space is Polish, every closed subset is also Polish for the induced topology. 

Our framework is general enough to include many pairs of commodity spaces and price simplex used in economic theory. First, if $L'$ is the predual of the commodity space $L$, then $L'$ may be seen as a proper subset of the topological dual $L^*$ of $L$ (\cite{aliprantis_infinite_2006}, Corollary 6.11 p. 231). The best-known cases in economic theory are those where $L = \mathscr L_{\infty}$, $\ell_{\infty}$, or the space of signed measures on a compact metric space $K$, $\mathcal{M}(K)$. These are nonseparable spaces. In these cases, $w = \sigma(L, L')$ is the weak$^*$-topology while $w' = \sigma(L', L)$ is the weak topology. The Mackey topology $\tau$ is coarser than the norm topology $\|\cdot\|$ on $L$ while $\tau' = \|\cdot\|'$ on $L'$ (\cite{aliprantis_infinite_2006}, Corollary 6.23, p. 236). Although prices should, strictly speaking, belong to $L^*$, in most cases they actually belong to $L'$. This is the case, for example, in commodity differentiation (see \cite{ostroy_nonatomic_1994}) where $L = \mathcal{M}(K)$ is endowed with the weak$^*$-topology $w = \sigma(\mathcal{M}(K), C(K))$ while the price set is compact for the norm of $C(K)$ or, equivalently, for the Mackey topology $\tau(C(K), \mathcal{M}(K))$. Another example is \cite{bewley_very_1991} where $\ell_{\infty}$ is endowed with the topology $\sigma(\ell_{\infty}, \ell_1)$ and the price simplex is $\sigma(\ell_1, \ell_{\infty})$-compact. Whenever in a Banach space weak and norm compactness coincide we say that this space has the Schur property. This is the case for $L' = \ell_1$ so the price simplex is $\|\cdot\|_1$-compact (and Polish). 
 Norm-bounded and closed subsets of $L$ are compact for the weak$^*$-topology $w = \sigma(L, L')$ by Alaoglu's Theorem. If, in addition, $L'$ is separable, such a set is metrizable (\cite{dunford_linear_1958}, Theorem 1, p. 426), whence $w$-Polish as it is $w$-separable and complete. Returning to our previous examples, for $L = \mathscr L_{\infty}$, $\ell_{\infty}$ or $\mathcal{M}(K)$, a norm-bounded subset $Q$ of $L$ is $\sigma(\mathscr L_\infty, \mathscr L_1)$-compact, $\sigma(\ell_\infty, \ell_1)$-compact, or $\sigma(\mathcal{M}(K), C(K))$-compact, respectively. Since $\mathscr L_1$, $\ell_1$, and $C(K)$ are separable spaces, $Q$ qualifies as a $w$-compact Polish space.

When ``enlarging" $L'$ as to be $L^*$ itself, $w = \sigma(L, L') = \sigma(L, L^*)$ is the weak topology and $w' = \sigma (L', L) = \sigma(L^*, L)$ is the weak$^*$-topology. Hence, prices are located in a weak$^*$-compact set. This is the case in \cite{bewley_existence_1972}, Section 3, where $w' = \sigma(ba, \mathscr L_{\infty})$. Notice that now both the Mackey and the norm topology coincide on $L$ but $\tau(L', L)$ is coarser than the topology induced by $\|\cdot\|'$. 

Secondly, if $L$ is a separable and reflexive space, then $w$ would be the weak topology $\sigma(L, L')$ where $L' = L^*$. Since $L$ is reflexive, it equals $L'' = L^{**}$, whereby the weak and weak$^*$-star topologies coincide. Consequently, by Alaoglu's Theorem, norm-bounded and closed subsets of $L$ are $w$-compact. Due to the separability of $L$, these subsets are metrizable and thus Polish (\cite{dunford_linear_1958}, Theorem 3, p. 434). Therefore, if $Q$ is a $\|\cdot\|$-bounded and closed subset of $L$, it forms a $w$-compact Polish space. If prices are situated in a $\|\cdot\|'$-bounded subset of $L'$, then similarly, they are $w'$-compact and Polish. Notice that Mackey and norm topologies coincide both in $L$ and in $L'$ by Corollary 6.23 in \cite{aliprantis_infinite_2006}. Typical examples belong to commodity spaces represented by $L =\mathscr L_p$ or $\ell_p$ and price spaces given by its dual $L' = L^* = \mathscr L_q$ or $\ell_q$ with $1 < p < \infty$, $1 < q < \infty$, and $\frac{1}{p} + \frac{1}{q} = 1$. In our notation, $w = \sigma(\mathscr L_p, \mathscr L_q)$ and $w' = \sigma(\mathscr L_q,\mathscr L_p)$ or $w = \sigma(\ell_p, \ell_q)$ and $w' = \sigma(\ell_q,\ell_p)$ respectively.
 
A third case arises when $L$ is a separable but not reflexive space, as discussed in \cite{khan_equilibria_1991}. In this scenario, the topologies $w$ and $\tau$ correspond to the weak and Mackey topologies on $L$ respectively, where $L' = L^*$. If prices are contained within a $\|\cdot\|'$-bounded subset of $L'$, then this subset is $w'$-compact and, as argued previously, metrizable. If the set $Q\subset L$ is $w$-compact, it is metrizable, $\|\cdot\|$-bounded (\cite{diestel_sequences_1984}, p. 17), and thus $w$-compact and Polish. Furthermore, the Mackey topology $\tau$ coincide with the norm topology $\|\cdot\|$ on $L$ but $\tau'$ is coarser than the $\|\cdot\|'$-topology on $L'$.
Beyond the cases of $C(K)$, $\ell_1$, and $\mathscr{L}_1$ discussed previously, examples of commodity spaces of this type include the space of null sequences $c_0$, as well as spaces generated by weakly compact subsets of $\mathscr{L}_\infty$ and $\ell_\infty$ (see \cite{cea_continuity_2024}).

In the following, within our general framework, we shall work under the duality $\langle L, L' \rangle$. As the preceding cases suggest, we assume that $L'$ is either the topological dual of $L$ or a predual of $L$. That is, we consider two alternatives: $L' = L^*$ or $L = (L')^*$, the latter implying $L' \subset L^*$. Furthermore, in view of the variety of admissible topologies, it is more natural to state results in a unified framework that encompasses all cases, rather than providing separate formulations for each specific topology. Accordingly, we shall write $\kappa \in \{\|\cdot\|, \tau, w\}$ for any of these topologies on $L$, and $\kappa' \in \{\|\cdot\|', \tau', w'\}$ for the corresponding topology on $L'$.

\subsection{Space of preferences.}\label{preferences}

\ 

Let $L_+$ be the positive cone of $L$ and let $\succ\subset L_+\times L_+$ be a transitive and irreflexive
binary relation on $L_+$ such that $\succ$ is $w$-open in $L_+\times L_+$. We use $x \succ y$ to denote $(x,y) \in \succ$ and $x \not\succ y$ to denote $(x,y) \notin \succ$. The set of all such relations is denoted by $\mathcal P_{L_+} $. For each $\succ \in \mathcal{P}_{L_+}$, we associate the set $P_\succ := \{(x, y) \in L_+\times L_+ : x \not\succ y \} = \left(L_+\times L_+\right)\setminus\succ$. Due to the continuity of $\succ$, each $P_\succ$ is a closed subset of $L_+\times L_+$. Let $\mathcal{C}_w(L_+\times L_+)$ be the set of all weakly closed subsets of $L_+\times L_+$. We denote by $\mathcal{T}^{\mathcal{C}}$ the \textit{topology of closed convergence} on $\mathcal{C}_w(L_+\times L_+)$. 

We embed $\mathcal{P}_{L_+}$ into $\mathcal{C}_w(L_+\times L_+)$ through the injection\footnote{Indeed, if $\succ \neq \succ'$ in $\mathcal{P}_{L_+}$ and $P_\succ = P_{\succ'}$, then $(L_+\times L_+) \setminus \succ = (L_+\times L_+) \setminus \succ'$, which implies $\succ = \succ'$, contradicting $\succ \neq \succ'$. Therefore, $P_\succ \neq P_{\succ'}$ whenever $\succ \neq \succ'$.} $g(\succ) = P_\succ$. Hence, the induced topology on $\mathcal{P}_{L_+}$ is $ \{ g^{-1}(U) : U \in \mathcal{T}^{\mathcal{C}} \}$. Therefore, a net $(\succ^\alpha)$ in $\mathcal P_{L_+}$ converges to $\succ$ in $\mathcal P_{L_+}$ if and only if $g(\succ^\alpha)$ converges to $g(\succ)$ in $\mathcal C_w(L_+\times L_+)$. 

Let $X$ be a common consumption set which is assumed to be a convex subset of $L_+$ containing the null vector and such that $X$ is $w$-compact Polish and $\|\cdot\|$-bounded. We point out that $X$ is $w$-closed. This is a general assumption concerning consumption set $X$. When $L' = L^*$, $w$-compactness of $X$ implies $\|\cdot\|$-closedness and boundedness (\cite{diestel_sequences_1984}, p. 17). Since $X$ is convex and $w$-separable, it is $\|\cdot\|$-separable (see Lemma \ref{separability}). On the other hand, if $L = (L')^*$, $\|\cdot\|$-boundedness implies $w$-compactness by Alaoglu's Theorem. This means that $X$ is $w$-closed whence, $\|\cdot\|$-closed.

Weakly compact consumption sets are found in various studies on large economies with infinitely many commodities such as \cite{khan_equilibria_1991}, \cite{noguchi_economies_1997a}, and \cite{noguchi_economies_1997b} among others. Furthermore, several authors have made use of common consumption sets. For example, \cite{bewley_very_1991} considers the non-separable space $\ell_{\infty}$ and assumes the existence of a common consumption set that is a weak$^*$-compact subset of $\ell^+_{\infty}$. More recent works like \cite{suzuki_competitive_2013}, \cite{suzuki_core_2013}, and \cite{khan_relaxed_2016} also assume a common consumption set that is weak$^*$- or weakly compact and metrizable.  In line with these approaches, we adopt $X$ as the consumption set across all economies. 

Since the consumption set is \(X \subset L_+\), we restrict attention to preferences defined on \(X\). Every \(\succ \in \mathcal{P}_{L_+}\) induces a strict preference relation on \(X\) via its restriction \(\succ \cap (X \times X)\). We define
\[
\mathcal{P}_X := \left\{ \succ_X \subset X \times X : \succ_X = \succ \cap (X \times X) \ \text{for some} \ \succ \in \mathcal{P}_{L_+} \right\}.
\]
We call \(\mathcal{P}_X\) the set of \textit{preference relations}. Since \(\succ \subset L_+ \times L_+\) is open, \(\succ_X\) is open in \(X \times X\). Let \(\mathcal{C}_w(X \times X)\) denote the collection of all closed subsets of \(X \times X\); that is, $\mathcal{C}_w(X \times X) = \left\{ \mathcal{O} \cap (X \times X) : \mathcal{O} \in \mathcal{C}_w(L_+ \times L_+) \right\}$. We equip \(\mathcal{C}_w(X \times X)\) with the relative topology induced by \(\mathcal{T}^{\mathcal{C}}\) denoted by \(\mathcal{T}_X^{\mathcal{C}}\). The set \(\mathcal{P}_X\) is naturally embedded into \(\mathcal{C}_w(X \times X)\) via $h(\succ_X) = \{(x,y) \in X \times X : x \not\succ_X y\} = (X \times X) \setminus \succ_X$. Observe that $h(\succ_X) = (X \times X) \cap P_\succ = (X \times X) \cap g(\succ)$
for some \(\succ \in \mathcal{P}_{L_+}\) such that \(\succ_X = \succ \cap (X \times X)\). Hence, the topology induced on \(\mathcal{P}_X\) is $\left\{ h^{-1}\bigl( U \cap \mathcal{C}_w(X \times X) \bigr) : U \in \mathcal{T}^{\mathcal{C}} \right\}$. 

Since \(X\) is \(w\)-compact and Polish, it follows from Corollary 3.95 in \cite{aliprantis_infinite_2006} that \(\bigl( \mathcal{C}_w(X \times X), \mathcal{T}_X^{\mathcal{C}} \bigr)\) is a compact metric space.

Beyond the set-theoretic and topological relationship between \(\mathcal{P}_{L_+}\) and \(\mathcal{P}_X\), there is also a measurable link between the two spaces.

\begin{lemma}\label{inducedpreference}
Consider the Borel \(\sigma\)-algebras on \(\mathcal{P}_{L_+}\) and \(\mathcal{P}_X\) generated by the topologies \(\mathcal{T}^{\mathcal{C}}\) and \(\mathcal{T}_X^{\mathcal{C}}\), respectively. The mapping \(R : \mathcal{P}_{L_+} \to \mathcal{P}_X\), defined by
\[
R(\succ) = \succ \cap (X \times X),
\]
is measurable.
\end{lemma}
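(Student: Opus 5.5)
Since the topology on \(\mathcal P_{L_+}\) is induced from \(\mathcal T^{\mathcal C}\) through \(g\), and the topology on \(\mathcal P_X\) is induced from \(\mathcal T^{\mathcal C}_X\) through \(h\), the plan is to reduce measurability of \(R\) to the measurability of a map between hyperspaces. Because \(R(\succ) = \succ\cap(X\times X)\), we have \(h(R(\succ)) = (X\times X)\setminus(\succ\cap (X\times X)) = (X\times X)\cap g(\succ)\). Hence \(h\circ R = \Phi\circ g\), where
\[
\Phi:\mathcal C_w(L_+\times L_+)\to \mathcal C_w(X\times X),\qquad \Phi(F)=F\cap(X\times X).
\]
Borel sets of \(\mathcal P_X\) are exactly the sets \(h^{-1}(B)\) with \(B\) Borel in \(\mathcal C_w(X\times X)\). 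Borel sets of \(\mathcal P_{L_+}\) contain all \(g^{-1}(B')\) with \(B'\) Borel in \(\mathcal C_w(L_+\times L_+)\), since these are the traces of the induced topology and preimages commute with \(\sigma\)-algebra operations. Therefore
\[
R^{-1}(h^{-1}(B))=g^{-1}(\Phi^{-1}(B)),
\]
and it suffices to show that \(\Phi\) is Borel measurable with respect to \(\mathcal T^{\mathcal C}\) and \(\mathcal T^{\mathcal C}_X\).

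For the measurability of \(\Phi\), I would use the standard fact that the Borel \(\sigma\)-algebra of the closed-convergence topology on the closed subsets of the compact metric space \(X\times X\) coincides with the Effros \(\sigma\)-algebra. This \(\sigma\)-algebra is generated by the sets \(\{K : K\cap U\neq\emptyset\}\) for \(U\) open in \(X\times X\), or equivalently by the sets \(\{K: K\cap C=\emptyset\}\) for \(C\) compact. Since \(X\times X\) is compact metrizable, every open \(U\) is a countable union of compact sets \(C_n\). Thus \(\{K: K\cap U\ne\emptyset\}=\bigcup_n\{K: K\cap C_n\neq\emptyset\}\), so it is enough to check the generators \(\{K : K\cap C=\emptyset\}\) with \(C\subset X\times X\) compact.

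The preimage of such a generator is \(\Phi^{-1}(\{K:K\cap C=\emptyset\}) = \{F : F\cap C=\emptyset\}\), because \(C\subset X\times X\). In the closed-convergence (Fell) topology on \(\mathcal C_w(L_+\times L_+)\), this set is open by definition of the Fell topology, since \(C\) is \(w\)-compact in \(L_+\times L_+\). Hence \(\Phi^{-1}\) of each generator is open, and therefore Borel, which gives measurability of \(\Phi\) and hence of \(R\).

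The main obstacle is justifying that \(\mathcal T^{\mathcal C}\), on the possibly non-metrizable space \(L_+\times L_+\) with the \(w\)-topology, contains the sets \(\{F:F\cap C=\emptyset\}\) for \(w\)-compact \(C\). This is immediate if closed convergence is read as the Fell topology, which is the reading the paper adopts via Aliprantis–Border. If the paper instead intends closed convergence via Painlevé–Kuratowski limits of nets, I would check directly that the complement \(\{F : F\cap C\neq\emptyset\}\) is closed under limits. Take a net \(F_\alpha\to F\) with \(x_\alpha\in F_\alpha\cap C\). Compactness of \(C\) yields a subnet converging to some \(x\in C\), and \(x\) belongs to the upper limit of the \(F_\alpha\), which is contained in \(F\). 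Hence \(F\cap C\neq\emptyset\). The remaining identification of the Borel and Effros \(\sigma\)-algebras on \(\mathcal C_w(X\times X)\) uses only that \(X\times X\) is compact metrizable, which follows from \(X\) being \(w\)-compact Polish.
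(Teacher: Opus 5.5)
Your proposal is correct and follows essentially the same route as the paper. Both factor $R = h^{-1}\circ r\circ g$ with $r(F)=F\cap(X\times X)$, and both prove Borel measurability of $r$ by noting that preimages of ``miss a compact set'' generators are open sets of the closed-convergence topology on $\mathcal C_w(L_+\times L_+)$; ``hit an open set'' generators are then reduced to these by writing an open subset of the compact metric space $X\times X$ as a countable union of compact sets. The only difference is cosmetic: you invoke the identification with the Effros $\sigma$-algebra, whereas the paper simply asserts that the closed-convergence subbase generates the Borel $\sigma$-algebra of $\mathcal C_w(X\times X)$.
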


Although the consumption space is typically taken to be \(L_+\), equilibrium allocations lie in a weakly compact subset of \(L_+\) under standard assumptions. Thus, when studying continuity properties of equilibrium allocations, assuming that \(X\) is \(w\)-compact entails no real loss of generality. In Section~\ref{applications}, we examine this point in more detail and show that continuity results can still be obtained when the consumption set is \(L_+\), relying on Lemma~\ref{inducedpreference}. For simplicity, and up to Section~\ref{applications}, we write \(\mathcal{P}\) instead of \(\mathcal{P}_X\), and denote its elements by \(\succ\).

\ 

\noindent\textbf{Assumption (PR).} For every $\succ \ \in \mathcal{P}$ and for every $x \in X$, either $x \in \overline{\{ x' \in X : x' \succ x \}}^w$ or $x$ is a satiation point for $\succ$.

 \ 
 
 When only the first part is considered, we say that preferences are $w$-locally nonsatiated on $X$. We also take into account the possibility of maximal elements in $X$ for $\succ$, since $X$ is weak-compact (see, for instance, \cite{podczeck_markets_1997}, \cite{lee_competitive_2013}, and \cite{khan_relaxed_2016}).

 \ 
 
 \textbf{Assumption (PC)} The space $\mathcal P$ is compact.
 
 \ 
 
Notice that since $\mathcal{C}_w(X \times X)$ is a compact metric space, the subset $\mathcal{P}$ is metrizable. If we exclude satiated bundles, it can be shown\footnote{Let us consider the set of locally non-satiated preferences $\mathcal{P}_{lns}\subset\mathcal P$. Through an adaptation of Proposition 4 (b) in \cite{grodal_note_1974}, p. 285, the set $\mathcal{P}_{lns}$ is a $\mathcal{G}_\delta$ set. By the classical Alexandroff lemma (see \cite{aliprantis_infinite_2006}, Lemma 3.34 on p. 88), we conclude that $\mathcal{P}_{lns}$ is completely metrizable. In addition, by Corollary 3.5 on p. 73 in \cite{aliprantis_infinite_2006}, we know that $\mathcal{P}_{lns}$ as a subset of the separable metric space $\mathcal{C}_w(X \times X)$, is separable.} that the set $\mathcal{P}$ is a compact subset of $\mathcal{C}_w(X \times X)$. Alternatively to Assumption (PC), one could take a compact subset of preferences satisfying Assumption (PR) and then define $\mathcal{P}$ as the closure of such a subset. Thus, $\mathcal{P}$ would be compact and metrizable. 

As a consequence, we have the following lemma:

 \begin{lemma}\label{contpref} For every $(\succ, x, y)\in \left (\mathcal P\times X\times X\right)$ such that $x\succ y$, there exists an open neighborhood $U_\succ$ of $\succ$ and $w$-open neighborhoods $V_x$ and $V_{y}$ of $x$ and $y$ respectively such that for all $(\succ', x', y')\in (U_\succ\cap\mathcal P)\times \left(V_x\cap X\right)\times \left(V_y\cap X\right)$, $x'\succ' y'$.
 \end{lemma}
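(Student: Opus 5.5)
We argue by contradiction using sequences, exploiting metrizability. The space $\mathcal P$ is a subspace of the compact metric space $\bigl(\mathcal C_w(X\times X),\mathcal T^{\mathcal C}_X\bigr)$, and $X$ is $w$-compact Polish, hence $w$-metrizable. Therefore $\mathcal P\times X\times X$ is metrizable, and it suffices to show the following: whenever $\succ^n\to\succ$ in $\mathcal P$, $x_n\to x$ and $y_n\to y$ weakly in $X$, and $x\succ y$, we have $x_n\succ^n y_n$ for all large $n$. Indeed, if the statement failed, then for a countable decreasing neighborhood base $U_n\times V^n_x\times V^n_y$ of $(\succ,x,y)$ we could choose $(\succ^n,x_n,y_n)$ in it with $x_n\not\succ^n y_n$. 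Such a sequence converges to $(\succ,x,y)$, which would contradict the sequential claim.

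So suppose $x_n\not\succ^n y_n$ for infinitely many $n$; passing to a subsequence, for all $n$. By definition of the embedding $h$, this means $(x_n,y_n)\in h(\succ^n)=(X\times X)\setminus\succ^n$ for all $n$. Convergence $\succ^n\to\succ$ means that $h(\succ^n)\to h(\succ)$ in the topology of closed convergence on $\mathcal C_w(X\times X)$. The key fact about closed convergence (Kuratowski/Painlevé convergence) that we use is that the limit set contains the topological $\limsup$. That is, if $F_n\to F$ and $z_n\in F_n$ with $z_n\to z$, then $z\in F$. On a compact metrizable space, closed convergence coincides with Kuratowski convergence and with Hausdorff metric convergence (this is what underlies Corollary 3.95 of \cite{aliprantis_infinite_2006}). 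So from $(x_n,y_n)\in h(\succ^n)$ and $(x_n,y_n)\to(x,y)$ in $X\times X$ with the product $w$-topology, we get $(x,y)\in h(\succ)$, i.e.\ $x\not\succ y$. This contradicts $x\succ y$.

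The main obstacle is justifying the step ``$z_n\in F_n$, $z_n\to z$, $F_n\to F$ implies $z\in F$'' directly from the definition of $\mathcal T^{\mathcal C}$. I would argue it as follows. Suppose $z\notin F$. Since $X\times X$ is compact metrizable and hence regular, pick an open $W\ni z$ whose closure $K=\overline W$ is compact and disjoint from $F$. The set $\{G\in\mathcal C_w(X\times X): G\cap K=\emptyset\}$ is a subbasic open set of the topology of closed convergence and contains $F$. Hence eventually $F_n\cap K=\emptyset$. But eventually $z_n\in W\subset K$ and $z_n\in F_n$, a contradiction. If instead one prefers the Hausdorff metric description, the same conclusion follows from $d(z_n,F)\le d_H(F_n,F)\to 0$ together with the closedness of $F$. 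Either way, only compactness and metrizability of $X$ (already assumed) are needed. The lemma then also yields the open neighborhoods in the relative topology, as stated.
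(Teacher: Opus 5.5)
Your proof is correct and follows essentially the same route as the paper. The paper shows that $\{(\succ,x,y)\in\mathcal P\times X\times X : (x,y)\in h(\succ)\}$ is closed, sequentially, using $\textnormal{Ls}\bigl(h(\succ^n)\bigr)\subset h(\succ)$ under closed convergence, and then takes the complement; you argue by contradiction and prove the inclusion $\textnormal{Ls}(F_n)\subset F$ directly from the subbasic sets $\{G : G\cap K=\emptyset\}$, whereas the paper cites Hildenbrand's characterization of closed convergence by closed limits in compact metric spaces.
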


The above lemma shows that the product topology on $\mathcal{P} \times X \times X$ allows us to formulate a uniform notion of continuity of preferences across economies. Lemma~\ref{contpref} guarantees that the preference relation behaves continuously in this product space, which will be fundamental for ensuring the stability of demand and equilibrium sets under perturbations.

\section{Economies.}\label{economies}

\ 

Let $\kappa \in \{\|\cdot\|, \tau, w\}$ be fixed, and let $\Omega \subset X$ be such that $(\Omega,\kappa)$ is compact and metrizable (hence a Polish space). This will be the initial-endowment space; for brevity write $\Omega^\kappa := (\Omega, \kappa)$. The characteristic set is $T = \mathcal P\times \Omega$. With a slight abuse of notation, $T^\kappa := \mathcal P \times \Omega^\kappa$, that is, the product space $\mathcal P \times \Omega$ endowed with the product topology $\mathcal T^C_X \times \kappa$. A Borel probability measure on $\Omega^\kappa$ ($T^\kappa$) is a probability measure on $\mathcal B(\Omega^\kappa)$ ($\mathcal B(T^\kappa)$) where $\mathcal B(\Omega^\kappa)$ ($\mathcal B(T^\kappa)$) is the Borel $\sigma$-algebra on $\Omega$ ($T$), that is, the $\sigma$ -algebra generated by the $\kappa$-open subsets of $\Omega$ ($T$). 

\begin{definition} A $\kappa$-\emph{economy} is a Borel probability distribution $\mu$ on the space of characteristics $(T, \mathcal B(T^\kappa))$. Hence, the set of $\kappa$-economies is $\mathcal M(T^\kappa)$, that is, the set of all Borel probability distributions on $T$ equipped with its Borel $\sigma$-algebra $\mathcal B(T^\kappa)$.
\end{definition}

Hereafter, we shall say $\kappa$\textit{-economy} or simply \textit{economy} when the context is clear.

Let $\iota: \Omega^\kappa \rightarrow \Omega^\kappa$ be the identity map. The total initial endowment for the economy $\mu \in \mathcal{M}(T^\kappa)$ is $\int_\Omega \iota  d\mu_\Omega$ where $\mu_\Omega$ is the distribution of endowments given by the marginal distribution of $\mu$ on the measurable space $(\Omega, \mathcal{B}(\Omega^\kappa))$. Since total initial endowment are defined by integrals, it is worth to distinguish between Gel'fand and Bochner economies: 

\begin{definition}
An economy $\mu$ is a Bochner economy (resp. a Gel'fand economy) if the total initial endowment $\int_\Omega \iota  d\mu_\Omega$ is a Bochner integral (resp. a Gel'fand integral). 
\end{definition}

Since Bochner integration applies to strongly measurable functions taking values in a Banach space, whereas Gel'fand integration applies to weak$^*$-measurable functions taking values in a dual space, we consider Bochner integrals in the context where $L' = L^*$ and Gel'fand integrals where $L = (L')^*$.

Fix $\kappa' \in \{\|\cdot\|', \tau', w'\}$ and let $S \subset L_+' \setminus \{0\}$ be such that $S^{\kappa'} := (S,\kappa')$ is compact and metrizable. We refer to $S^{\kappa'}$ as the price simplex.

The compactness of the sets of normalized prices and endowments is a standard assumption in equilibrium analysis. For endowments, it reflects the idea that agents' characteristics are not excessively heterogeneous. In existence problems involving infinitely many commodities, these sets remain compact, typically with respect to the weak topologies $w'$ and $w$ on prices and endowments, respectively. However, analysis of the continuity of equilibrium correspondences often requires additional topological properties (such as metrizability, completeness, or separability) which may not hold under weak topologies. This motivates the possible adoption of stronger topologies. In Section~\ref{applications}, we present examples of economic environments that satisfy these enhanced conditions.

We now consider the demand correspondence $D:T\times S\twoheadrightarrow X$ such that $$D(\succ, e, p) : = \{x\in X : p\cdot x\le p\cdot e \ \text{and} \ p\cdot x' > p\cdot e \ \text{if} \ x'\succ x\}$$In addition, for $p\in S$ let us consider the set $$ E_{p} =\{((\succ, e),\ x)\in T\times X :  x\in D(\succ, e, p)\}.$$ 

\begin{definition}\label{defeconomies}
A probability measure $\zeta\in\mathcal M(T^\kappa\times X)$ and a price vector $p\in L_+'$ is an equilibrium for the economy $\mu$ if:

\begin{enumerate}
    \item $\zeta_{T} = \mu$,
    \item $\int_X\iota d\zeta_{X} = \int_\Omega \iota d\mu_\Omega$  
    \item $\zeta(E_p ) = 1$.
    \end{enumerate}
    
\end{definition}
    
 A measure $\zeta\in\mathcal M(T^\kappa\times X)$ is an equilibrium allocation if there exist an economy $\mu\in \mathcal M(T^\kappa)$ and a price vector $p\in L_+'$ such that $(\zeta, p)$ is an equilibrium for $\mu$. Analogously, a price vector $p\in L_+'$ is an equilibrium price if there exist an economy $\mu\in \mathcal M(T^\kappa)$ and a probability measure $\zeta\in\mathcal M(T^\kappa\times X)$ such that $(\zeta, p)$ is an equilibrium for $\mu$.
    
 We point out that $\zeta_{X}$ denotes the distribution of the probability measure given by the marginal distribution of $\zeta$ on $X$ and $\zeta_{T}$ represents the distribution given by the marginal distribution of $\zeta$ on $T$. It is noted that $\int_X \iota d\zeta_{X}$ is a Bochner integral (respectively, a Gel'fand integral if $L$ is a dual space) by Corollary \ref{cBintegration} (respectively, Lemma \ref{Gintegration}). When the distribution $\zeta$ satisfies only Condition 1, it is called an \emph{allocation} for $\mu$. If it also satisfies Condition 2, we refer to it as a \emph{feasible allocation}. The next assumptions will be essential for achieving positive results regarding the objectives of this research.
   
 \ 
   
  \textbf{Assumption (IPVE)}(\emph{Individually Positively Valued Environments}). For all $(\mu, p)\in\mathcal M\left(T^\kappa\right)\times S$, $\mu_\Omega\left(\left\{e\in\Omega : p\cdot e > 0\right\}\right) = 1.$
 
  \ 

\textbf{Assumption (EP)}\label{EP}(Endowments-prices). Let $(e_n, p_n)$ be a sequence in $\Omega^{\kappa}\times S^{\kappa'}$ such that (i) $(p_n, e_n)\rightarrow (p, e)$ and (ii) $p_n\cdot e_n$ converges in $\mathbb R$ and (iii) each $p_n$ is an equilibrium price, then $\textnormal{lim}_n p_n\cdot e_n\ge p\cdot e$ if $\kappa = w$ and $\kappa' = w'$.

\ 
    
\begin{remark}\label{jointcont}\leavevmode
\vspace{-\baselineskip}

\ 

  \begin{enumerate}

 \item A sufficient condition for Assumption (IPVE) to hold, is that the set of initial endowments is bounded below by a quasi-interior point. This criterion aligns, for instance, with the assumptions made by Bewley \cite{bewley_very_1991} and Suzuki \cite{suzuki_competitive_2013} in their economic models. In cases where quasi-interior points do not exist in the positive cone (e.g. $\mathcal{M}(K)_+$ with $K$ uncountable), alternative conditions are in place to ensure that prices remain bounded away from zero, thereby maintaining (IPVE) (See the case in Section \ref{GEc}, 1.).

Let $p\in S$ and let $\mu_1$ and $\mu_2$ two economies. Let $\Omega_1$ the subset of $\Omega$ such that $p\cdot e = 0$ for every $e\in\Omega_1$. By (IPVE), $\mu_1\left(\mathcal P\times \Omega_1\right) = 0$. Let us define $\Omega_2$ in an analogous way for the economy $\mu_2$ so $\mu_2\left(\mathcal P\times\Omega_2\right) = 0$. Clearly, $\Omega_1$ does not necessarily coincides with $\Omega_2$ so the null-valued set of every economy (given a vector price $p$) is not common among the economies. Therefore, (IPVE) is clearly weaker than asking $p\cdot e > 0$ for every $e\in \Omega$.
    
\item Assumption (EP) is weaker than the assumption that the bilinear mapping $(e, p)\mapsto p\cdot e$ is $w\times w'$-lower semi-continuous\footnote{This assumption resembles Assumption (PR)(iii) in \cite{bonnisseau_existence_1999}, formulated for economies with increasing returns and infinitely many commodities.}. We only assume it for the weak topologies because in any of the other cases the mapping $(e, p)\mapsto p\cdot e$ on $\Omega^\kappa\times S^{\kappa'}$ is continuous. Indeed, when $\kappa = \|\cdot\|$ and $S$ is $\|\cdot\|'$-bounded\footnote{This is the case in \cite{cea_continuity_2024}.}, or when\footnote{Recall that $X$ is $\|\cdot\|$-bounded.} $\kappa' = \|\cdot\|'$, we have $\textnormal{lim}_n p_n\cdot e_n = p \cdot e$ (\cite{aliprantis_infinite_2006}, Corollaries~6.40 and~41, pp.~242--43.). 

More generally, we can argue that in order to ensure that $\textnormal{lim}_n p_n\cdot e_n = p \cdot e$, it is necessary that $\kappa$ be at least as strong as $\tau$, or $\kappa'$ be at least as strong as $\tau'$. As we can see, if $e_n \overset{\tau}{\rightarrow} e$ in $\Omega$ and $p_n \overset{w'}{\rightarrow} p$ in $S$, then $|p^n \cdot e^n - p \cdot e|$ tends to $0$ since
\[
|p^n \cdot e^n - p \cdot e| \le |p^n \cdot (e^n - e)| + |(p^n - p) \cdot e| \le \underset{q \in G}{\textnormal{sup}} \left\{ |q \cdot (e^n - e)| \right\} + |(p^n - p) \cdot e|,
\]
for $G$ a convex circled and $w'$-compact set containing $S$\footnote{For instance, $G$ could be the convex circled hull of $S$ which is $w'$-compact by the Krein--{\v{S}}mulian Theorem (\cite{aliprantis_infinite_2006}, 6.35, p.~241).}. The case in which $e_n \overset{w}{\rightarrow} e$ in $\Omega$ and $p_n \overset{\tau'}{\rightarrow} p$ in $S$ is analogous. Several of the economic applications presented in Section~\ref{applications} are based on one of these cases.
\end{enumerate}
\end{remark}

\section{Continuity of the equilibrium correspondence.}\label{continuity}

 To enable the comparison of economies and feasible allocations, we endow the spaces $\mathcal{M}(T^\kappa)$ and $\mathcal{M}(T^\kappa \times X)$ with their respective weak topologies. Notice that both $T^\kappa$ and $T^\kappa \times X$ are Polish spaces. Consequently, the set $\mathcal{M}(T^\kappa \times X)$ is complete and separable with respect to the Prohorov metric denoted by $\rho$\footnote{Strictly speaking, we should write $\rho$ to make explicit the dependence of the metric on the topology $\kappa$; however, for notational convenience we simply write $\rho$.} (\cite{billingsley_convergence_1999}, p.73) which, in turn, is equivalent to the weak topology on $\mathcal{M}(T^\kappa \times X)$. Of course, the projection on $T^\kappa$, denoted $\rho_{T}$, is equivalent to the relative weak topology on this space. Therefore, the separable and complete metric space $\left(\mathcal{M}(T^\kappa), \rho_{T}\right)$ is the space of economies. It is important to note that both $\left(\mathcal{M}(T^\kappa \times X), \rho\right)$ and $\left(\mathcal{M}(T^\kappa), \rho_{T}\right)$ are compact (\cite{hildenbrand_core_1974}, Eq. 30, p. 49). Therefore, the product space $\mathcal M(T^\kappa\times X)\times S$ endowed with the product topology $\rho\times\kappa'$ is compact and Polish.

\begin{definition}\label{dequilibria}
The distributional equilibrium correspondence
\[
DE\colon \left(\mathcal{M}(T^\kappa), \rho_T\right) \twoheadrightarrow  
\left(\mathcal{M}(T^\kappa \times X) \times S, \rho \times \kappa'\right)
\]
assigns to each economy \( \mu \in \mathcal{M}(T^\kappa) \) its corresponding set of equilibria \( DE(\mu) \) in \( \mathcal{M}(T^\kappa \times X) \times S \).
 \end{definition}

\begin{proposition}\label{uhc} The correspondence $DE$ is upper hemi-continuous if it satisfies Assumptions \textnormal{(PR)}, \textnormal{(PC)}, \textnormal{(EP)}, \textnormal{(IPVE)} and
\end{proposition}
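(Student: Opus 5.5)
Since $\mathcal M(T^\kappa\times X)\times S$ with $\rho\times\kappa'$ is compact and metrizable, upper hemicontinuity of $DE$ (with compact values) is equivalent to closedness of its graph. The plan is therefore to take a sequence $\mu_n\to\mu$ in $\rho_T$, equilibria $(\zeta_n,p_n)\in DE(\mu_n)$ with $(\zeta_n,p_n)\to(\zeta,p)$ in $\rho\times\kappa'$, and prove $(\zeta,p)\in DE(\mu)$. Compactness of $DE(\mu)$ then follows, since closed subsets of a compact space are compact. The three conditions of Definition \ref{defeconomies} are checked in turn.

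\textbf{Conditions 1 and 2.} The projection $\zeta\mapsto\zeta_T$ is weakly continuous, so $\zeta_T=\lim\zeta_{n,T}=\lim\mu_n=\mu$. For feasibility I would use Lemma \ref{lebesgueintegral} (respectively, the defining property of the Gel'fand integral) to scalarize. For each $q\in L'$:
\[
q\cdot\int_X\iota\,d\zeta_{n,X}=\int_X q\cdot x\,d\zeta_{n,X}, \qquad q\cdot\int_\Omega\iota\,d\mu_{n,\Omega}=\int_\Omega q\cdot e\,d\mu_{n,\Omega}.
\]
The maps $x\mapsto q\cdot x$ on $X$ and $e\mapsto q\cdot e$ on $\Omega^\kappa$ are bounded (because $X$ is $\|\cdot\|$-bounded) and continuous (because $\kappa$ is at least as strong as $w$). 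Weak convergence of the marginals therefore passes both integrals to the limit. Equality of $q\cdot\int\iota\,d\zeta_X$ and $q\cdot\int\iota\,d\mu_\Omega$ for all $q\in L'$ gives equality of the vectors, since $L'$ separates points of $L$.

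\textbf{Condition 3, setup.} I would show that the set
\[
\mathcal E=\{(p,\succ,e,x)\in S\times T\times X : x\in D(\succ,e,p)\}
\]
is "closed along equilibrium sequences", and then apply a Skorokhod-type argument. Concretely, $\zeta_n\to\zeta$ on the Polish space $T^\kappa\times X$, so by Skorokhod's representation there are random elements $(\succ_n,e_n,x_n)\to(\succ,e,x)$ almost surely, with laws $\zeta_n$ and $\zeta$. Almost surely, $x_n\in D(\succ_n,e_n,p_n)$ for all $n$. It then suffices to show that a.s. $x\in D(\succ,e,p)$.

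\textbf{Condition 3, budget constraint.} Passing to a subsequence, $p_n\cdot e_n$ converges (it is bounded). The inequality $p\cdot x\le\liminf p_n\cdot x_n$ needs care when both topologies are weak. In the other cases the pairing is jointly continuous by Remark \ref{jointcont}. In the weak-weak case I would avoid pointwise limits: integrate $p_n\cdot x_n\le p_n\cdot e_n$, combine with feasibility, and use the fact that budgets bind almost everywhere (by (PR) and local nonsatiation, or at least aggregate feasibility). This upgrades the inequality to a.e. equality, after which (EP) is applied to the endowment side.

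\textbf{Condition 3, preference maximality.} Suppose $x'\succ x$ with $p\cdot x'\le p\cdot e$. By (IPVE), $p\cdot e>0$ a.s., so the cheaper point $\lambda x'$ (with $\lambda<1$, convexity of $X$, $0\in X$) satisfies $p\cdot\lambda x'<p\cdot e$. Lemma \ref{contpref} then keeps $\lambda x'\succ x$ for $\lambda$ close to $1$. The same lemma gives neighborhoods in which $\lambda x'\succ_n x_n$ for large $n$. Meanwhile $p_n\cdot\lambda x'<p_n\cdot e_n$ eventually, using the lower-semicontinuity (EP) on $e_n$ and the convergence $p_n\cdot\lambda x'\to p\cdot\lambda x'$ (since $\lambda x'$ is fixed and $p_n\to p$ in $\kappa'$). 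This contradicts $x_n\in D(\succ_n,e_n,p_n)$.

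\textbf{Expected obstacle.} The main difficulty is the weak-weak pairing case: making (EP), which is stated only for sequences of equilibrium prices, cover both the budget inequality and the strict cheaper-point inequality. There, the a.s. Skorokhod realization must be combined with the aggregate identities to control $p_n\cdot e_n$.
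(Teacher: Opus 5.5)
Your architecture (closed graph in a compact metrizable target, Skorokhod realization, pointwise check) is the paper's. Your feasibility argument through the bounded $w$-continuous test functions $q\cdot\iota$ is a valid shortcut. Your maximality step is the paper's Step~2, with its ``without loss of generality $p\cdot\zeta<p\cdot e$'' correctly justified by (IPVE) and the scaling $\lambda x'$.

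\textbf{The gap is the budget constraint $p\cdot x\le p\cdot e$.} Your direct limit $p\cdot x=\lim_n p_n\cdot x_n\le\lim_n p_n\cdot e_n=p\cdot e$ needs joint continuity on the consumption side as well. But Remark~\ref{jointcont} is about $\Omega^\kappa\times S^{\kappa'}$, and the Skorokhod realization gives only $x_n\to x$ in $w$. So the argument is valid only for $\kappa'\in\{\tau',\|\cdot\|'\}$. For $\kappa'=w'$ it fails even when $\kappa=\|\cdot\|$, which is the setting of several applications in Section~\ref{applications}. For instance, in $\ell_2$, with $u_n$ the unit vectors and $a\in\ell_2^+$ fixed, $x_n=u_n\to 0$ and $p_n=a+u_n\to a$ weakly, while $p_n\cdot x_n=a\cdot u_n+1\to 1$.

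Your weak--weak fallback does not close this either. (EP) gives only $\liminf_n p_n\cdot e_n\ge p\cdot e$, which is the wrong direction for an upper bound on $p\cdot x$. Integrating $p_n\cdot x_n\le p_n\cdot e_n$ against feasibility of the $n$-th economies shows only that budgets bind in those economies, and that information does not pass to the limit. Finally, (PR) explicitly allows satiation points and you never treat them. The proposition's hypothesis list continues with Assumption (SB), and it is needed precisely here.

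\textbf{The repair} is the paper's Steps~3--4. It needs no convergence of $p_n\cdot x_n$ and works uniformly in $\kappa,\kappa'$. Do maximality first, since it does not use the budget constraint: almost surely, every $x'\succ x$ has $p\cdot x'>p\cdot e$.
\begin{itemize}
\item If $x$ is not satiated, (PR) puts $x$ in the $w$-closure of $\{x'\in X: x'\succ x\}$, and $w$-continuity of $p\in L'$ gives $p\cdot x\ge p\cdot e$.
\item If $x$ is satiated, (SB) gives $x\ge e$, hence $p\cdot x\ge p\cdot e$ because $p\ge 0$.
\end{itemize}
Now use feasibility of the \emph{limit}, transported to the Skorokhod space: $\int p\cdot x\,d\lambda=p\cdot\int x\,d\lambda=p\cdot\int e\,d\lambda=\int p\cdot e\,d\lambda$. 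The nonnegative function $p\cdot x-p\cdot e$ therefore has zero integral, so $p\cdot x=p\cdot e$ almost everywhere. Together with maximality, this gives $x\in D(\succ,e,p)$ almost surely.
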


\ 

\textbf{Assumption (SB) (satiated bundles)}. Let $(\succ, e)\in B$ for some $B\in\mathcal B(T^\kappa )$. If $x\in X$ is satiated for $\succ$, then $x\ge e$.

\ 

This assumption simply states that whenever a commodity bundle is satiated, it must be bigger than the corresponding endowment. In economies with a measure space of agents, this condition has become standard since Podczeck (1997) \cite{podczeck_markets_1997}\footnote{Besides \cite{podczeck_markets_1997}, see also \cite{lee_competitive_2013}, \cite{khan_relaxed_2016}, \cite{noguchi_economies_2000} and \cite{jang_equilibria_2020}.}.
    
We focus on the subspace of $\mathcal{M}(T^\kappa)$ consisting of economies that admit at least one equilibrium. Accordingly:
\begin{corollary}\label{baire} Let $\mathbb{E} := \left\{ \mu \in \mathcal{M}(T^\kappa) : DE(\mu) \neq \emptyset \right\}$.
Under the induced topology, the set $\mathbb{E}$ is a Baire space.
\end{corollary}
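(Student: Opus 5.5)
The plan is to show that $\mathbb{E}$ is a closed subset of the complete metric space $\left(\mathcal{M}(T^\kappa), \rho_T\right)$. A closed subset of a complete metric space is complete under the restricted metric, and Baire's Theorem stated in Section \ref{maths} then makes $\mathbb{E}$ a Baire space in the induced topology. Recall that $\left(\mathcal{M}(T^\kappa), \rho_T\right)$ is complete and separable, and even compact, since $T^\kappa$ is a compact Polish space. So closedness of $\mathbb{E}$ gives at once that $\mathbb{E}$ is compact metrizable, and hence Baire.

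Closedness will come from the upper hemicontinuity in Proposition \ref{uhc}, under (PR), (PC), (EP), (IPVE) and (SB), together with compactness of the range $\mathcal{M}(T^\kappa\times X)\times S$. Both spaces are metrizable, so it is enough to work with sequences. Take $\mu_n\in\mathbb{E}$ with $\mu_n\to\mu$ in $\rho_T$, and pick $(\zeta_n,p_n)\in DE(\mu_n)$. The product $\mathcal{M}(T^\kappa\times X)\times S$ with the topology $\rho\times\kappa'$ is compact and Polish, so a subsequence $(\zeta_{n_k},p_{n_k})$ converges to some $(\zeta,p)$. The graph of a compact-valued upper hemicontinuous correspondence into a Hausdorff space is closed. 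Alternatively, one can use directly the sequential characterization of upper hemicontinuity with compact range, which is what the proof of Proposition \ref{uhc} presumably establishes. Either way $(\zeta,p)\in DE(\mu)$, so $DE(\mu)\neq\emptyset$ and $\mu\in\mathbb{E}$.

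The main obstacle is this closed-graph step. If Proposition \ref{uhc} only gives upper hemicontinuity in the neighborhood sense, I need $DE$ to be closed-valued, or equivalently compact-valued, in order to pass to the closed graph. I would argue this directly by redoing the limit argument with $\mu_n=\mu$ fixed. Condition 1, $\zeta_T=\mu$, passes to the limit by continuity of marginals. Condition 2, feasibility, uses convergence of the integrals $\int_X\iota\, d\zeta_X$ together with (EP) and (IPVE) to control the budget values. Condition 3, $\zeta(E_p)=1$, uses Lemma \ref{contpref} and (PR), (SB) to show that $E_p$ behaves upper semicontinuously in $p$. These are exactly the ingredients of Proposition \ref{uhc}, so I would cite its proof. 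An alternative that avoids the issue is to note that the union of the graph of $DE$ over convergent sequences is handled by the same argument.

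Once $\mathbb{E}$ is closed, it is compact in the induced metric, hence complete. By Baire's Theorem every nonempty open subset of $\mathbb{E}$ is non-meager, so $\mathbb{E}$ is a Baire space. This also prepares the application of Lemma \ref{essential} to $DE$ restricted to $\mathbb{E}$, which is nonempty-valued there by construction.
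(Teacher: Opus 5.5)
Your proposal is correct and matches the paper's argument: $\mathbb{E}$ is shown to be closed in the complete metric space $(\mathcal{M}(T^\kappa),\rho_T)$. This is done by extracting a convergent subsequence of equilibria from the compact range and applying the sequential closed-graph property that the proof of Proposition \ref{uhc} actually establishes, after which Baire's theorem gives the conclusion. Your worry about compact-valuedness is unnecessary, since that proof directly shows that a limit $(\zeta,p)$ of equilibria of economies $\mu_n\to\mu$ is an equilibrium of $\mu$; also, feasibility passes to the limit via Corollary \ref{cKDCT} or Proposition \ref{GDCT}, not via (EP) and (IPVE).
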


A straightforward application of Lemma \ref{essential}, given Proposition \ref{uhc} and Corollary \ref{baire}, yields the following.

\begin{lemma}
\label{lhc}
If $\mathbb E\neq\emptyset$, the correspondence $DE$ is lower hemi-continuous at every point of a dense residual subset $\mathbb E'$ of $\mathbb E$ under the assumption of Proposition \ref{uhc}.
\end{lemma}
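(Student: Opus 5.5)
The plan is to apply Lemma \ref{essential} to the restriction of $DE$ to $\mathbb E$, taking $M := \mathbb E$ with the topology induced by $\rho_T$ and $T := \mathcal M(T^\kappa\times X)\times S$ with the product topology $\rho\times\kappa'$. Four hypotheses have to be checked: $T$ is metric, $M$ is Baire, $DE|_{\mathbb E}$ is nonempty-valued, and $DE|_{\mathbb E}$ is compact-valued and upper hemicontinuous.

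\emph{Metric target and Baire domain.} The text just before Definition \ref{dequilibria} shows that $\mathcal M(T^\kappa\times X)\times S$ is compact and Polish, since $\rho$ is a metric and $S^{\kappa'}$ is compact metrizable. So $T$ is a metric space once we fix a compatible metric on $S$. Corollary \ref{baire} gives directly that $\mathbb E$ is a Baire space.

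\emph{Nonempty values.} These hold by the definition of $\mathbb E$.

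\emph{Upper hemicontinuity and compact values.} Proposition \ref{uhc} gives upper hemicontinuity of $DE$ on $\mathcal M(T^\kappa)$ under (PR), (PC), (EP), (IPVE) and (SB). Restricting to $\mathbb E$ preserves it, because the preimage condition $\{\mu\in\mathbb E: DE(\mu)\subset V\}=\mathbb E\cap\{\mu: DE(\mu)\subset V\}$ is relatively open. For compactness of values, I will use that an upper hemicontinuous correspondence into a compact Hausdorff space has closed values iff it has closed graph. The graph characterization underlying Proposition \ref{uhc} shows $DE(\mu)$ is closed: limits of equilibria for a fixed $\mu$ stay equilibria. As a closed subset of the compact space $\mathcal M(T^\kappa\times X)\times S$, $DE(\mu)$ is then compact.

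This compactness step is the main obstacle, since it is where the proof of Proposition \ref{uhc} must be invoked rather than only its statement. If that proof establishes closedness of the graph of $DE$ (via Lemma \ref{contpref}, (EP) for the budget limit, and (SB)/(IPVE) for the satiation case), closed values follow by taking the constant sequence $\mu_n=\mu$. If it only yields upper hemicontinuity, I would instead directly show that $\{(\zeta,p)\}$ satisfying conditions 1--3 of Definition \ref{defeconomies} for a fixed $\mu$ is closed. Conditions 1 and 2 are preserved under weak convergence because marginals are continuous and the integral of the identity over the compact set $X$ is continuous. Condition 3 passes to the limit using closedness of $\bigcup_p\{p\}\times E_p$, which follows from Lemma \ref{contpref} and (EP), together with the Portmanteau theorem.

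Lemma \ref{essential} then yields a dense $G_\delta$ subset $\mathbb E'\subset\mathbb E$ on which $DE$ is lower hemicontinuous. A dense $G_\delta$ set in a Baire space is residual, since its complement is a countable union of closed sets with empty interior, which are nowhere dense. This gives the claimed dense residual set, assuming $\mathbb E\neq\emptyset$ so that Lemma \ref{essential} applies non-vacuously.
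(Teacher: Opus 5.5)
Your proposal is correct and takes the same route as the paper, which obtains the lemma as a direct application of Lemma \ref{essential}: $\mathbb E$ is Baire by Corollary \ref{baire}, and upper hemicontinuity comes from Proposition \ref{uhc}. Your extra check that values are compact, by running the sequential closed-graph argument in the proof of Proposition \ref{uhc} with the constant sequence $\mu_n=\mu$, correctly supplies a hypothesis the paper leaves implicit.

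Do not rely on your fallback, though. The set $\bigcup_p\{p\}\times E_p$ is not closed in general: (EP) only gives $\lim_n p_n\cdot e_n\ge p\cdot e$, and the budget inequality need not survive $w\times w'$ limits. This is exactly why the paper uses the enlarged demand set and the aggregate feasibility step (Step 4 of Lemma \ref{indequi}) instead of a pointwise Portmanteau argument.
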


Hence,
\begin{theorem}\label{contDW} Suppose that $\mathbb E\neq\emptyset$. Under Assumptions \textnormal{(PR)}, \textnormal{(PC)}, \textnormal{(EP)}, \textnormal{(IPVE)} and  \textnormal{(SB)}, there exists a dense residual subset $\mathbb E'$ of $\mathbb E$ such that the equilibrium correspondence $DE : \mathbb E'\twoheadrightarrow\mathcal{M}(T^\kappa\times X)\times S $ is continuous.  \end{theorem}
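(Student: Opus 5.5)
The plan is to combine the three preceding results, but one point must be handled first: Lemma \ref{lhc} only delivers lower hemi-continuity of $DE$ at points of $\mathbb E'$ as a correspondence defined on $\mathbb E$. We need continuity of the restricted correspondence $DE|_{\mathbb E'}$, meaning both upper and lower hemi-continuity at each point of $\mathbb E'$ relative to $\mathbb E'$.

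\textbf{Step 1 (upper hemi-continuity, closed graph).} Invoke Proposition \ref{uhc} under (PR), (PC), (EP), (IPVE) and (SB) to get that $DE$ is upper hemi-continuous on $\mathcal M(T^\kappa)$. I will also record that its graph is closed in $\mathcal M(T^\kappa)\times\mathcal M(T^\kappa\times X)\times S$. This is the content of the proof of Proposition \ref{uhc}: if $\mu_n\to\mu$ and $(\zeta_n,p_n)\in DE(\mu_n)$ converge to $(\zeta,p)$, then $(\zeta,p)\in DE(\mu)$. Because the target space $\mathcal M(T^\kappa\times X)\times S$ is compact and metrizable, closed graph and upper hemi-continuity with compact values are equivalent. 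In particular $DE(\mu)$, a closed subset of a compact set, is compact for every $\mu$.

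\textbf{Step 2 (Baire domain).} By Corollary \ref{baire}, $\mathbb E$ is a Baire space. On $\mathbb E$ the correspondence $DE$ is nonempty-valued, compact-valued and upper hemi-continuous. Upper hemi-continuity is inherited by restriction to a subspace of the domain, since the preimage condition is intersected with $\mathbb E$. Lemma \ref{essential} then applies with $M=\mathbb E$ and $T=\mathcal M(T^\kappa\times X)\times S$, which is metric under $\rho\times\kappa'$ because $S^{\kappa'}$ is metrizable. It yields a dense $G_\delta$ set $\mathbb E'\subset\mathbb E$ at whose points $DE|_{\mathbb E}$ is lower hemi-continuous; this is exactly Lemma \ref{lhc}. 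A dense $G_\delta$ set in a Baire space is residual, since its complement is a countable union of closed sets with empty interior.

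\textbf{Step 3 (restriction to $\mathbb E'$).} Fix $\mu\in\mathbb E'$, an open set $U$ meeting $DE(\mu)$, and a neighbourhood $V$ of $\mu$ in $\mathbb E$ with $DE(\mu')\cap U\neq\emptyset$ for all $\mu'\in V$. Then $V\cap\mathbb E'$ is a neighbourhood of $\mu$ in $\mathbb E'$ with the same property, so lower hemi-continuity holds relative to $\mathbb E'$. Upper hemi-continuity relative to $\mathbb E'$ follows in the same way from Step 1. Together these give continuity of $DE:\mathbb E'\twoheadrightarrow\mathcal M(T^\kappa\times X)\times S$.

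\textbf{Main obstacle.} The only nontrivial inputs are Proposition \ref{uhc} and Corollary \ref{baire}, both of which I take as given. The subtle point is Corollary \ref{baire}. $\mathbb E$ is closed in the compact Polish space $\mathcal M(T^\kappa)$, because a limit of economies with equilibria has an equilibrium by the closed graph and compactness of the target. Hence $\mathbb E$ is compact metric, and in particular completely metrizable and Baire. If that argument were not available, I would fall back on showing that $\mathbb E$ is a $G_\delta$ subset of $\mathcal M(T^\kappa)$ and then apply Alexandroff's lemma.
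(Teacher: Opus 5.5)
Your proposal is correct and follows the paper's own route. Proposition \ref{uhc} gives upper hemi-continuity, Corollary \ref{baire} gives that $\mathbb E$ is Baire (via closedness of $\mathbb E$ in the compact metric space $\mathcal M(T^\kappa)$), and Lemma \ref{essential} then gives lower hemi-continuity on a dense residual $G_\delta$ set $\mathbb E'$. Your Step 3, which checks that both properties pass to the restriction $DE|_{\mathbb E'}$, is a routine detail the paper leaves implicit.
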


This is the main result of the present paper, built upon the preceding analyses. We refer to the economies in $\mathbb{E}'$ as having \emph{continuous equilibria}.

\begin{remark}    
We are aware that the preceding analysis is vacuous if $\mathbb{E} = \emptyset$. Although the main focus of our research is the continuity of equilibria rather than their existence, we cannot overlook the fact that a proper understanding of such a property requires, first of all, that equilibria do exist. Even more, Assumption (EP) explicitly presumes the existence of equilibria in neighboring economies. We address this important issue in the final section. 
\end{remark}

\section{The case of individualized economies.}\label{individual}

\ 

It is well known that every distributional economy $\mu$ has a representation through measurable functions. Indeed, by Skorokhod's theorem, for $\mu\in\mathcal M(T^\kappa)$, there exists a measurable function $\mathcal E$ and a measure space $(A, \mathcal A, \nu)$, such that $\mathcal E: (A, \mathcal A)\rightarrow (T, \mathcal B (T^\kappa))$ and $\nu\circ\mathcal E^{-1} = \mu$. Since $T^\kappa$ is complete, $(A, \mathcal A, \nu)$ may be chosen to be $(I, \mathcal B(I), \lambda)$, the unit interval with the Lebesgue measure $\lambda$ (\cite{hildenbrand_core_1974}, pp. 50-51). Furthermore, \cite{keisler_why_2009} Lemma 2.1 (ii), states that for any atomless measure space $(A, \mathcal A, \nu)$ and a distribution $\mu$ in $\mathcal M(T^\kappa)$, there exists a $(\mathcal A, \mathcal B(T^\kappa))$-measurable mapping $\mathcal E$ from $A$ into $T$ such that $\nu\circ\mathcal E^{-1} = \mu$. In any of these cases, $\mathcal E: (A, \mathcal A)\rightarrow (T, \mathcal B (T^\kappa))$ is said to be an individualized representation of $\mu$ if $\|\int_A e_\mathcal{E}(a)d\nu(a) \|<\infty$, where $e_\mathcal{E} = \text{Proj}_{\Omega}\mathcal E$ is the initial endowment and $(A, \mathcal A, \nu)$ represents an agent space. Note that there may be more than one representation. 

By invoking Corollary \ref{cBintegration}, $\int_A e(a)\, d\nu(a)$ is a Bochner integral. This is a Gel'fand integral when $L$ is a dual space due to Lemma \ref{Gintegration}. Furthermore, every measurable function $x: (A, \mathcal{A}) \rightarrow (X, \mathcal{B}(X))$ is an \emph{allocation}. It is a \emph{feasible} one if $\int_A x(a)\, d\nu(a) \le \int_A e(a)d\nu(a)$, where $\int_A x(a)d\nu(a)$ exists as a Bochner or Gel'fand integral according to Corollary \ref{cBintegration} or Lemma \ref{Gintegration} respectively. The mapping $x: (A, \mathcal{A}) \rightarrow (X, \mathcal{B}(X))$ is an equilibrium allocation for $\mathcal E$ if there exists $p \in L_+'$ such that $p\cdot x(a) \le p\cdot e_\mathcal{E}(a)$ and $p\cdot v > p\cdot e_\mathcal{E}(a)$ whenever $v \succ_a x(a)$ for almost all $a \in A$. We remark that $\text{Proj}_{T} (\mathcal{E}, x) = \mathcal{E}$, $\text{Proj}_\Omega (\mathcal{E}, x) = \text{Proj}_\Omega \mathcal{E} = e_{\mathcal{E}}$, and $\text{Proj}_X (\mathcal{E}, x) = x$. Note that the measurable function $(\mathcal{E}, x): A \rightarrow T \times X$ is a representation of the allocation $\zeta$ for the economy $\mu$.

The following lemma shows the relationship between distributional and individualized equilibria.

\begin{lemma}\label{wequivalence}\

\begin{enumerate}
\item Let $(x, p)$ be an equilibrium of the economy $\mathcal E$ whose agent space is the measure space $(A, \mathcal A, \nu)$ and the charasteristics set is $(T, \mathcal B(T^\kappa))$ then $\nu\circ\left(\mathcal E, x\right)^{-1}$ is a distributional Walrasian equilibrium of the distributional economy $\nu\circ\mathcal E^{-1}$ in $\mathcal M\left(T^\kappa\right)$. 
\item Let $\zeta$ be a distributional equilibrium of the distributional economy $\mu\in \mathcal M\left(T^\kappa\right)$ and let $(A, \mathcal A, \nu)$ be any atomless finite measure space. There exist measurable mappings $\mathcal E': A\rightarrow T$ and $x': A\rightarrow X$ such that $\nu\circ\mathcal E'^{-1} = \mu$, $x'$ is an individualized equilibrium allocation for $\mathcal E'$ and $\nu\circ (\mathcal E', x')^{-1} = \zeta$.
\end{enumerate}
\end{lemma}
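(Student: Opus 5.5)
Part 1 is a direct verification via the change of variables formula, Corollary \ref{chvariable}; its Gel'fand analogue is the one stated in Appendix \ref{measurability}. Set $\mu=\nu\circ\mathcal E^{-1}$ and $\zeta=\nu\circ(\mathcal E,x)^{-1}$; I check the three conditions of Definition \ref{defeconomies} in turn.

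\textbf{Step 1.} $(\mathcal E,x)$ is measurable into $T^\kappa\times X$. This holds because $T^\kappa$ and $X$ are Polish, so the Borel $\sigma$-algebra of the product is the product of the Borel $\sigma$-algebras.

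\textbf{Step 2.} Since $\mathrm{Proj}_T\circ(\mathcal E,x)=\mathcal E$, we get $\zeta_T=\mu$.

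\textbf{Step 3.} For the endowment condition, apply Corollary \ref{chvariable} with $g=\iota$, which is non-negative and strongly measurable on the separable set $X$ by Lemma \ref{measurestrongmeasure}. This gives $\int_X\iota\, d\zeta_X=\int_A x\,d\nu$, and similarly $\int_\Omega\iota\, d\mu_\Omega=\int_A e_{\mathcal E}\,d\nu$. Equality of the two integrals then follows from feasibility. If feasibility is stated only as $\le$, I would upgrade it to equality using local nonsatiation (PR) together with (SB), which makes the budget bind. For the Gel'fand case, equality can be checked functional by functional.

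\textbf{Step 4.} For the third condition, note that $(\mathcal E,x)^{-1}(E_p)\supseteq\{a: x(a)\in D(\mathcal E(a),p)\}$, which has full $\nu$-measure. It then suffices that $E_p$ is measurable. I would show it is closed, or at least Borel, using Lemma \ref{contpref} and the $w$-closedness of $X$ together with the $w$-continuity of $x\mapsto p\cdot x$.

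Part 2 relies on the Keisler–Sun lemma already cited (\cite{keisler_why_2009}, Lemma 2.1(ii)). Since $(A,\mathcal A,\nu)$ is atomless (normalized to a probability space) and $T^\kappa\times X$ is Polish, there is a measurable $\phi:A\to T\times X$ with $\nu\circ\phi^{-1}=\zeta$. Write $\phi=(\mathcal E',x')$. Then $\nu\circ\mathcal E'^{-1}=\zeta_T=\mu$, and $\nu\circ(\mathcal E',x')^{-1}=\zeta$. Moreover $\nu(\{a:(\mathcal E'(a),x'(a))\in E_p\})=\zeta(E_p)=1$, so the budget and preference-maximization conditions hold for almost all $a$. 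Feasibility follows by applying the change of variables formula once more to both marginals, which gives $\int_A x'\,d\nu=\int_X\iota\,d\zeta_X=\int_\Omega\iota\,d\mu_\Omega=\int_A e_{\mathcal E'}\,d\nu$.

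The main obstacle is the measurability of $E_p$ and the validity of change of variables for $\iota$ when $\kappa=w$. In that case Borel sets for the weak topology must be compared with norm-Borel sets. For Bochner integrals I would invoke Lemma \ref{measurestrongmeasure} and Pettis (Theorem \ref{pettismeasurability}), using norm-separability of $X$ (Lemma \ref{separability}). For Gel'fand integrals I would use the appendix lemma on weak$^*$ change of variables.
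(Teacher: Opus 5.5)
Your proposal is correct and follows the paper's route. The paper dismisses Part 1 as immediate, and your Steps 1--4 are exactly that verification; note that $E_p$ is in fact closed, by a cheaper-point argument that uses $0\in X$ and convexity when $p\cdot e>0$, and uses $p\cdot e_n\ge 0$ when $p\cdot e=0$. For Part 2 the paper simply cites Lemma 5.4 of \cite{suzuki_fundamentals_2020}, and your realization of $\zeta$ on the atomless space via \cite{keisler_why_2009}, Lemma 2.1(ii), followed by reading off the equilibrium conditions from $\zeta(E_p)=1$ and change of variables, is a self-contained substitute for that citation.

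The one thing to drop is your fallback in Step 3. Binding budgets give only $p\cdot\int_A x\,d\nu=p\cdot\int_A e_{\mathcal E}\,d\nu$, and together with $\int_A x\,d\nu\le\int_A e_{\mathcal E}\,d\nu$ this forces equality only when $p$ is strictly positive. So you must take exact market clearing to be part of the individualized equilibrium notion, as the paper implicitly does.
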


The proof of Point 1 is immediate. For Point 2, we rely on Lemma 5.4 in \cite{suzuki_fundamentals_2020}.
\ 

The above results together with Proposition \ref{uhc} imply that large individualized economies have the closed graph property when convergence is in distribution.

\begin{proposition}\label{wclosenessindiv}
Let $(A_n, \mathcal A_n, \nu_n)$ be a sequence of probability spaces denoting agent spaces. Let $(\mathcal E_n, x_n)$ be a sequence of economies and allocations such that for each $n\in\mathbb N$, $\mathcal E_n: (A_n, \mathcal A_n)\rightarrow (T, \mathcal B(T^\kappa))$ and $x_n: (A_n, \mathcal A_n)\rightarrow (X, \mathcal B(X))$ are measurable functions. Let $(p_n)\in S^{\kappa '}$ be a sequence of prices such that for each $n\in\mathbb N$, $(x_n, p_n)$ is an equilibrium for $\mathcal E_n$. Let $(A, \mathcal A, \nu)$ be an atomless probability space and let $\mathcal E : (A, \mathcal A)\rightarrow (T, \mathcal B(T^\kappa))$ and $x: (A, \mathcal A)\rightarrow (X, \mathcal B(X^\kappa))$ be measurable functions. If $(\mathcal E_n, x_n)$ converges in distribution to $(\mathcal E, x)$ and $(p_n)$ converges to $p$, then $(x, p)$ is an equilibrium for $\mathcal E$ under the assumptions of Proposition \ref{uhc}.
\end{proposition}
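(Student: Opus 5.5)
The plan is to transfer the problem to the distributional setting, use the closed-graph property behind Proposition \ref{uhc}, and then come back to the individualized economy $\mathcal E$ through Lemma \ref{wequivalence}.

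\emph{Step 1: push everything forward.} For each $n$ set $\mu_n := \nu_n\circ\mathcal E_n^{-1}\in\mathcal M(T^\kappa)$ and $\zeta_n := \nu_n\circ(\mathcal E_n, x_n)^{-1}\in\mathcal M(T^\kappa\times X)$. By Point 1 of Lemma \ref{wequivalence}, $(\zeta_n, p_n)\in DE(\mu_n)$ for every $n$. Similarly put $\mu := \nu\circ\mathcal E^{-1}$ and $\zeta := \nu\circ(\mathcal E, x)^{-1}$.

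Convergence in distribution of $(\mathcal E_n, x_n)$ to $(\mathcal E, x)$ means exactly that $\zeta_n\to\zeta$ in the weak topology, that is, $\rho(\zeta_n,\zeta)\to 0$. Since the projection $T\times X\to T$ is continuous, the continuous mapping theorem gives $\mu_n = (\zeta_n)_T \to \zeta_T = \mu$ in $\rho_T$. We also have $p_n\to p$ in $S^{\kappa'}$.

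\emph{Step 2: apply the closed graph.} $DE$ is upper hemi-continuous by Proposition \ref{uhc}, and its range $\mathcal M(T^\kappa\times X)\times S$ is compact metrizable, so it has closed values. An upper hemi-continuous correspondence with closed values into a metric (regular) space has closed graph. The sequence $(\mu_n,\zeta_n,p_n)$ lies in the graph and converges to $(\mu,\zeta,p)$, hence $(\zeta,p)\in DE(\mu)$.

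If the proof of Proposition \ref{uhc} already establishes sequential closedness of the graph directly, I will cite that instead. That is the more natural route, since upper hemi-continuity there is obtained from compactness of the range plus a closed graph.

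\emph{Step 3: return to the individualized economy.} This is the main obstacle: Point 2 of Lemma \ref{wequivalence} produces \emph{some} representation $(\mathcal E', x')$ on $A$ with law $\zeta$, not necessarily the given $\mathcal E$. I will avoid it and verify the individualized equilibrium conditions directly from $\zeta(E_p)=1$.

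\emph{Budget and preference maximization.} Since $\nu\circ(\mathcal E,x)^{-1} = \zeta$, we have $\nu\bigl(\{a : ((\succ_a, e(a)), x(a))\in E_p\}\bigr) = \zeta(E_p) = 1$. This uses measurability of $E_p$ in $\mathcal B(T^\kappa\times X)$, which is implicit in Definition \ref{defeconomies}. So for almost all $a\in A$, $x(a)\in D(\succ_a, e(a), p)$, which gives both the budget condition and the preference-maximization condition.

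\emph{Feasibility.} By Corollary \ref{chvariable} (or its weak$^*$ analogue in Appendix \ref{measurability} in the Gel'fand case), $\int_A x\,d\nu = \int_X\iota\,d\zeta_X$ and $\int_A e\,d\nu = \int_\Omega\iota\,d\mu_\Omega$. The change of variables applies because $\iota$ is non-negative on $X\subset L_+$ and strongly (resp. weak$^*$) measurable by Corollary \ref{cBintegration} (resp. Lemma \ref{Gintegration}). Condition 2 of Definition \ref{defeconomies} then yields $\int_A x\,d\nu=\int_A e\,d\nu$.

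Hence $(x,p)$ is an equilibrium for $\mathcal E$. The atomlessness of $(A,\mathcal A,\nu)$ is needed only to guarantee that $\mathcal E$ is a legitimate individualized representation; if one wanted to use Point 2 instead, it would enter there.
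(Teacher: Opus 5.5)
Your proposal is correct and follows the paper's skeleton: push forward via Lemma \ref{wequivalence}(1), use the closed-graph content of Proposition \ref{uhc}, then return to the individualized economy. It differs from the paper at two points, and both differences are improvements.

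First, the paper records only convergence of the two marginals $\nu_n\circ\mathcal E_n^{-1}$ and $\nu_n\circ x_n^{-1}$. It then invokes Theorem 2.1(iii) of Keisler--Sun to extract a subsequence whose joint laws converge to $\nu\circ(\mathcal E,x)^{-1}$. You note instead that convergence in distribution of $(\mathcal E_n,x_n)$ already \emph{is} weak convergence of the joint laws $\zeta_n\to\zeta$. That is exactly what the hypothesis gives, so the detour is unnecessary.

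Second, for the return step the paper cites Lemma \ref{wequivalence}(2). As you observe, that lemma only produces \emph{some} pair $(\mathcal E',x')$ with law $\zeta$ and says nothing literally about the given $(\mathcal E,x)$. The paper is implicitly relying on the fact that the individualized equilibrium conditions depend only on the joint law. Your direct verification makes this explicit: you pull $\zeta(E_p)=1$ back through $\nu\circ(\mathcal E,x)^{-1}=\zeta$, and you obtain $\int_A x\,d\nu=\int_A e\,d\nu$ from Corollary \ref{chvariable} (or Corollary \ref{wchvariable} in the Gel'fand case). In passing, this shows that atomlessness of $(A,\mathcal A,\nu)$ plays no role in the argument, not even the one you attribute to it at the end.

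There is one slip in Step 2. Compactness of the target space $\mathcal M(T^\kappa\times X)\times S$ does not by itself give closed values. So "upper hemi-continuous with closed values, hence closed graph" is not justified as written. Your fallback is the right one: the proof of Proposition \ref{uhc} establishes precisely the sequential closed-graph property, with limit price equal to the limit of the $p_n$. This is also how the paper uses that proposition here.
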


By taking $(A_n, \mathcal{A}_n, \nu_n) = (A_n, 2^{|A_n|}, \nu_n)$ where $A_n$ is finite, the above result shows that an individualized economy with an atomless probability measure space of agents can be viewed as the limit of an increasing sequence of economies with finite agents.

Let $\mathcal{E}: A \rightarrow T$ be an individualized economy such that $\zeta \in DE(\nu \circ \mathcal{E}^{-1})$. Notice that Lemma \ref{wequivalence} does not address the existence of a measurable mapping $x: A \rightarrow X$ such that $\nu \circ (\mathcal{E}, x)^{-1} = \zeta$. Several studies have explored this issue, notably \cite{keisler_why_2009}, which underscores the need to consider \emph{saturated} atomless probability spaces. The space $(A, \mathcal{A}, \nu)$ satisfies the \emph{saturation} property if for every $\zeta \in \mathcal{M}(T^\kappa \times X)$ such that for each measurable function $r: A \rightarrow T$ with $\zeta_{T} = \nu \circ r^{-1}$, there exists a measurable function $s: A \rightarrow X$ such that $\nu \circ (r, s)^{-1} = \zeta$. The requirement of saturation in agent spaces restricts the set of admissible spaces. For instance, one cannot use the Lebesgue unit interval, which is widely used in economic theory. By counterpart, saturation allows for a more general result than Proposition \ref{wclosenessindiv}, in the sense that it is not necessary to assume the convergence of a distributional allocation to a specific allocation.

\begin{theorem}\label{closenessindiv}
Let $(A_n, \mathcal A_n, \nu_n)$ be a sequence of probability spaces denoting agent spaces. Let $(\mathcal E_n, x_n)$ be a sequence of economies and allocations such that for each $n\in\mathbb N$, $\mathcal E_n: (A_n, \mathcal A_n)\rightarrow (T, \mathcal B(T^\kappa))$ and $x_n: (A_n, \mathcal A_n)\rightarrow (X, \mathcal B(X))$ are measurable functions. Let $(p_n)\in S^{\kappa '}$ be a sequence of prices such that for each $n\in\mathbb N$, $(x_n, p_n)$ is an equilibrium for $\mathcal E_n$. Let $(A, \mathcal A, \nu)$ be an atomless saturated probability space and let $\mathcal E : (A, \mathcal A)\rightarrow (T, \mathcal B(T^\kappa))$ be a measurable function. If $(\nu_n\circ\mathcal E_n^{-1})$ converges weakly to $\nu\circ\mathcal E^{-1}$, $(\nu_n\circ x^{-1}_n)$ converges to some measure $\gamma\in\mathcal M\left(X\right)$ and $(p_n)$ converges to $p$, there exists a measurable mapping $x: (A, \mathcal A)\rightarrow (X, \mathcal B(X))$ such that $(x, p)$ is an equilibrium for $\mathcal E$ and $\nu\circ x^{-1} = \gamma$ under the assumptions of Proposition \ref{uhc}.
\end{theorem}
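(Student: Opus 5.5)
The plan is to pass from individualized economies to distributional ones, apply the closed-graph (upper hemicontinuity) property behind Proposition \ref{uhc}, and then return to an individualized allocation through saturation. For each $n$ set $\zeta_n := \nu_n\circ(\mathcal E_n, x_n)^{-1}\in\mathcal M(T^\kappa\times X)$. By Lemma \ref{wequivalence} (Point 1), $(\zeta_n, p_n)\in DE(\mu_n)$ with $\mu_n := \nu_n\circ\mathcal E_n^{-1}$.

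First I extract a limit allocation. By the compactness of $(\mathcal M(T^\kappa\times X),\rho)$ noted in Section \ref{continuity}, some subsequence $(\zeta_{n_k})$ converges weakly to a measure $\zeta$. Marginal maps are continuous for the weak topology, since projections are continuous. Hence
\[
\zeta_T=\lim_k (\zeta_{n_k})_T=\lim_k \mu_{n_k}=\nu\circ\mathcal E^{-1}
\]
and $\zeta_X=\lim_k \nu_{n_k}\circ x_{n_k}^{-1}=\gamma$. Moreover $p_{n_k}\to p$ in $S^{\kappa'}$.

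Next I apply upper hemicontinuity. Proposition \ref{uhc} says $DE$ is upper hemicontinuous with compact values into a compact metric range, so its graph is closed. Since $(\mu_{n_k},\zeta_{n_k},p_{n_k})\to(\nu\circ\mathcal E^{-1},\zeta,p)$ with $(\zeta_{n_k},p_{n_k})\in DE(\mu_{n_k})$, we get $(\zeta,p)\in DE(\nu\circ\mathcal E^{-1})$. This is the step where all the work of Proposition \ref{uhc} is used: (EP), (IPVE), (PR), (PC) and (SB) are needed to show that $\zeta(E_p)=1$ and that the integral condition passes to the limit. The main delicate point is checking that closedness of the graph, rather than just upper hemicontinuity at points in $\mathbb E$, is available, including the possibility that the limit economy is not yet known to lie in $\mathbb E$. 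I would obtain this from the standard fact that an upper hemicontinuous, closed-valued correspondence into a compact metric space has closed graph. Alternatively, I would invoke the sequential argument inside the proof of Proposition \ref{uhc} directly, which is essentially what Proposition \ref{wclosenessindiv} records.

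Finally I individualize the allocation. Because $(A,\mathcal A,\nu)$ is saturated and $\zeta_T=\nu\circ\mathcal E^{-1}$, the saturation property gives a measurable $x:A\to X$ with $\nu\circ(\mathcal E,x)^{-1}=\zeta$. In particular $\nu\circ x^{-1}=\zeta_X=\gamma$. It remains to check that $(x,p)$ is an individualized equilibrium.
\begin{itemize}
\item[$\cdot$] Individual optimality: $\nu(\{a:(\mathcal E(a),x(a))\in E_p\})=\zeta(E_p)=1$, so almost every agent demands $x(a)$ at $p$.
\item[$\cdot$] Feasibility: by the change of variables formula (Corollary \ref{chvariable}, or its Gel'fand analogue in Appendix \ref{measurability}), $\int_A x\,d\nu=\int_X\iota\,d\zeta_X=\int_\Omega\iota\,d\zeta_\Omega=\int_A e_{\mathcal E}\,d\nu$.
\end{itemize}
Hence $(x,p)$ is an equilibrium for $\mathcal E$. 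Measurability of $E_p$, which is needed to make sense of $\zeta(E_p)$, is taken from the earlier analysis.
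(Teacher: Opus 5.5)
Your proposal is correct and follows the paper's proof essentially step by step. You pass to $\zeta_n=\nu_n\circ(\mathcal E_n,x_n)^{-1}\in DE(\nu_n\circ\mathcal E_n^{-1})$ via Lemma \ref{wequivalence}(1), extract a convergent subsequence whose marginals are $\nu\circ\mathcal E^{-1}$ and $\gamma$, use the sequential closed-graph argument of Proposition \ref{uhc} to get $(\zeta,p)\in DE(\nu\circ\mathcal E^{-1})$, and realize $\zeta$ as $\nu\circ(\mathcal E,x)^{-1}$ by saturation.

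The differences are minor. You get the subsequence from the compactness of $(\mathcal M(T^\kappa\times X),\rho)$ rather than from Theorem 2.1(iii) of \cite{keisler_why_2009}. You also check the individualized equilibrium conditions for the given $\mathcal E$ directly (optimality from $\zeta(E_p)=1$, feasibility from change of variables) instead of citing Lemma \ref{wequivalence}(2). Since Lemma \ref{wequivalence}(2) constructs new maps $\mathcal E'$ and $x'$ rather than addressing the given $\mathcal E$, your direct check is if anything the more accurate justification.
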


\section{Applications.}\label{applications}

\ 

In this section, we apply the main results to the problem of essential stability in some of the most widely studied economic models. We emphasize that our continuity results are non-trivial, provided the existence of equilibrium is guaranteed. As we shall illustrate, certain existence theorems define economies with consumption sets equal to $L_+$, so that the admissible preference relations are of the form $\succ\subset L_+\times L_+$. At first glance, this may hinder the direct application of Theorem \ref{contDW} to such economies. However, whenever the equilibrium is supported on a weakly compact subset $X \subset L_+$, the economy induces a new one, preserving the same equilibrium, but with consumption set $X$ and preference relation on it, thereby satisfying the assumptions of Theorem \ref{contDW}. 

By Lemma \ref{inducedpreference} we get:

 \begin{proposition}\label{inducedequilibrium} Let $(A, \mathcal A, \nu)$ be a finite measure space and let $\mathcal E: (A, \mathcal A)\rightarrow \left(\mathcal P_{L_+} \times \Omega, \mathcal B(P_{L_+} \times \Omega^\kappa)\right)$ be an individualized economy. Let $x: (A, \mathcal A) \rightarrow (X\times\mathcal B(X))$ be a feasible allocation such that $X \subset L_+$ is $w$-compact and Polish, and let $p \in L'_+$. If $(x, p)$ is an equilibrium for $\mathcal E$, then there exists and economy $\mathcal E_X: (A, \mathcal A) \rightarrow \left(\mathcal P_X \times \Omega^\kappa, \mathcal B(\mathcal P_X \times \Omega^\kappa)\right)$ such that $(x, p)$ is also an equilibrium for $\mathcal E_X$.
\end{proposition}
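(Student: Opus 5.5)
The natural construction is $\mathcal E_X := (R\times \mathrm{id}_\Omega)\circ\mathcal E$, where $R$ is the restriction map of Lemma \ref{inducedpreference}. Writing $\mathcal E(a) = (\succ_a, e(a))$, we set $\mathcal E_X(a) = (\succ_a\cap(X\times X),\, e(a))$. Three things must then be checked: that $\mathcal E_X$ is a measurable map into $\mathcal P_X\times\Omega^\kappa$, that it is an individualized economy with the same endowment map, and that $(x,p)$ satisfies the individualized equilibrium conditions for $\mathcal E_X$.

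\emph{Measurability.} By Lemma \ref{inducedpreference}, $R:\mathcal P_{L_+}\to\mathcal P_X$ is Borel measurable, so $R\times\mathrm{id}_\Omega$ is measurable with respect to the product $\sigma$-algebras $\mathcal B(\mathcal P_{L_+})\otimes\mathcal B(\Omega^\kappa)$ and $\mathcal B(\mathcal P_X)\otimes\mathcal B(\Omega^\kappa)$. I expect this to be the main (technical) obstacle. To pass from product $\sigma$-algebras to the Borel $\sigma$-algebras of the product spaces, I would use that $\mathcal P_X\subset\mathcal C_w(X\times X)$ is separable metrizable and that $\Omega^\kappa$ is Polish. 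Then $\mathcal B(\mathcal P_X\times\Omega^\kappa)=\mathcal B(\mathcal P_X)\otimes\mathcal B(\Omega^\kappa)$, since both factors are second countable. On the domain side, $\mathcal B(\mathcal P_{L_+}\times\Omega^\kappa)\supseteq\mathcal B(\mathcal P_{L_+})\otimes\mathcal B(\Omega^\kappa)$ always holds, and that inclusion is the direction we need. Composing with the measurable $\mathcal E$ then gives that $\mathcal E_X$ is $(\mathcal A,\mathcal B(\mathcal P_X\times\Omega^\kappa))$-measurable.

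\emph{Same endowments.} Since $\mathrm{Proj}_\Omega\mathcal E_X=\mathrm{Proj}_\Omega\mathcal E=e_{\mathcal E}$, the endowment integral, and hence the finiteness condition in the definition of individualized representation, are unchanged. Feasibility of $x$ also carries over verbatim, because it only involves $x$ and $e_{\mathcal E}$.

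\emph{Equilibrium conditions.} For almost every $a$ we have $p\cdot x(a)\le p\cdot e(a)$, which again does not involve preferences. For the optimality clause, take $v\in X$ with $v\,(\succ_a\cap(X\times X))\,x(a)$. Since $x(a)\in X$, this means $v\succ_a x(a)$ in the original relation on $L_+$. Because $(x,p)$ is an equilibrium for $\mathcal E$, it follows that $p\cdot v>p\cdot e(a)$. The induced preference on $X$ only ranks fewer alternatives, so the demand condition is inherited. Hence $x(a)\in D(R(\succ_a),e(a),p)$ for almost every $a$, and $(x,p)$ is an equilibrium for $\mathcal E_X$.
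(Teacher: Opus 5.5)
Your proposal is correct and follows essentially the same route as the paper's proof: define $\mathcal E_X=(R,\iota)\circ\mathcal E$, get measurability from Lemma \ref{inducedpreference}, note that $\mathrm{Proj}_\Omega\mathcal E_X=\mathrm{Proj}_\Omega\mathcal E$, and observe that the budget and optimality conditions carry over because $v\,(\succ_a\cap(X\times X))\,x(a)$ implies $v\succ_a x(a)$. The paper simply asserts that $\mathcal E_X$ is measurable; your extra step fills in that detail correctly, by identifying $\mathcal B(\mathcal P_X\times\Omega^\kappa)$ with $\mathcal B(\mathcal P_X)\otimes\mathcal B(\Omega^\kappa)$ via second countability and using only the inclusion $\mathcal B(\mathcal P_{L_+})\otimes\mathcal B(\Omega^\kappa)\subseteq\mathcal B(\mathcal P_{L_+}\times\Omega^\kappa)$ on the domain side.
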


The purpose of Proposition \ref{inducedequilibrium} is to show that, even when an economy $\mu$ has support in $\mathcal{P}_{L_+} \times \Omega$, if its equilibrium allocation lies in $\mathcal{M}\left(\mathcal{P}_{L_+} \times \Omega^\kappa \times X\right)$ for some weakly compact subset $X \subset L_+$ (as often occurs), then the set $\mathbb{E} \subset \mathcal{M}\left(\mathcal{P}_X \times \Omega^\kappa\right)$ is nonempty. This justifies studying the continuity of equilibria in $\mathcal{M}\left(\mathcal{P}_X \times \Omega^\kappa \times X\right)$ for economies supported in $\mathcal{P}_{L_+} \times \Omega^\kappa$.

\subsection{Spaces of Bochner economies.}\label{Bec}

\begin{enumerate}

 \item \textbf{Neoclassical economies}. Let $L_+ = C(K)_+$, where $K$ is a compact subset of $\mathbb R^{\ell}$. Consider $X \subset C(K)_+$ as a convex and weakly-compact set that includes $0$ as a lower bound, $\hat{x}$ as an upper bound, and $\xi \in \text{int} C(K)_+$. We assume that for all $x \in X$ such that $x \neq \hat{x}$, $\hat{x} - x \in \text{int} C(K)_+$\footnote{See the example in \cite{noguchi_economies_2000}}. Moreover, $X^{\|\cdot\|}$ is $\|\cdot\|$-Polish, where $\|\cdot\|$ denotes the sup norm $\|\cdot\|_\infty$, because $X$ is convex and $\left(C(K), \|\cdot\|\right)$ is Polish. Let us consider the set of $w$-continuous, strictly increasing, and concave real functions on $X$, representing neoclassical preferences and let $\mathcal{P}_X$ be a compact subset of that. Additionally, let $\Omega$ be a norm-compact subset of $X$ bounded below by $\alpha \xi$, where $\alpha > 0$ is a parameter close to zero. Price set is $S = \{p\in L_+^* : p\cdot\xi = 1\}$ which is $w' = \sigma(C(K)^*, C(K))$-compact (\cite{jameson_ordered_1970}, Theorem 3.8.6, p. 123) and metrizable since $L$ is separable (\cite{dunford_linear_1958}, Theorem V.1, p. 426). Therefore, the set of economies \( \mathcal{M}(T^{\|\cdot\|}) \) consists of Bochner economies, and together with the price simplex \( S^{w'} \), they satisfy condition (EP).

By invoking the Main Theorem in \cite{noguchi_economies_2000} and combining it with Lemma \ref{wequivalence}(1), we conclude that $\mathbb{E} \subset \mathcal{M}\left(T^{\|\cdot\|}\right)$ is nonempty. Therefore, $DE: \mathbb{E} \twoheadrightarrow \mathcal{M}\left(T^{\|\cdot\|} \times X\right)\times S$ is a nonempty, $\left(\rho_{T}, \rho\times w'\right)$-upper hemi-continuous correspondence by Proposition \ref{uhc}. Hence, equilibria in a dense residual subset $\mathbb{E}'$ of $\mathbb{E}$ are $\left(\rho_{T}, \rho\times w'\right)$-continuous (Theorem \ref{contDW}). For further insights, we refer to Example 7.3 in \cite{cea_continuity_2024} regarding this space of economies.

\ 

\item \textbf{Neoclassical economies with separable utilities}. 

Consider neoclassical economies with the same consumption set as in the previous item. Let us consider norm separable utilities, so $\mathcal{P}_X$ is a compact space of $w$-continuous, strictly increasing, strictly quasi-concave, and separable utilities on $X$. The set of initial endowments $\Omega$ is now a $w$-compact subset of $X$ (for example, $\Omega = \{x \in X : x \ge \alpha \xi\}$ where $\alpha > 0$ is the same as previously). The characteristic set is $T^w = \mathcal{P}_X \times \Omega^w$, making every probability distribution on it a neoclassical exchange Bochner ($w$-)economy.

Prices should lie in the positive cone of the dual space of $C(K)$. However, with separable utilities, the set of prices that support equilibria belong to a $\|\cdot\|'$-compact subset of it (see \cite{covarrubias_number_2013}). Hence, these economies satisfy Assumptions (PR), (PC), (IPVE), (EP) and (SB). Consequently, all continuity results from Section \ref{continuity} apply whenever $\mathbb{E} \neq \emptyset$.

\ 

  \item \textbf{Allocations of non-renewable resources over an infinite time horizon (I)}.
 
 A suitable space for modeling non-renewable resources over time is that of null sequences $c_0 = \{(y_n) : \text{lim}_n y_n = 0\}$, since these resources tend to zero in the long run. The space $c_0$ under the sup norm is a separable Banach lattice with an order continuous norm (\cite{aliprantis_infinite_2006}, p. 529; \cite{godefroy_banach_2001}, p. 2). The topological dual of $c_0$ is $\ell_1$.

Consider the space of preferences $\mathcal P_{c_0^+}$. Choose $\xi \in c_0$ such that $\xi_i > 0$ for each $i$. Let $\Omega$ be an order interval bounded below by $\xi$. Hence, $\Omega$ is $w = \sigma(c_0, \ell_1)$-compact given the symmetry of $\langle c_0, \ell_1\rangle$ (\cite{aliprantis_infinite_2006}, Theorem 8.60 (3), p. 344) and metrizable by \cite{dunford_linear_1958}, Theorem 2, p. 426. Under further assumptions, Theorem 6 in \cite{zame_markets_1986} together with Lemma \ref{wequivalence}(1) allow us to assert that there exist economies in the set $\mathcal{M}\left(\mathcal P_{c_0^+} \times \Omega^w\right)$ that have equilibria in $\mathcal{M}\left(\left(\mathcal P_{c_0^+} \times \Omega^w\right) \times X\right)\times S$, where $X$ is a $w$-compact subset of $c_0^+$  and $S$ is the $w' = \sigma(\ell_1, \ell_\infty)$-compact subset of $\ell_1$ given by an specific order interval (see Proof of Theorem 6 on pp. 10-22 - 10-26.). Actually, $S$ is $\|\cdot\|_1$-compact since $\ell_1$ has the Schur property. The set $X$ is metrizable for the same reason as $\Omega$.

Take the set of preferences $\succ_X$ that are the preorders on $X$ induced by $ \succ$ on $c_0^+$. Let $\mathcal{P}_X$ be a compact subset of such as preorders satisfying (PR). By Proposition \ref{inducedequilibrium} we infer the existence of $\mu \in \mathcal{M}(\mathcal P_X\times\Omega^{w})$ with corresponding equilibria in $\mathcal{M}(\mathcal P_X\times\Omega^{w}\times X)\times S$ so $\mathbb E\neq\emptyset$. By assuming condition (SB) (by the choice of $\Omega$, Assumption (IPVE) holds), the correspondence $DE: \mathbb{E} \twoheadrightarrow \mathcal{M}(T^{w} \times X)\times S$ is $\left(\rho_T, \rho\times w'\right)$-upper hemi-continuous by Proposition \ref{uhc} and continuous in a dense residual subset of $\mathbb{E}$ by Theorem \ref{contDW}.

\begin{example}
 Let $b > > 0$ and let be the set $\mathcal P_{[0, b]}\times \Omega = \left([0, b], \left(u_i\right)_{i = 1}^n\right)\times [\xi, b]$ for $0 <<\xi <  b$ and utility functions $u_i$ on $c_0^+$ restricted to $ X = [0, b]$ that are $w$-continuous, strictly increasing and concave. We note that both $ X$ and $\Omega$ are $w = \sigma(c_0, \ell_1)$-compact. Prices belong to an order interval $[a\gamma, \ b\gamma]$ where $\gamma$ is a strictly positive vector in $\ell_1^+$ and $a$ is a strictly positive real number. We note that such an order interval is $\|\cdot\|$-compact. One easily checks that $T^w = \mathcal P_{[0, b]}\times \Omega^w$ satisfies \textnormal{(EP)}.
    
 By considering the assumptions of Theorem 1 of \cite{zame_markets_1986}, we may deduce the existence of a nonempty subset $\mathbb E$ of $\mathcal M\left( T^w\right)$ where each economy $\mu $ has equilibria in $\mathcal M\left( T^w \times X\right)\times S$. Assumptions \textnormal{(PR)}, \textnormal{(PC)}, \textnormal{(EP)}, \textnormal{(IPVE)} and \textnormal{(SB)} hold. Then, there is a dense subset of $\mathbb E$ whose equilibria are continuous. 
\end{example}

  \ 
  
 \item \textbf{Allocations of non-renewable resources over an infinite time horizon (II).}

We retain the same model as before, but now consider an endowment set that is $\|\cdot\| = \|\cdot\|_\infty$-compact and a price simplex $S = \left\{p\in\ell_1:\|p\|_1\le 1\right\}$ that is weak$^*$-compact by Alaoglu's Theorem since $\ell_1 = c_0^*$. Therefore, the economies in $\mathcal{M}\left(T^{\|\cdot\|}\right)$ are Bochner economies, and the set satisfies Assumption (EP). 

 \begin{example}
Consider the characteristic set $T = \left(([0, b], \left(u_i\right)_{i = 1}^n\right)\times \Omega^{\|\cdot\|}$ for $\Omega = \left\{x\in\mathbb R^{\mathbb N}: a_n\le x_n \le b_n\right\}$ given $(a_n)_{n\in\mathbb N}$ and $(b_n)_{n\in\mathbb N}$ in $\mathbb R^{\mathbb N}$. The latter is a compact subset of $(c_0, \|\cdot\|_\infty)$. 
\end{example}

\ 

 \item \textbf{Allocations of non-renewable resources over an infinite time horizon (III).}

Let $\ell_1$ denote the space of positive and summable sequences, which is suitable for modeling non-renewable resources such as oil where the total sum available over time is bounded. The norm of any $x \in \ell_1$ is $\|x\|_1 = \sum_{n=1}^\infty |x_n| < \infty$. The space $\ell_1$ is a separable Banach lattice with order continuum norm (\cite{aliprantis_infinite_2006}, Theorem 16.25, p. 537). The dual space of $\ell_1$ is $\ell_\infty$, and the weak topology on $\ell_1$ is denoted by $w = \sigma(\ell_1, \ell_\infty)$.

Consider economies in $\mathcal{M}\left(\mathcal P_{\ell_1^+} \times \Omega\right)$ where $\Omega = [a, b]$ with $a$ and $b$ strictly positive vectors in $\ell_1^+$ such that the latter is an upper bound for total initial endowments. Under suitable conditions, Theorem 6 of \cite{zame_markets_1986}\footnote{We may use also Theorem 4 in the cited paper.} and Lemma~\ref{wequivalence}~(1) imply that there exist probability measures on $\left(\mathcal P_{\ell_1^+} \times \Omega\right)$ having equilibria in $\mathcal{M}\left(\mathcal P_{\ell_1^+} \times \Omega \times X\right)$, where $X $ is an order interval in $\ell_1^+$, bounded below by $0$ and $\Omega\subset X$. Since $\langle \ell_1, \ell_\infty \rangle$ is symmetric, $X$ is $w = \sigma(\ell_1, \ell_\infty)$-compact and according to \cite{aliprantis_infinite_2006}, Theorem 16.24, p. 537, it is $\|\cdot\| = \|\cdot\|_1$-compact and hence $\|\cdot\|_1$-Polish.\footnote{The relatively compactness of the set $X$ is discussed in pp. 6-1 and 6-2 of \cite{zame_markets_1986}.} Consequently, the same applies for $\Omega$.

Without loss of generality, let $\mathcal P_X$ be the set of preferences on $X$ induced by $\succ$ in $\mathcal P_{\ell_1^+}$ which is assumed compact and satisfying (PR). The set of characteristics is then $T^{\|\cdot\|} = \mathcal{P}_X \times \Omega^{\|\cdot\|}$, where economies $\mu$ in $\mathcal{M}\left(T^{\|\cdot\|}\right)$ are Bochner economies.

The set of prices is an order interval in \( S^\infty = \{ p \in \ell_\infty^+ : \|p\|_\infty \le 1 \} \), which is \( w' = \sigma(\ell_\infty, \ell_1) \)-compact by the Alaoglu Theorem. Since \( \ell_1 \) is separable, \( S^\infty \) is metrizable under \( w' \).

From Proposition \ref{inducedequilibrium}, we deduce that the set of equilibria \( \mathbb{E} \subset \mathcal{M}\left(T^{\|\cdot\|}\right) \) is nonempty. Condition (IPVE) holds because \( \Omega \) is bounded below by a quasi-interior point of \( \ell_1^+ \). All continuity results hold in a dense subset of \( \mathbb{E} \) if condition (SB) holds.

\begin{example}
    Let $X = [0, b]\subset \ell_1^+$ be the common consumption set for $b = (b_n) > 0$. Let $\Omega = [a, b]$ for $0 < a < b$, where $a = (a_n)$ is a quasi-interior point which contains vectors such as $e = a + \left((b_n - a_n)n^{-1}\right)$. The space of preferences is $\mathcal P_{[0, b]} = \left([0, b], \textnormal{co}\left\{u_1, u_2\right\}\right)$ for $u_1(x) = \sum_{n = 1}^\infty x_n = \|x\|_1$ and $u_2(x) = \sqrt{\|x\|_1}$. We note that both functions on $X$ are $\sigma(\ell_1, \ell_\infty)$-continuous, strictly increasing and concave. Hence, we deduce that assumptions \textnormal{(PR)} and \textnormal{(PC)} hold. Assumption \textnormal{(IPVE)} is also satisfied because $p\cdot a > 0$ for every $p\in S^\infty$. The only satiated consumption bundle is $b$, whence we deduce that Assumption \textnormal{(SB)} also holds. If $\mathbb E\neq\emptyset$, $DE$ is continuous on a dense residual subset of $\mathbb E$.
\end{example}

\ 

\item\textbf{Reflexive economies}

Examples of reflexive Banach spaces include $\mathscr L_p(M, \mathscr M, \nu) $, where $ (M, \mathscr M, \nu) $ is a measure space. In fact, these spaces are Banach lattices according to the Riesz-Fischer theorem (\cite{aliprantis_infinite_2006}, p. 464) for $ 1 \leq p \leq \infty $. For $ 1 \leq p < \infty $, $\mathscr L_p(M, \mathscr M, \nu) $ is separable (\cite{luenberger_optimization_1969}, Example 4, p. 43), and for $ 1 < p < \infty $, it is also reflexive (\cite{aliprantis_infinite_2006}, Corollary 13.27, p. 473).

Another example is the sequence spaces $ \ell_p $. These are Banach lattices (\cite{aliprantis_infinite_2006}, Theorem 16.17, p. 534) for $ 1 \leq p \leq \infty $, separable for $ 1 \leq p < \infty $ (\cite{aliprantis_infinite_2006}, Theorem 16.19, p. 535), and reflexive for $ 1 < p < \infty $ (\cite{aliprantis_infinite_2006}, Corollary 16.21, p. 537).

Both, $\mathscr L_p$ and $\ell_p$, are widely used in economic theory. Depending on the context, they allow to model uncertainty, financial shocks, non-renewable resources, etc. We start with the commodity space $\ell_p^+$ and consider the vectors $a$ and $b$ in $\ell_p^{++}$ such that $a < b$.  Consider the economies in the set $\mathcal{M}\left(\mathcal P_{\ell_p^{+}}\times \Omega\right)$ where $\Omega\subset [a, b]$ is $\|\cdot\| = \|\cdot\|_p$-closed. By appealing again to Theorem 6 in \cite{zame_markets_1986} and Lemma \ref{wequivalence} (1), we find (by considering further assumptions) that there exists a subset of the latter that has all their equilibria in the set $\mathcal{M}\left(\mathcal P_{\ell_p^{+}}\times \Omega\times X\right)\times S$ for $X = [0, b]\subset\ell_p^{+}$ and $S$ a $w'$-compact subset of the dual $\ell_q$ where $\frac{1}{p} + \frac{1}{q} = 1$. 

Let $\mathcal P_{X}$ as above, satisfying (PR) and (PC). We note that both $\Omega$ and $X$ are relatively $\|\cdot\|$-compact since they are subsets of order intervals and all order intervals in $\ell_p$ are $\|\cdot\|$-compact for $1 < p < \infty$. Notice that $\Omega^{\|\cdot\|}$ is Polish since $(\ell_p, {\|\cdot\|})$ is separable (see \cite{fristedt_modern_1997}, Proposition 3, p. 350) and thus, the set of characteristics $T^{\|\cdot\|} = \mathcal P_X\times\Omega^{\|\cdot\|}$ is compact polish and the economies in $\mathcal{M}\left(T^{\|\cdot\|}\right)$ are Bochner. 

Clearly, $S$ is $w'$-metrizable and the economies in $\mathcal{M}\left(T^{\|\cdot\|}\right)$ satisfy (EP). Let $\mathbb E$ be the subset of $\mathcal{M}\left(T^{\|\cdot\|}\right)$ having equilibria in $\mathcal{M}\left(\mathcal P_X\times \Omega\times X\right)\times S$. Due to the above remarks, we know that $\mathbb E$ is nonempty. Therefore, there exists a dense residual subset $\mathbb E'$ of $\mathbb E$ where the equilibrium correspondence $DE:\mathbb E\twoheadrightarrow \mathcal{M}\left(\mathcal P_X\times \Omega\times X\right)\times S$ is continuous under (SB).
\end{enumerate}

 Now we move on to the reflexive spaces $\mathscr L_p$ ($1 < p < \infty$). By repeating the same steps as in $\ell_p$, we conclude that there exist order intervals $[a, b]$ and $X = [0, b]$ in $\mathscr L^+_p$ such that there is a subset of $\mathcal M\left(\mathcal P_{\mathscr L_p^+}\times [a, b]\right)$ with equilibria in $\mathcal M\left(\mathcal P_{\mathscr L_p^+}\times [a, b]\times X\right)\times S$ for $S$ a $w' = \sigma (\mathscr L_q, \mathscr L_p)$-compact subset of $\mathscr L_q$.
 
 The sets $[a, b]$ and $X$ are compact with respect to the weak topology $w = \sigma(\mathscr L_p, \mathscr L_q)$, but not with respect to the norm topology $\|\cdot\| = \|\cdot\|_p$. This contrasts with the previous case. Consequently, in order to study the continuity of the equilibrium correspondence, we reduce the set of economies under consideration to a subset of $\mathcal M\left(\mathcal P_{X}\times  \Omega\right)$ where $\Omega$ is a $\|\cdot\|$-compact subset of $[a, b]$\footnote{The existence of such a set $\Omega$ may depend on the underlying measure space $(M, \mathscr M, \nu)$. For example, if $M = \mathbb R^n$ and $\mathscr M = \mathcal B(\mathbb R^n)$ so that $(M, \mathscr M, \nu)$ is the Lebesgue measure space on $\mathbb R^n$, $\Omega$ is compact if and only if it satisfies the properties of $\|\cdot\|$-boundedness, \textit{equicontinuity} and \textit{equivanishing} by Kolmogorov-Riesz compactness Theorem. More generally, if $M$ is a locally compact Hausdoff group satisfying these conditions, $\Omega$ is compact for the norm topology by Weil Theorem. We refer the reader to \cite{krukowski_notes_2023} for a comprehensive survey of these results, including relevant extensions and precise definitions. A more economical discussion about the norm compactness of the set of initial endowments is given in \cite{zame_markets_1986}, Sections 6 and 7.} and $\mathcal{P}_{X}$ is defined as before throughout this section, whence satisfying (PR) and (PC) (condition (IPVE) also holds). Since $S$ is metrizable, the space of Bochner economies also satisfies Assumption \textnormal{(EP)}.
 
By the preceding considerations, there exists a set $\mathbb{E} \subset \mathcal{M}\left(\mathcal{P}_{X} \times \Omega\right)$ such that $\emptyset \neq DE(\mu) \subset \mathcal{M}\left(\mathcal{P}_{X}\times \Omega\right) \times S$ for every $\mu \in \mathbb{E}$. Moreover, by Theorem~\ref{contDW}, the correspondence $DE$ is $\left(\rho_T, \rho\times w'\right)$-continuous on a dense residual subset $\mathbb{E}' \subset \mathbb{E}$ if condition (SB) holds.

We conclude this section by noting additional applications of spaces of Bochner economies, specifically those involving \emph{continuous time horizons} and \emph{renewable resources over an infinite discrete time horizon}. The former utilizes the space $\mathscr L_\infty$, while the latter employs the sequence space $\ell_\infty$. Unlike the respective cases we shall explore in the next section, it is necessary to define commodity spaces as subspaces that are weakly compactly generated, i.e., they are the span of a weakly compact subset of $\mathscr L_\infty$ and $\ell_\infty$ respectively. For a more detailed explanation, refer to \cite{cea_continuity_2024}, p. 63, and for specific applications, see p. 67 in the cited reference.

\subsection{Gel'fand economies.}\label{GEc}

\begin{enumerate}

\item \textbf{Commodity differentiation. }\label{comdifferentiation}
    
Let $K$ be a compact metric space. We denote by $\mathcal{M}(K)$ the set of all finite signed Borel measures on $K$. Note that both $\mathcal{M}(K)_+$ and $C(K)_+$ are vector lattices. If $x \in \mathcal{M}(K)$, then $\|x\|$ denotes the variation norm of $x$, i.e., $\|x\| := |x|(K)$. When endowing $C(K)$ with the sup-norm topology, $\mathcal{M}(K)$ is isometrically isomorphic to the topological dual space of $C(K)$. We adopt the notation $\mathcal{M}(K) = C(K)^*$. At the same time, $C(K)$ is a subspace of $C(K)^{**} = \mathcal{M}(K)^*$. Thus, $\langle \mathcal{M}(K), C(K) \rangle$ represents the commodity-price pairing in this model.

Given the duality between $\mathcal{M}(K)$ and $C(K)$, the topology $w$ on $\mathcal{M}(K)$ will be the weak$^*$ topology $\sigma(\mathcal{M}(K), C(K))$. Thus, a sequence $(x_n)$ in $\mathcal{M}(K)$ $w$-converges to $x$ in $\mathcal{M}(K)$ if and only if $p \cdot x_n$ converges to $p \cdot x$ for all $p \in C(K)$, which is equivalent to saying that $\int_K p\cdot k dx_n$ converges to $\int_K p\cdot k dx$.

One can interpret each point in $K$ as representing a complete description of all relevant characteristics of one unit of a certain commodity. The commodity space is $\mathcal M(K)^w_+$ which is  separable and completely metrizable since $K$ is compact (\cite{varadajan_convergence_1958}, Theorem 3.4, p. 20). Let $\Omega \subset \mathcal{M}(K)_+$ be a $w = \sigma(\mathcal{M}(K), C(K))$-compact Polish subset. By adding additional assumptions,  Jones \cite{jones_existence_1983} proves the existence of distributional equilibria in $ S\times\mathcal M\left (\mathcal P'\times \Omega^w\times X\right )$ where $\mathcal P'$ is a compact subset of $\mathcal P_{\mathcal M(K)^+}$ and $\Omega\subset X$. The set $X$ is $w$-compact (\cite{jones_existence_1983} p. 135). In turn, $S = \left\{p\in C_+(K): \|p\|_\infty = 1\right\}$ is $w' = \sigma\left(C(K), \mathcal M(K)\right)$-compact. One of the assumptions in Jones \cite{jones_existence_1983} is that of \textit{uniform household smoothness} (UHS, for short).

\textbf{Assumption (UHS)}. For all $\theta > 1$, there is an $\xi > 0$ such that for all $\eta > 0$, for all $x\in X$, for all $\succ\in\mathcal P_{\mathcal M(K)^+}$, and for all $s, t\in K$ with $d_K(s, t) < \xi$, $$ x + \eta\theta\delta_t \succ x + \eta\delta_s$$

Where $d_K(\cdot, \cdot)$ is the metric on $K\times K$. We refer to \cite{jones_existence_1983} for the precise meaning of (UHS) on preferences. In a nutshell, this implies that in equilibrium, commodities with similar descriptions have similar prices. 

\begin{lemma}\label{uhs} Under Assumption \textnormal{(UHS)}, any sequence $(p_n)$ of price equilibria, $\|\cdot\|_\infty$-converges to $p$ in $S$. 
\end{lemma}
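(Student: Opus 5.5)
The plan is to show that the set of equilibrium prices is uniformly equicontinuous on $K$. Then an Arzel\`a--Ascoli argument upgrades the $w'=\sigma(C(K),\mathcal M(K))$-convergence of $(p_n)$ to $p$ into $\|\cdot\|_\infty$-convergence. Concretely, I would first extract from (UHS) a modulus of continuity valid for every equilibrium price. Fix $\theta>1$ and take the $\xi>0$ given by (UHS). Let $p$ be an equilibrium price supporting an equilibrium $\zeta$, and pick a consumer type $(\succ,e)$ with demand $x\in D(\succ,e,p)$. Such a type exists $\zeta$-a.e., by $\zeta(E_p)=1$ and (IPVE), so that $p\cdot e>0$ on a set of full measure. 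Suppose $x$ contains a positive amount of commodity near $s$. Following Jones, I would perturb $x$ by removing mass $\eta\delta_s$ and adding $\eta\theta\delta_t$ for $d_K(s,t)<\xi$. By (UHS) the new bundle is strictly preferred, so by the demand property it must be strictly more expensive. This yields $\theta p(t) > p(s)$, i.e.\ $p(s)<\theta p(t)$, and by symmetry $p(t)<\theta p(s)$. Hence $|p(s)-p(t)|\le (\theta-1)\|p\|_\infty=(\theta-1)$ whenever $d_K(s,t)<\xi$, uniformly over all equilibrium prices in $S$.

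The main obstacle is this first step. One must justify that the perturbation $x-\eta\delta_s+\eta\theta\delta_t$ (or, in the formulation of (UHS), the comparison $x+\eta\theta\delta_t\succ x+\eta\delta_s$ with the base bundle chosen so that $x+\eta\delta_s$ is the demanded bundle) remains in $X$. One must also justify that some agent actually consumes commodity at $s$ in positive amount, or else derive the inequality from market clearing. I would follow Jones \cite{jones_existence_1983}, p.~135 ff. The idea is that market clearing and positivity of aggregate endowment allow one to find, for each $s$ in the support of the aggregate endowment, a positive-measure set of agents consuming near $s$. A limiting argument with $\eta\downarrow 0$, together with continuity of $p$, then yields the pointwise inequality. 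If the support of the aggregate endowment is not all of $K$, I would use the normalization and the inequality on a dense set and extend by continuity of $p\in C(K)$. Since $\theta>1$ is arbitrary, the family of equilibrium prices is equicontinuous; it is also uniformly bounded because $\|p\|_\infty=1$.

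Given equicontinuity, the rest is routine. By Arzel\`a--Ascoli, the set $\mathcal S^*$ of equilibrium prices is relatively $\|\cdot\|_\infty$-compact in $C(K)$. Let $(p_n)$ in $\mathcal S^*$ converge to $p\in S$ in the given topology on $S$ (at least $w'$, hence pointwise, since $\delta_k\in\mathcal M(K)$). Every subsequence has a further subsequence converging in norm to some $q$. Norm convergence implies pointwise convergence, so $q=p$. Hence the whole sequence $\|\cdot\|_\infty$-converges to $p$, and $\|p\|_\infty=1$ is preserved, so $p\in S$. I would note in passing that this gives Assumption (EP) via Remark \ref{jointcont}, since $\kappa'=\|\cdot\|'$.
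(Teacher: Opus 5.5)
Your route is the paper's: equicontinuity of equilibrium prices from (UHS) by a Jones-type reallocation, then Arzel\`a--Ascoli. The one difference is that you derive a uniform modulus price by price, whereas the paper argues by contradiction along the sequence. Your closing step, identifying every norm cluster point with the pointwise (hence $w'$) limit $p$, is more careful than the paper, which simply asserts that Arzel\`a--Ascoli gives uniform convergence to $p$.

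The first problem is in the equicontinuity step. (UHS) compares $x+\eta\delta_s$ with $x+\eta\theta\delta_t$, so it applies only to a demanded bundle with an atom at $s$. Knowing that a positive-measure set of agents has $y(B(s,r))>0$ gives no such atom, and letting $\eta\downarrow 0$ does not create one. The parameter to send to zero is $r$. What you actually need is that $z=y-y|_{B}+\theta\,y(B)\,\delta_t$, with $B=B(s,r)$ and $r<\xi-d_K(s,t)$, lies in $X$ and satisfies $z\succ y$. That is an extension of (UHS) from point masses to sub-measures carried within $\xi$ of $t$. You could try to get it by approximating $y|_B$ by finite sums of point masses and chaining (UHS) through transitivity, but strict preference is not preserved under weak$^*$ limits, so this needs a real argument. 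Granting it, the budget condition gives $\theta p(t)\,y(B)>\int_B p\,dy\ge y(B)\min_{B}p$. Letting $r\downarrow 0$ and using continuity of this single $p$ then yields $\theta p(t)\ge p(s)$. The paper's perturbation $x-x(B_k)\delta_{k^m}+\cdots$ has the same defect (it need not even be a positive measure), so the paper does not supply this piece either. As written, though, your step does not go through.

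The second problem is your fallback when the support of the aggregate endowment is not $K$. A support is closed, so in that case it is not dense and there is nothing to extend by continuity. Off the support, (UHS) only gives the lower bound $\theta p(t)\ge p(s)$ from agents consuming near $s$; nothing bounds $p(t)$ from above. In fact, raising prices on an open set carrying no aggregate endowment preserves the equilibrium. Almost every agent holds and consumes nothing there, so budgets and bundle values are unchanged, while strictly preferred bundles only become dearer. Hence narrow bumps placed off the support, keeping $\|\cdot\|_\infty=1$ and eventually missing every fixed point, give a $w'$-convergent sequence of equilibrium prices that does not converge in norm.

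Your symmetric inequality needs both $s$ and $t$ in the support, so you must assume that the aggregate endowment of every economy along the sequence has full support. The paper tacitly assumes positivity too, via $B_k$ with $\bigl(\int_I e\,d\lambda\bigr)(B_k)>0$, but only for the limit economy, transferring it to the $n_m$-th economies by weak convergence. That places consumption somewhere in a fixed $B_k$, not at $k^m$ or $k'^m$. It therefore cannot replace the per-economy hypothesis your argument requires.
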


Therefore, (UHS) implies $\textnormal{lim}_n p_n\cdot e_n = p\cdot e$ for any sequence $(p_n)$ of equilibrium prices in $S$ and any sequence of endowments $(e_n)$ in $\Omega$ (see Remark \ref{jointcont}). 

Let $\mathcal P_X\subset \mathcal P'$ denote a compact set of preferences induced on $X$ satisfying (PR). From the preceding results and Proposition \ref{inducedequilibrium}, together with the subsequent discussion on the Borel representability of the induced individualized economy, we infer the existence of a distributional equilibrium $(\zeta, p) \in \mathcal M\left(\mathcal P_X \times \Omega^w \times X\right)\times S$, thus ensuring that the set of equilibria $\mathbb E \subset \mathcal M\left(\mathcal P_X \times \Omega^w\right)$ is nonempty. We are now ready to state the analogue of Proposition \ref{uhc} in this Gel'fand setting.

\begin{proposition}\label{uhcCD} The correspondence $DE:\mathbb E\twoheadrightarrow \mathcal M\left(T^w\times X\right)$ is upper hemi-continuous if Assumption \textnormal{(SB)} holds.
\end{proposition}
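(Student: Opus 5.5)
The plan is to reduce Proposition \ref{uhcCD} to Proposition \ref{uhc} by checking that the commodity-differentiation setting satisfies every hypothesis Proposition \ref{uhc} uses. The only genuinely new ingredient is Assumption (EP), which is replaced here by the consequence of Lemma \ref{uhs}. Because the range $\mathcal M(T^w\times X)\times S$ is compact metrizable (Prohorov metric $\rho$ together with the metrizable $w'$-compact $S$), upper hemicontinuity with closed values is equivalent to closedness of the graph of $DE$ on $\mathbb E$. So I will take sequences $\mu_n\to\mu$ in $\mathbb E$ and $(\zeta_n,p_n)\in DE(\mu_n)$ with $(\zeta_n,p_n)\to(\zeta,p)$, and show $(\zeta,p)\in DE(\mu)$.

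The three conditions of Definition \ref{defeconomies} will be checked in order.

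\textbf{Condition 1.} The equality $\zeta_T=\mu$ follows because the marginal map is continuous for the weak topology of measures.

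\textbf{Condition 2.} For the feasibility condition I will use Gel'fand integrals, pairing with each $q\in C(K)$. The maps $x\mapsto q\cdot x$ on $X$ and $e\mapsto q\cdot e$ on $\Omega$ are $w$-continuous and bounded, since $X$ is norm bounded. Weak convergence of the marginals then gives $q\cdot\int_X\iota\,d\zeta_{n,X}\to q\cdot\int_X\iota\,d\zeta_X$, and the same for endowments. This yields equality of the Gel'fand integrals by Theorem \ref{Gel'fandintegral} and the appendix change of variables.

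\textbf{Condition 3.} This is $\zeta(E_p)=1$. I will follow the classical Hildenbrand argument via Skorokhod representation. Take random elements $(\succ_n,e_n,x_n)\to(\succ,e,x)$ almost surely on the Lebesgue interval, with $x_n\in D(\succ_n,e_n,p_n)$ a.s. For the budget constraint, I need $p_n\cdot x_n\to p\cdot x$ and $p_n\cdot e_n\to p\cdot e$. Here Lemma \ref{uhs} is essential: since each $p_n$ is an equilibrium price, $p_n\to p$ in $\|\cdot\|_\infty$. Combined with $w$-convergence of $x_n,e_n$ and norm boundedness of $X$, the estimate in Remark \ref{jointcont}, $|p_n\cdot x_n-p\cdot x|\le\|p_n-p\|_\infty\sup_X\|\cdot\|+|p\cdot(x_n-x)|$, gives joint continuity. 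This is exactly where (EP) was used in Proposition \ref{uhc}, and here it holds with equality.

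Next comes the preference-maximality part. Suppose $x'\succ x$ with $p\cdot x'\le p\cdot e$.
\begin{itemize}
\item If $x$ is not satiated, I will use (PR) to find $x''\succ x$ with $p\cdot x''<p\cdot e$. This needs $p\cdot e>0$, which follows from (IPVE) a.s., together with a scaling of $x'$ toward $0\in X$ via convexity and $w$-openness of $\succ$. By Lemma \ref{contpref}, $x''\succ_n x_n$ for large $n$, while $p_n\cdot x''<p_n\cdot e_n$ eventually. This contradicts $x_n\in D(\succ_n,e_n,p_n)$.
\item If $x$ is satiated, no $x'\succ x$ exists, so the demand condition holds vacuously. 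In that case (SB) gives $x\ge e$.
\end{itemize}
Finally, the integrated budget equality forces $p\cdot x=p\cdot e$ a.s., so $x\in D(\succ,e,p)$ a.s., and $\zeta(E_p)=1$.

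I expect the main obstacle to be the satiated case, together with verifying (IPVE) in $\mathcal M(K)_+$, where there are no quasi-interior points. Positivity of $p\cdot e$ will come from $p\in S\subset C_+(K)$ with $\|p\|_\infty=1$. The (UHS)-induced equicontinuity of equilibrium prices (the argument behind Lemma \ref{uhs}) should give a uniform positive lower bound for equilibrium $p$. I will try this first and then combine it with $e\neq0$ a.s.
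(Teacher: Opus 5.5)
Your proposal is correct and follows essentially the paper's proof: a closed-graph argument through the Skorokhod representation, with Lemma \ref{uhs} giving $\|p_n-p\|_\infty\to 0$ and hence $p_n\cdot x_n\to p\cdot x$ and $p_n\cdot e_n\to p\cdot e$ a.e.\ via the same norm estimate, after which the demand argument of Proposition \ref{uhc} goes through. Your two small deviations are harmless simplifications: you handle feasibility directly with the test functions $x\mapsto q\cdot x$, and you note that the satiated case is vacuous, so (SB) is not actually used once prices converge in norm.

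One caution on your last paragraph. Equicontinuity of equilibrium prices controls their oscillation, not their infimum, so it cannot by itself yield a positive lower bound. For (IPVE), simply invoke, as the paper does, Jones's result that equilibrium prices are uniformly bounded below by a positive constant.
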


Jones shows that equilibrium prices are uniformly bounded below by some strictly positive parameter so condition (IPVE) is satisfied. The remaining results follow. 

\ 

\item\textbf{Allocations of renewable resources over an infinite discrete time horizon.}

When considering renewable resources, such as food or water, which have an upper bound on the amount available in each period, the appropriate set could be $\ell^+_\infty$. Let $X = \{x\in \ell^+_{\infty}\vert \ x_t\le\alpha \} $ be the common consumption set. Here, $x \in X$ represents an inter-temporal consumption bundle where each component $x_t$ is less than or equal to a very large assumed parameter $\alpha > 0$. Notice that with respect to the norm $\Vert x \Vert_{\infty} := \underset{t}{\text{sup}}\{\vert x_t\vert\}$, the set $X$ is bounded.

There are several topologies that one can consider on $\ell_\infty$. The weakest one is the product topology $\tau_d$ which is metrizable by $$d(x, x') = \sum_{t=1}^{\infty}\frac{\vert x_t - x'_t\vert}{2^t(1 + \vert x_t - x'_t\vert)}$$
for $x, x'\in \ell_\infty$. Since $\ell_\infty$ is the dual space of $\ell_1$, the topologies $w$ and $\tau$ are the weak$^*$-topology $\sigma(\ell_\infty, \ell_1)$ and the Mackey topology $\tau(\ell_\infty, \ell_1)$ on $\ell_\infty$ respectively. Furthermore, the set $X$ is $w$-compact by Alaoglu\'{}s Theorem. According to \cite{bewley_very_1991}, p. 226, the set $X$ is $\tau$-compact and metrizable by $\tau_d$. Moreover, $X^{w} = X^{\tau} = X^{\tau_d}$. Let $\Omega = X$ so $\mathcal P_X \times \Omega^\tau$ (resp. $\mathcal P_X \times \Omega^{w}$ and $\mathcal P_X \times \Omega^{\tau_d}$) is a compact Polish space for $\mathcal P_X$ compact. Hence, every $\mu$ in $\mathcal M(T^\tau)$ is a Gel'fand $\tau$-economy 

The topological dual of $\ell_\infty$ is $ba(\mathbb{N})$, the space of bounded and finitely additive set functions on $\mathbb N$ \begin{align*}
ba(\mathbb{N}) = \Big\{ \pi : 2^{\mathbb{N}} \to \mathbb{R} \ \Big| \
& \sup_{A \subset \mathbb{N}} |\pi(A)| < \infty, \\
& \pi(A \cup B) = \pi(A) + \pi(B) \quad \textnormal{whenever } A \cap B = \emptyset
\Big\}
\end{align*}

This space is a Banach space with the
norm $$\|\pi\| = \textnormal{sup}\left\{\sum_{i = 1}^n|\pi (A_i)| : A_i\cap A_j = \emptyset \ \textnormal{for} \ i\neq j\right\}$$

Let $\textbf{1} = (1, 1, ...)$. In general, prices belong to $\Delta := \{\pi\in ba^+(\mathbb{N}): \|\pi\| = \pi\cdot\textbf{1} = 1\}$, which is $\sigma (ba(\mathbb{N}), \ell_\infty)$-compact by Alaoglu's Theorem. Since $\ell_\infty$ is not separable, we cannot guarantee that the price set with the induced topology is metrizable.  

Since commodity vectors are modeled as sequences, it is mathematically natural to represent price vectors accordingly, rather than as set functions. The suitable space is $\ell_1$ which is a subspace of $ba(\mathbb{N})$ in the sense of being isometrically isomorphic to the space $ca(\mathbb{N})$ \begin{align*}
ca(\mathbb{N}) = \Big\{ 
\pi \in ba(\mathbb{N}) \ \Big| \ 
& \pi\left( \bigcup_{n \in \mathbb{N}} A_n \right) = \sum_{n \in \mathbb{N}} \pi(A_n) \\
& \textnormal{whenever } A_i \cap A_j = \emptyset \textnormal{ for } i \neq j
\Big\}
\end{align*}

Several results show the existence of subsets of $T^\tau$ such that every economy concentrated on it admits an equilibrium with prices in $\ell_1\cap \Delta$. For instance, the main theorems in \cite{suzuki_competitive_2013}, \cite{noguchi_economies_1997b} and \cite{bewley_very_1991}. Consequently, the set of Gel'fand economies $\mathbb{E} \subset \mathcal{M}(T^\tau)$ is nonempty.  Furthermore, the latter proves that equilibrium prices belong to a $w' = \sigma(\ell_1, \ell_\infty)$-compact subset $S$ of $\ell_1^+$.  

If preferences are strictly monotone, Assumptions (PR) and (SB) hold since $x = \alpha \mathbf{1}$ belongs to $X$. Furthermore, if $\beta \mathbf{1} = \inf \Omega$ for some $\beta > 0$\footnote{The idea is that $\beta$ is very close to zero.}, Assumption (IPVE) is also satisfied. Of course, condition (EP) clearly holds. Consequently, there exists a dense subset of $\mathbb{E}$ where the correspondence $DE:\mathbb E \twoheadrightarrow\mathcal M(T^\tau\times X)\times S$ is continuous.

\begin{example}
   Let $X = \left\{x \in \ell_\infty^+ \,\middle|\, x_t \le \alpha \text{ for all } t \right\}$ be the consumption set, and let $\Omega = \left\{x \in X \,\middle|\, x_t \ge \beta  \text{ for all } t \right\}$ be the set of initial endowments, where \( \alpha, \beta > 0 \). Let the preference relation \( \mathcal P_X = \{u_1, u_2, u_3 : X\rightarrow\mathbb R_+\}\) where :
\[
u_1(x) = \sum_{t=1}^\infty \delta^t x_t \quad \text{with } \delta \in (0,1),
\]
\[
u_2(x) = \sum_{t=1}^\infty 2^{-t} \log(1 + x_t),
\]
\[
u_3(x) = \sum_{t=1}^\infty \frac{1}{t^2} \cdot \frac{x_t^\gamma}{1 + x_t^\gamma} \quad \text{with } \gamma \in (0,1).\]

Each of these utility functions is increasing, concave and continuous on $X$ with respect to the topology $w' = \sigma(\ell_\infty, \ell_1)$.

The set of characteristics $\mathcal P_X \times \Omega$ is compact under the product of the closed convergence topology on $\mathcal P_X$ and the Mackey topology on $\Omega$. Hence, condition \textnormal{(PC)} is satisfied. Since prices lie in $\ell_1$, every distributional economy $\mu \in \mathcal{M}\left(\mathcal P_X \times \Omega\right)$ is Gel'fand integrable. Furthermore, because the endowment set admits a strictly positive minimum, condition \textnormal{(IPVE)} holds. The strict monotonicity of preferences implies that the consumption set has a maximum; thus, each $\mu$ satisfies both \textnormal{(PR)} and \textnormal{(SB)}. In this setting, $\Omega$ is endowed with the Mackey topology $\tau$, and the simplex $S$ is $w'$-compact. Actually, $S$ is $\|\cdot\|_1$-compact since $\ell_1$ has the Schur property. Consequently, the mapping $(e, p) \mapsto p \cdot e$ is continuous on $\Omega \times S$, ensuring that condition \textnormal{(EP)} is fulfilled.
\end{example}

\item \textbf{Economies with continuous time horizon.}\label{UCT}

We now consider a continuous, unbounded time horizon approach by examining the space $\mathscr L_{\infty}(M, \mathscr M, \nu)$ of (essentially) real-valued, $\mathscr{M}$-measurable, and $\nu$-essentially bounded functions on $M$. For brevity, we denote this space simply as $\mathscr L_{\infty}$. In addition to the norm topology $\|\cdot\|_\infty$, we endow $\mathscr L_{\infty}$ with the weak$^*$ topology $w$, and the Mackey topology $\tau$, consistent with the duality $\langle \mathscr L_{\infty}, \mathscr L_1 \rangle$. The space $\mathscr L_{\infty}^\tau$ which is the space $\mathscr L_{\infty}$ endowed with the topology $\tau$ is a Lusin space; in other words, it is a regular topological space that admits a stronger topology under which it becomes Polish. Naturally, $\mathscr L_{\infty}^w$ is also Lusin. When we consider the norm $\|f\|_\infty = \text{inf}\{r \mid \nu\{m \mid f(m) < r\} = 0 \}$, $\mathscr L_{\infty}$ becomes a Banach space whose topological dual is $ba(M, \mathscr M, \nu)$, the space of bounded additive set functions on
$\mathscr M$ that vanish on sets of $\nu$-measure zero. For every $x\in\mathscr L_\infty$ and $\pi \in ba(M, \mathscr M)$, $\pi\cdot x = \int_M xd\pi$. By the Radon-Nikodyn Theorem, $\mathscr L_1$ can be identified with the subspace of $ ba(M, \mathscr M, \nu)$ whose elements are countably additive. Hence, by considering $\mathscr L_1\subset ba(M, \mathscr M, \nu) = \mathscr L_\infty^* = \mathscr L_1^{**}$ we can endow $\mathscr L_1$ with the weak topology $w' = \sigma(\mathscr{L}_1, \mathscr{L}_\infty)$.

Let $X$ be a norm-bounded, convex subset of $\mathscr{L}_\infty^+$ containing the null vector. By Alaoglu's Theorem, $X$ is weakly$^*$-compact and since $\mathscr{L}_1$ is separable, $X$ is metrizable so it is $w$-compact Polish. Moreover, since $X$ is bounded in the Mackey topology $\tau$, it follows that $X$ is $\tau$-Polish (\cite{mertens_equivalence_1991}, p.~191). The convexity of $X$ implies that it is $\tau$-closed. In conclusion, $X$ is both a $w$-compact Polish space and a $\tau$-closed, $\tau$-Polish space. Let $\Omega$ be a $\tau$-compact subset of $X$ of initial endowments. Clearly, $\Omega$ is $\tau$-Polish. Since both $\mathcal{P}_X$ and $\Omega^\tau$ are compact Polish spaces, the product $T^\tau = \mathcal{P}_X \times \Omega^\tau$ is also compact and Polish.
 
 Let $S \subset \mathscr L^+_1$ be a $w'$-compact space that represents the price simplex. This space is metrizable since $\mathscr L_1$ is separable. Every Borel probability distribution $\mu$ on $T^\tau$ can be regarded as a Gel'fand $\tau$-economy that satisfies Assumption (EP). 

According to Theorem 6 in \cite{zame_markets_1986}, under appropriate assumptions, individualized economies admit equilibria with allocations in order intervals (see the proof on pp.~10\textendash22 to 10\textendash26). Without loss of generality, we may assume that this interval lies within $X$. By Proposition \ref{inducedequilibrium} and Lemma~\ref{wequivalence}(1), we then deduce the existence of a pair $(\zeta, p) \in \mathcal{M}(\mathcal P_X\times\Omega^\tau \times X) \times S$ that constitutes a competitive equilibrium for $\mu$, thereby proving the non-emptiness of $\mathbb{E}$.

Under the assumptions of Theorem~\ref{contDW}, it further follows that there exists a dense and residual subset of $\mathbb{E}$ consisting of equilibria that are $(\rho_T, \rho \times w')$-continuous.

\end{enumerate}
\appendix
\begin{appendices}

\renewcommand{\thetheorem}{\Alph{section}.\arabic{theorem}}
\renewcommand{\thelemma}{\Alph{section}.\arabic{lemma}}
\renewcommand{\theproposition}{\Alph{section}.\arabic{proposition}}
\renewcommand{\thecorollary}{\Alph{section}.\arabic{corollary}}
\renewcommand{\thedefinition}{\Alph{section}.\arabic{definition}}
\renewcommand{\theremark}{\Alph{section}.\arabic{remark}}
\renewcommand{\theexample}{\Alph{section}.\arabic{example}}
\numberwithin{equation}{section}

\section{Proofs.}

\subsection*{Lemma \ref{contpref}.}

\ 

Before proving this lemma, we recall some useful characterizations.
Let $(F_n)$ be a sequence of subsets of a metric space $M$.
A vector $\zeta$ belongs to the subset $\textnormal{Li}(F_n)$ of $M$ if and only if there exist a sequence $(\zeta^n)$ and an integer $n_0$ such that $\zeta^n \in F_n$ for all $n \ge n_0$ and $\text{lim}_n \zeta^n = \zeta$.
Similarly, $\zeta$ belongs to the subset $\textnormal{Ls}(F_n)$ of $M$ if and only if there exists a subsequence $(F_{n_m})$ and, for each $m$, an element $\zeta_{n_m} \in F_{n_m}$ such that $\lim_m \zeta_{n_m} = \zeta$.
Both $\textnormal{Li}(F_n)$ and $\textnormal{Ls}(F_n)$ are closed, and $\textnormal{Li}(F_n) \subset \textnormal{Ls}(F_n)$.
We say that a set $F \subset M$ is the \textit{closed limit} of $(F_n)$ when $F = \textnormal{Li}(F_n) = \textnormal{Ls}(F_n)$.
For details, see \cite{hildenbrand_core_1974}, p.~15.
\
Returning to the proof of the lemma, consider the set 

\[
A = \left\{(\succ, x, y)\in \mathcal{P} \times X \times X : (x,y) \in h(\succ)\right\}
\]
We first show that $A$ is closed in the product topology.
Let $(\succ^n, x^n, y^n) \to (\succ, x, y)$ in $\mathcal{P} \times X \times X$, with $(x^n, y^n) \in h(\succ^n)$ for every $n$.
Since $X$ is weakly closed, it follows that $x, y \in X$. For the same reason, $\succ\in \mathcal P$.
Hence, it remains to prove that $(x, y) \in h(\succ)$.
We remark that $h(\succ^n)$ converges to $h(\succ)$ in $\mathcal{C}_w(X \times X)$ if and only if $(\succ^n)$ converges to $(\succ)$ in $\mathcal{P}$.
Furthermore, since $\mathcal{C}_w(X \times X)$ is a compact metric space, convergence of graphs $h(\succ^n)$ to $h(\succ)$ is characterized by the closed limit as described above (see \cite{hildenbrand_core_1974}, Theorem~2, p.~19).
Therefore, it suffices to verify that $(x, y) \in \textnormal{Ls}\left(h(\succ^n)\right)$, which is immediate.
Thus, $(x, y) \in h(\succ)$, showing that $A$ is closed in $\mathcal{P} \times X \times X$.

Finally, since the complement $\mathcal{P} \times X \times X \setminus A$ is open, any point $(\succ, x, y)$ such that $x \succ y$ clearly belongs to it.

\subsection*{Proposition \ref{uhc}.}

\ 

Before starting with the proof, we state the following lemma:

\begin{lemma}\label{indequi} Let $(A, \mathcal A, \nu)$ be a $\sigma$-finite measure space. Let the sequence $(\mathcal E^n, x^n): (A, \mathcal A)\rightarrow \left(T\times X, \ \mathcal B(T^\kappa)\otimes\mathcal B(X)\right) $ and $(\mathcal E, x): (A, \mathcal A)\rightarrow \left(T\times X, \ \mathcal B(T^\kappa)\otimes\mathcal B(X)\right)$ be measurable functions such that there exists a sequence $(p^n)\in S$ where $(x_n, p_n)$ is an individualized equilibrium for $\mathcal E^n$ for every $n$. Suppose that $ \textnormal{lim}_n (\mathcal E_n(a), x_n(a)) = (\mathcal E(a), x(a))$ a.e. for the product topology $\mathcal T_{X}^C\times \kappa\times w$ and that $\textnormal{lim}_n p_n = p$ in $\kappa'$, then $(x, p)$ is an individualized equilibrium for $\mathcal E$ under the assumptions of Proposition \ref{uhc}. \end{lemma}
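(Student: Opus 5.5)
We have to show that for a.e. $a$ the limit point satisfies the budget condition and the preference maximality condition at $p$. Feasibility is not part of the individualized equilibrium definition stated in Section \ref{individual}, so only these two pointwise properties need checking. Throughout we fix $a$ outside the null set where pointwise convergence or the equilibrium conditions for some $n$ fail. Write $\mathcal E_n(a)=(\succ^n_a,e_n(a))$ and $\mathcal E(a)=(\succ_a,e(a))$. Then $\succ^n_a\to\succ_a$ in $\mathcal P$, $e_n(a)\to e(a)$ in $\kappa$, $x_n(a)\to x(a)$ in $w$, and $p_n\to p$ in $\kappa'$.

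\textbf{Step 1: budget.} We pass $p_n\cdot x_n(a)\le p_n\cdot e_n(a)$ to the limit. On the right-hand side, if $\kappa$ or $\kappa'$ is at least Mackey, Remark \ref{jointcont} gives $p_n\cdot e_n(a)\to p\cdot e(a)$. In the purely weak case, after extracting a subsequence along which $p_n\cdot e_n(a)$ converges (the values are bounded since $X$ is norm bounded and $S$ is bounded), Assumption (EP) applies, because each $p_n$ is an equilibrium price. For the inequality we need an upper bound $\lim p_n\cdot e_n(a)\le p\cdot e(a)$, or an argument that does not need it. This is the main obstacle.

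Our plan is to avoid the upper bound by proving instead that $p\cdot x(a)\le\liminf_n p_n\cdot x_n(a)$ fails to be needed. Concretely, we argue by contradiction. Suppose $p\cdot x(a)>p\cdot e(a)$. Combining (EP) in the form $\lim p_n\cdot e_n\ge p\cdot e$ with the same kind of semicontinuity applied to $x_n$ (using (EP)-type reasoning on $X$, or the Mackey/norm case in the remark) would only give inequalities in the wrong direction. So we will instead obtain the budget inequality a posteriori. We first prove the preference maximality (Step 2). Then, using local nonsatiation or (SB) together with feasibility-type arguments, we show that $x(a)$ is in the budget set: local nonsatiation forces $p\cdot x(a)\ge p\cdot e(a)$, and satiation with (SB) gives $x(a)\ge e(a)$. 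We will try first to show that in the cases where the bilinear pairing is jointly continuous, the budget passes to the limit directly. In the weak--weak case we expect to rely on (EP) applied to both sequences $e_n(a)$ and $x_n(a)$, each of which lies in $X$.

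\textbf{Step 2: maximality.} Let $v\succ_a x(a)$. By Lemma \ref{contpref} there are neighbourhoods $U$ of $\succ_a$ and $V$ of $v$ and $W$ of $x(a)$ with $v'\succ' x'$ on $U\times V\times W$. Hence $v\succ^n_a x_n(a)$ for large $n$, so $p_n\cdot v>p_n\cdot e_n(a)$. Taking limits, with $p_n\cdot v\to p\cdot v$ since $v$ is fixed and using (EP) on the endowment term, we get $p\cdot v\ge p\cdot e(a)$.

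To upgrade this to strict inequality we use (IPVE), which gives $p\cdot e(a)>0$ a.e. We then employ the standard cheaper-point argument. Because $0\in X$ and $X$ is convex, $t v\in X$ for $t<1$ close to $1$. By openness of $\succ_a$, $tv\succ_a x(a)$, so $p\cdot tv\ge p\cdot e(a)$ by the above. If $p\cdot v=p\cdot e(a)$, then $p\cdot tv<p\cdot e(a)$, since $p\cdot v=p\cdot e(a)>0$, which is a contradiction. This yields $p\cdot v>p\cdot e(a)$. Finally, we combine this with Step 1 (local nonsatiation (PR), or satiation plus (SB)) to complete the proof that $x(a)\in D(\succ_a,e(a),p)$ for almost every $a$.
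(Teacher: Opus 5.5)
Your proposal never establishes the budget constraint $p\cdot x(a)\le p\cdot e(a)$. What you derive from (PR)/(SB) and (EP) is the reverse inequality $p\cdot x(a)\ge p\cdot e(a)$. Presenting this as showing that $x(a)$ lies in the budget set is a non sequitur. The closing appeal to Step~1 is circular, because Step~1 stopped exactly at this obstacle.

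The pointwise route cannot be rescued. (EP) bounds $\lim_n p_n\cdot e_n(a)$ only from below, and it concerns endowments, not consumption bundles. Even granting an (EP)-type bound for $x_n$, you would get $p\cdot x(a)\le\lim_n p_n\cdot x_n(a)\le\lim_n p_n\cdot e_n(a)$, and you would still need an upper bound by $p\cdot e(a)$ that nothing provides.

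Worse, the pairing is not lower semicontinuous along $(p_n,x_n)$ when one factor converges only in $w$ and the other only in $w'$. This holds even when the endowment term is jointly continuous. Take $\langle\mu,f\rangle=\int f\,d\mu$ on $\mathcal M([0,1])\times C([0,1])$. Then $\delta_{1/n}\to\delta_0$ weak$^*$. Choose $f_n\in C([0,1])$ with $0\le f_n\le 1$, $f_n(1/n)=0$, and $f_n=1$ off $(\frac{1}{2n},\frac{3}{2n})$; these converge weakly to the constant $1$. Yet $\langle\delta_{1/n},f_n\rangle=0<1=\langle\delta_0,1\rangle$. This failure occurs both in the commodity-differentiation pairing and in the neoclassical $C(K)$ case with $\kappa=\|\cdot\|$, $\kappa'=w'$.

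The missing idea is aggregate. You set feasibility aside because the definition in Section~\ref{individual} does not list it. The paper's proof, however, relies on market clearing $\int_A x_n\,d\nu=\int_A e_n\,d\nu$, which holds where the lemma is used, since the $x_n$ come from distributional equilibria.

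\begin{itemize}
\item Corollary~\ref{cKDCT} (Proposition~\ref{GDCT} in the Gel'fand case) passes market clearing to the limit, giving $\int_A x\,d\nu=\int_A e\,d\nu$.
\item Combine this with your pointwise inequality $p\cdot x(a)\ge p\cdot e(a)$ a.e.\ and Lemma~\ref{lebesgueintegral} (or the defining property of the Gel'fand integral). You get $\int_A\bigl(p\cdot x(a)-p\cdot e(a)\bigr)\,d\nu=0$ with a nonnegative integrand.
\item Hence $p\cdot x(a)=p\cdot e(a)$ a.e.; this is the paper's Step~4.
\end{itemize}

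Your maximality argument is sound and matches the paper's Step~2: Lemma~\ref{contpref}, then (EP) on the endowment term, then the cheaper-point step using $0\in X$ and (IPVE).
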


\begin{proof} We recall that under Assumption (EP) and (IPVE), we have that $\textnormal{lim}_np_n\cdot e_n(a)\ge p\cdot e (a) > 0$. a.e. Furthermore, $\int_A x_n(a)d\nu = \int_A e_n(a)d\nu$ for every $n$ implies $\int_A x(a)d\nu = \int_A e(a)d\nu$ by Corollary \ref{cKDCT} (resp. by Proposition \ref{GDCT} if $L$ is a dual space). It only remains to show that $(x, p)$ is a Walrasian allocation for $\mathcal E$. We shall prove this in a serie of steps. For simplicity, most of the time we will omit the subscript $a$. 

\ 

\noindent {\sc Step 1.} $D(p, \succ, e)$ \emph{is a non-empty $w$-compact-valued correspondence}. The nonemptiness follows from Lemma 2 in \cite{lee_competitive_2013}. As for compactness, it suffices to show that $D(p, \succ, e)$ is closed in $X$. Let $(\xi_n)\in D(p, \succ, e)$ be a sequence that $w$-converges to $\xi$. If $\xi\not\in D(p, \succ, e)$, there exists $x'\in X$ such that $x'\succ \xi$ and $p\cdot x' < p\cdot e$ without loss of generality. For $n$ large enough, $x'\succ \xi_n$ due to Lemma \ref{contpref} but this contradicts that $\xi_n\in D(p, \succ, e)$. 

\ 

From Remark \ref{jointcont}, we understand that, in general, $p_n \cdot x_n$ does not converge to $p \cdot x$. To address this difficulty, we follow the technique employed in \cite{podczeck_markets_1997}, among others, which consists in ``enlarging the demand set.

\ 

For a given $(p, \succ, e)$ let us consider the \emph{better-than-set} $$\overset{\sim}{D}(p, \succ, e) = X\setminus\left\{x\in X: \ \exists \ x'\in D(p, \succ, e) \ \text{such that} \  x'\succ x\right\}$$

It is obvious that $D(p, \succ, e)\subset \overset{\sim}{D}(p, \succ, e)$. Since preferences are continuous, $\overset{\sim}{D}(p, \succ, e)$ is $w$-closed, whence $w$-compact.

\ 

\noindent {\sc Step 2.} \emph{$x\in\overset{\sim }{D}(p, \succ, e)$}. Suppose not, then there exists $\zeta\in X$ such that $\zeta\succ x$ and $p\cdot \zeta\le p\cdot e$. By continuity of preferences we can take $p\cdot \zeta < p\cdot e$ without loss of generality. Consequently, there exists $n_0$ such that for every $n\ge n_0$, $\zeta\succ_n x$ by Lemma \ref{contpref}. Since $x_n\in D(p_n, \succ_n, e_n)$, we have $p^n\cdot \zeta > p^n\cdot e^n$ so by Assumption (EP), $p\cdot \zeta \ge p\cdot e$,  a contradiction.


\

\noindent {\sc Step 3.} $p\cdot x\ge p\cdot e$. If $x$ is a satiation point, $p\cdot x\ge p\cdot e$ by Assumption (SB). If $x$ is not a satiation point, there exists $x'$ as close as one wants to $x$ such that $x'\succ x$ by Assumption PR (2). By continuity of preferences (Lemma \ref{contpref}), $x'\succ_n x_n$ for every $n > n_0$. It follows that $p_n\cdot x' > p_n\cdot e_n$ because of equilibrium conditions in $\mathcal E^n$. Taking limits we get $p\cdot x'\ge p\cdot e$ by condition (EP). Since $x'$ is arbitrarily close to $x$, the Step is proved. 

\ 

We close the proof with the following step.

\ 

\noindent {\sc Step 4.} \emph{$x(a) \in D(p, \succ_a, e(a))$ a.e.} We note that the demand correspondence is $D(p, \succ_a, e(a)) = \overset{\sim}{D}(p, \succ_a, e(a))\cap \left\{x\in L_+ : p\cdot x(a) = p\cdot e(a)\right\}$ by PR (2). Suppose that there exists $A'\subset A$ with $\nu (A') > 0$ and $p\cdot x(a) > p\cdot e(a)$ for every $a\in A'$. But this implies $p\cdot\int_A x(a) d\nu > p\cdot\int_A e(a) d\nu$ which contradicts the fact that $\int_A x(a)d\nu = \int_A e(a)d\nu$.
\end{proof}

\

We now come to the proof Proposition \ref{uhc}. First, since the marginal of $\zeta_n$ with respect to the space of characteristics, $\zeta_{n|T}$, is equal to $\mu_n$ for all $n$ by definition, we have that $\zeta_{T} = \mu$. We proceed to show that $\zeta\in DE(\mu)$. Indeed, for each $n\in \mathbb N$, there exists $p_n\in S$ such that $\zeta_n(E_{p_n}) = 1$. Since $S$ is $\kappa'$-compact, there exists a subnet also denoted $(p_n)$ that converges to $p$. By Skorokhod's theorem, (\cite{billingsley_convergence_1999}, Theorem 6.7, p. 70, see also, \cite{hildenbrand_core_1974}, p. 50) for the sequence $(\zeta_n)$, one can take the Lebesgue unit interval space $(I, \mathcal{B}(I), \lambda)$ and measurable mappings $(\mathcal{E}_n, \ x_n)$ and $(\mathcal{E}, \ x)$ from $I$ into $T^\kappa\times X$ such that $(\mathcal{E}_n(i), \ x_n(i))$ $\mathcal T^C_X\times\kappa\times w$-converges to $(\mathcal{E}(i), \ x(i))$ a.e., and for all $n$, $\lambda \circ  (\mathcal{E}_n, x_n)^{-1} = \zeta_n$ and $\lambda\circ\mathcal E_n
^{-1} = \mu_n$. Furthermore, $\lambda \circ (\mathcal E, x)
^{-1} = \zeta$ and $\lambda \circ \mathcal E^{-1} = \mu$. Hence, $\lambda \circ  (\mathcal{E}_n, x_n)^{-1} \in DE(\lambda\circ\mathcal E_n^{-1})$ for all $n$. Note that $(\mathcal{E}_n, \ x_n)$ and $(\mathcal{E}, \ x)$ represent a sequence and a limit point of individualized economies whose distributions coincide with $(\zeta_n)$ and $\zeta$ respectively. Hence, by Corollary \ref{cBintegration} (resp. Lemma \ref{Gintegration}), the mappings $e_n :=\text{Proj}_{\Omega}\mathcal E_n$, $e :=\text{Proj}_{\Omega}\mathcal E$, $x_n$ and $x$ are Bochner integrable (resp. Gel'fand integrable). Since $\int_X\iota d\zeta_{n|X} = \int_\Omega\iota d\mu_{n|\Omega}$ for all $n$, it follows that $\int_I x_n(i) d\lambda = \int_I e_n(i) d\lambda$ for all $n$ by Corollary \ref{chvariable} (resp. by Corollary \ref{wchvariable}). Furthermore, by Corollary \ref{cKDCT} (resp. Proposition \ref{GDCT}), we have $\int_I x(i) d\lambda = \int_I e(i) d\lambda$. By Lemma \ref{wequivalence} (2), $(x_n, p_n)$ is a Walrasian equilibrium for $\mathcal E_n$ whence $(x, p)$ is an individualized equilibrium for $\mathcal E$ by Lemma \ref{indequi}. In virtue of Corollary \ref{chvariable} (resp. by Corollary \ref{wchvariable}),  $\int_X\iota d\zeta_{X} = \int_\Omega\iota d\mu_{\Omega}$ and due to Lemma \ref{wequivalence} (2), $\lambda\circ (\mathcal E, x)^{-1} = \zeta$ is a distributional equilibrium for the distributional economy $\lambda\circ \mathcal E^{-1} = \mu$. 

\begin{remark}

  We emphasize that Lemma \ref{indequi} is not only instrumental for the proof of Proposition \ref{uhc}, but it is also a result of independent interest. Specifically, it establishes that the equilibrium correspondence is closed for individualized economies defined over a common agent space. This observation is significant because, in the literature, the existence of equilibria is sometimes established by demonstrating that the equilibrium correspondence is closed for particular sequences of economies and associated equilibrium vectors. This approach appears, for example, in \cite{khan_equilibria_1991}, pp. 244-245, among others. In that context, one considers a fixed endowment function $e_n = e: A \to \Omega$ for all $n$, so that the map $(e, p) \mapsto p \cdot e$ is jointly continuous when the product topology $w \times w'$ is imposed on $S \times \Omega$. This suggests that a stronger assumption than (EP) could be to restrict attention to families of economies sharing the same marginal distribution $\mu_\Omega$. 
\end{remark}

\subsection*{Corollary \ref{baire}.}

\ 

Since $\mathcal M(T^\kappa)$ is completely metrizable, it suffices to show that $\mathbb E$ is closed to invoke the Baire category theorem. If $\mathbb E = \emptyset$, we do not have to prove anything. Let $(\mu_n)$ be a sequence in $\mathbb E$ which converges to $\mu$. Hence, there exists a sequence $(\zeta_n)$ in $\mathcal M(T^\kappa\times X)$ such that $\zeta_n\in DE(\mu_n)$ for all $n\in\mathbb N$. Since $\mathcal M(T^\kappa\times X)$ is compact there exists a subsequence of $(\zeta_n)$ (say, itself) which converges to $\zeta$. By Theorem \ref{uhc}, $\zeta$ belongs to $DE(\mu)$ and thus $\mu$ belongs to $\mathbb E$.

\subsection*{Proposition \ref{wclosenessindiv}.}

\ 

Notice that $\nu_n\circ (\mathcal E_n, x_n)^{-1}$ belongs to $DE\left(\nu_n\circ\mathcal E_n^{-1}\right)$ for each $n\in\mathbb N$ accordingly to Lemma \ref{wequivalence} (1). Furthermore, $(\nu_n\circ \mathcal E_n^{-1})$ converges to $\nu\circ\mathcal E^{-1}$ and $(\nu_n\circ x_n^{-1})$ converges to $\nu\circ x^{-1}$. By Theorem 2.1 (iii) in \cite{keisler_why_2009}, there exists a subsequence of $(\nu_n\circ (\mathcal E_n, x_n)^{-1})$ (for convenience, itself) which converges to $\nu\circ(\mathcal E, x)^{-1}$. By Proposition \ref{uhc}, $\nu\circ(\mathcal E, x)^{-1}$ belongs to $DE\left(\nu\circ\mathcal E^{-1}\right)$. Hence, $x$ is an equilibrium for $\mathcal E$ by Lemma \ref{wequivalence} (2). 

\subsection*{Proposition \ref{closenessindiv}.}

\ 

As in the previous case, $\nu_n\circ (\mathcal E_n, x_n)^{-1}$ belongs to $DE\left(\nu_n\circ\mathcal E_n^{-1}\right)$ for each $n\in\mathbb N$. The sequence $(\nu_n\circ \mathcal E_n^{-1})$ converges to $\nu\circ\mathcal E^{-1}$ and $(\nu_n\circ x_n^{-1})$ converges to the measure $\gamma\in\mathcal M(X)$. By Theorem 2.1 (iii) in \cite{keisler_why_2009}, there exists a subsequence of $\nu_n\circ (\mathcal E_n, x_n)^{-1}$ (say, itself) which converges to $\zeta\in\mathcal M(T^\kappa\times X)$ such that the marginals $\zeta_{T}$ and $\zeta_{X}$ are $\nu\circ\mathcal E^{-1}$ and $\gamma$, respectively. By Proposition \ref{uhc}, $\zeta$ belongs to $DE\left(\nu\circ\mathcal E^{-1}\right)$ and since $(A, \mathcal A, \nu)$ is saturated, there exists $x: A\rightarrow X$ such that $\nu\circ (\mathcal E, x)^{-1} = \zeta$ and $\zeta_{X} =  \nu\circ x^{-1} = \gamma$. Hence, $x$ is an equilibrium for $\mathcal E$ by Lemma \ref{wequivalence} (2). 

\subsection*{Lemma \ref{inducedpreference}.}
Define $R = h^{-1}\circ r\circ g$ where $r:\mathcal C_w(L_+\times L_+)\rightarrow \mathcal C_w(X\times X)$ is given by $r(F) = F\cap (X\times X)$. Notice that, due to the topologies on $\mathcal P_{L_+}$ and $\mathcal P_X$, the map $g$ is continuous and $h:\mathcal P_X\rightarrow h(\mathcal P_X)$ is a homeomorphism. Consequently, both $g$ and $h^{-1}$ are Borel measurable. Thus, to complete the proof, it suffices to show that $r$ is Borel measurable.

\begin{claim} The function $r: \mathcal C_w(L_+\times L_+)\rightarrow \mathcal C_w(X\times X)$ is Borel measurable.
\end{claim}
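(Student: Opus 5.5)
The plan is to show that $r$ is upper semicontinuous as a map into the closed sets equipped with the topology of closed convergence. Such maps are Borel measurable. Recall that the topology $\mathcal T^{\mathcal C}$ of closed convergence on $\mathcal C_w(X\times X)$, where $X\times X$ is compact metrizable, coincides with the Vietoris (equivalently, Hausdorff metric) topology. Its Borel $\sigma$-algebra, the Effros $\sigma$-algebra, is generated by the sets
\[
\mathcal U_G:=\{F\in\mathcal C_w(X\times X): F\cap G\neq\emptyset\}, \qquad G \ \text{open in } X\times X,
\]
or equivalently by the complements of the sets $\{F: F\cap K\neq\emptyset\}$ with $K$ compact. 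Since $X\times X$ is second countable, it suffices to check that $r^{-1}(\mathcal U_G)$ is Borel for $G$ ranging over a countable base, or dually that $r^{-1}(\{F:F\cap K=\emptyset\})$ is Borel, even open, for each compact $K\subset X\times X$.

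The key step uses compactness of $X\times X$. Fix a $w$-compact $K\subset X\times X$ and consider the set
\[
\mathcal V_K:=\{F\in \mathcal C_w(X\times X): F\cap K=\emptyset\}.
\]
Its preimage is
\[
r^{-1}(\mathcal V_K)=\{F\in\mathcal C_w(L_+\times L_+): F\cap K=\emptyset\},
\]
because $F\cap(X\times X)\cap K=F\cap K$. Sets of the form "misses a compact set" are open in the topology of closed convergence; this is its defining subbasis, in the Fell form. Hence each $r^{-1}(\mathcal V_K)$ is open in $\mathcal T^{\mathcal C}$. The map $r$ is therefore upper semicontinuous, and the preimages of the generators $\{F:F\cap K\neq\emptyset\}$ are closed.

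To finish, it remains to check that the sets $\{F:F\cap K\neq\emptyset\}$, with $K$ compact, generate $\mathcal B(\mathcal C_w(X\times X))$. Write each open $G\subset X\times X$ as a countable union of compact sets $K_m$. Then $\mathcal U_G=\bigcup_m\{F:F\cap K_m\neq\emptyset\}$, so the hitting sets of open sets lie in the $\sigma$-algebra generated by the hitting sets of compacts. Because $X\times X$ is compact metrizable, the hitting sets of open sets together with the missing sets of compacts generate the topology, and thus the Borel $\sigma$-algebra. Preimages of all generators are therefore Borel, and $r$ is measurable.

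The main obstacle I expect is topological. $L_+\times L_+$ carries the weak topology and is in general neither metrizable nor locally compact. One must therefore pin down the definition of closed convergence used for $\mathcal C_w(L_+\times L_+)$ and confirm that the "miss a compact" sets are genuinely open there, rather than relying on the sequential Li/Ls characterization, which holds only in metric spaces. If that characterization is the one in force, I would instead argue sequentially. Take $F_n\to F$, pass to a subsequence along which $r(F_n)$ converges in the compact metric space $\mathcal C_w(X\times X)$, and show that $\textnormal{Ls}(F_n\cap X\times X)\subset F\cap X\times X$. This gives upper semicontinuity of $r$, and upper semicontinuous maps into a compact metric hyperspace are Borel.
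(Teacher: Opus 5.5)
Your proposal is correct and is essentially the paper's proof. The preimage of each ``miss a compact $K$'' set is the subbasic open set $[K]_{L_+\times L_+}$. Each ``hit an open $G$'' set is handled by writing $G=\bigcup_n K_n$ with $K_n$ compact (the paper takes $K_n=\{x: d(x,(X\times X)\setminus G)\ge 1/n\}$), so its preimage is a countable union of closed sets. Since the paper defines closed convergence through exactly this subbase, your main route applies directly. Your sequential fallback is therefore unnecessary, and it would not suffice on its own anyway: sequential closedness need not imply closedness in the hyperspace over the non-metrizable $L_+\times L_+$.
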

\begin{proof}

Let
\[
\mathscr S_{X\times X}
=
\bigl\{[K]_{X\times X}: K\subset X\times X \text{ compact}\bigr\}
\cup
\bigl\{\langle G\rangle_{X\times X}: G\subset X\times X \text{ open}\bigr\}
\]
be a subbase for the closed convergence topology on $\mathcal C_w(X\times X)$. Since the Borel $\sigma$-algebra is generated by the topology, and the topology itself is generated by $\mathscr S_{X\times X}$, we have $\mathcal B(\mathcal C_w(X\times X))=\sigma(\mathscr S_{X\times X})$. Hence, it is enough to prove that the preimage under $r$ of every member of $\mathscr S_{X\times X}$ is a Borel subset of $\mathcal C_w(L_+\times L_+)$.

\medskip

First, let $K\subset X\times X$ be compact. Then
\begin{align*}
r^{-1}([K]_{X\times X}) 
&= \bigl\{F\in\mathcal C_w(L_+\times L_+) : (F\cap (X\times X))\cap K=\varnothing\bigr\} \\
&= \bigl\{F\in\mathcal C_w(L_+\times L_+) : F\cap K=\varnothing\bigr\},
\end{align*}
where the last equality holds because $K\subset X\times X$. Since $K$ is compact in $X\times X$ and $X\times X$ is weakly compact, $K$ is also a compact subset of $L_+\times L_+$ with respect to the product weak topology. Therefore $r^{-1}([K]_{X\times X}) = [K]_{L_+\times L_+}$, which is a subbasic open set in $\mathcal C_w(L_+\times L_+)$ and hence Borel.

Now, let $G\subset X\times X$ be open. Because $G\subset X\times X$, for every $F\in\mathcal C_w(L_+\times L_+)$ we have $(F\cap (X\times X))\cap G = F\cap G$. Thus
\[
r^{-1}(\langle G\rangle_{X\times X}) = \{F\in\mathcal C_w(L_+\times L_+):F\cap G\neq\varnothing\}. \tag{1}
\]

Fix a metric $d$ on $X\times X$ that generates its weak topology (for instance, the product metric). Set $H = (X\times X)\setminus G$, which is closed in $X\times X$. For each $n\ge 1$, define
\[
K_n:=\{x\in X\times X:d(x,H)\ge 1/n\}.
\]

We now establish two facts.

\medskip
\noindent\textbf{(i)} Each $K_n$ is compact.  
Indeed, the function $f(x)=d(x,H)$ is continuous because it is $1$-Lipschitz: for any $x,y\in X\times X$,
\[
|f(x)-f(y)| = |d(x,H)-d(y,H)| \le d(x,y).
\]
Moreover, $K_n = f^{-1}([1/n,\infty))$ is the preimage of a closed set under a continuous map, hence closed in $X\times X$. Since $X\times X$ is weakly compact, $K_n$ is also weakly compact.

\medskip
\noindent\textbf{(ii)} $G = \bigcup_{n=1}^{\infty} K_n$.  
If $x\in K_n$ for some $n$, then $d(x,H)\ge 1/n >0$, so $x\notin H$ and therefore $x\in (X\times X)\setminus H = G$. Conversely, if $x\in G$, then $x\notin H$. Because $H$ is closed, $d(x,H)>0$. Choose $n$ such that $1/n \le d(x,H)$; then $x\in K_n$.

\medskip
Using (1) and (ii), we obtain
\begin{align*}
r^{-1}(\langle G\rangle_{X\times X}) 
&= \left\{F\in\mathcal C_w(L_+\times L_+):F\cap \left(\bigcup_{n=1}^\infty K_n\right)\neq\varnothing\right\} \\
&= \bigcup_{n=1}^\infty \left\{F\in\mathcal C_w(L_+\times L_+):F\cap K_n\neq\varnothing\right\} \\
&= \bigcup_{n=1}^\infty \left(\mathcal C_w(L_+\times L_+)\setminus [K_n]_{L_+\times L_+}\right).
\end{align*}

Actually, each $K_n$ is compact in $L_+\times L_+$, so $[K_n]_{L_+\times L_+}$ is a (subbasic) open set in the closed convergence topology on $\mathcal C_w(L_+\times L_+)$. Consequently, each $\mathcal C_w(L_+\times L_+)\setminus [K_n]_{L_+\times L_+}$ is closed. Hence $r^{-1}(\langle G\rangle_{X\times X})$ is a countable union of closed sets and therefore Borel.

\end{proof}

\subsection*{Proposition \ref{inducedequilibrium}.}

\ 

We first show that such an economy $\mathcal E_X$ exists. In fact, define $\mathcal E_X := \bar R \circ \mathcal E$, where $\bar R = (R, \iota): \mathcal P_{L_+} \times \Omega \to \mathcal P_X \times \Omega$ by $\bar R(\succ, e) = (R(\succ), \iota(e)) = (\succ_X, e)$ for all $(\succ, e)\in \mathcal P_{L_+}\times\Omega$. We note that $\mathcal E_X$ is measurable since both $R$ and $\mathcal E$ are measurable by Claim \ref{inducedpreference}. Furthermore, it follows that $\textnormal{Proj}_\Omega\mathcal E_X = \textnormal{Proj}_\Omega\mathcal E$ so $\left\|\int_A\textnormal{Proj}_\Omega\mathcal E_X(a)d\nu\right\| < \infty.$

Second, since $(x, p)$ is an equilibrium allocation for $\mathcal E$, we trivially have for $\mathcal E_X$ the following properties: (i) $\int_A x(a)d\nu(a) = \int_A e(a) d\nu(a)$ and (ii) $p \cdot x(a) = p \cdot e(a)$ and $p \cdot v > p \cdot e(a)$ for all $v \in X$ such that $v \succ_X x(a)$ a.e. 
    
\subsection*{Lemma \ref{uhs}.}

\ 

It suffices to show that $(p_n)$ is equicontinuous. This is so because the Ascoli-Arzel\'a Theorem (\cite{royden_real_1988}, Theorem 40, p. 169. See also Corollary 41.) implies that the sequence $(p_n)$ converges uniformly to $p\in C(K)$, i.e., for any $\varepsilon > 0$ there exists $n_0$ such that for all $n \ge n_0$, $\|p_n - p\|_\infty\le\varepsilon$.

We recall that $(p_n)$ is equicontinuous if for every $\varepsilon > 0$ there exists $\delta > 0$ such that for all $n$ and all $k, k' \in K$ with $ d_K(k, k') < \delta$, $|p^n\cdot k - p^n\cdot k'| < \varepsilon$. In term of sequences, for every $(k^m)$ and $(k'^m)$ converging to $k$ such that there is a subsequence $(p^{n_m})$ and $p^{n_m}\cdot k^m - p^{n_m}\cdot k'^m $ converges to $0$ as $m$ tends to infinity (\cite{jones_existence_1983}, Lemma 2) given the fact that the sequence $(p_n)$ is uniformly bounded. 

Suppose, on the contrary, that $\text{lim}_m\frac{p^{n_m}\cdot k^m}{p^{n_m}\cdot k'^m} > 1$.  Let us consider the subsequence $(x_{n_m})$ and let $B_k$ be an open measurable set containing $k$ such that $\left(\int_I e(i)d\lambda(i)\right)(B_k) > 0$. Since $\int_I e_{n_m}(i)d\lambda(i)$ $w$-converges to $\int_I e(i)d\lambda(i)$ (Proposition \ref{GDCT}), there are infinitely many $m$ such that $\left(\int_I e_{n_m}(i)d\lambda(i)\right)(B_k) > 0$. This implies that $\left\{i\in I: x_{n_m}(i)(B_k) > 0\right\}$ has strictly positive measure for all $m$. With this in mind, let $$ x'(i)^{m} = x^{n_m}(i) - x^{n_m}(i)(B_k)\delta_{k^m} + \left ( \frac{p^{n_m}\cdot k^m}{p^{n_m}\cdot k'^m}\right )x^{n_m}(i)(B_k)\delta_{k'^m}  $$ where $\delta_{k^m}$ and $\delta_{k^m}$ are the respective Dirac deltas of $k^m$ and $k'^m$. Of course, $x'(i)^m$ is measurable and $p_{n_m}\cdot x'_m(i) = p_{n_m}\cdot x_{n_m}(i)$ a.e. By Assumption (UHS) one deduces $x'_m(i) \succ_i^m x_{n_m}(i)$ a.e. for $m$ large enough. Since $x^{n_m}(i)(B_k) > 0$ a.e., we have a contradiction with the fact that $x_{n_m}(i)\in D(\succ_i^{n_m}, e(i)_{n_m}, p_{n_m}) $ a.e.

\subsection*{Proposition \ref{uhcCD}.}

\ 

Let $(\mu_n)$ be a sequence of economies in $\mathbb E$ converging to $\mu$. Let $(\zeta^n)$ be a sequence of equilibria distribution on $T^w\times X$. There exists a subsequence also denoted by $(\zeta^n)$ which converges to $\zeta \in \mathcal M\left (T^w\times X\right )$. We have to show that $\zeta\in DE(\mu)$.

By repeating the proof of Proposition \ref{uhc}, we know that there exists a sequence $(x_n, e_n, p_n)$ converging to $(x, e, p)$ where $x_n, x : I\rightarrow X$, $e_n, e: I\rightarrow\Omega^w$ and $(x_n, p_n)$ is an equilibrium for $\mathcal E_n$ for all $n$. Given the assumptions of the proposition together with the fact that the condition (IPVE) is satisfied, if we show that $\textnormal{lim}_n p_n\cdot e_n(i) = p\cdot e(i)$ and $\textnormal{lim}_n p_n\cdot x_n(i) = p\cdot x(i)$ a.e., then we are done (see the proof of Proposition \ref{uhc}). So we end by proving the following claim.

\ 

\noindent {\sc Claim} \textit{$\textnormal{lim}_n p_n\cdot e_n(i) = p\cdot e(i)$ and $\textnormal{lim}_n p_n\cdot x_n(i) = p\cdot x(i)$} a.e.

\ 

Since for all $i$, $e_n(i)$ $w$-converges to $e(i)$, $p\cdot e_n(i)\rightarrow p\cdot e(i)$ and $e_n(i)(K)\le e(i)(K) + \varepsilon$, (\cite{hildenbrand_core_1974}, (26), p. 48). Hence, 

\ 

$|p_n\cdot e_n(i) - p\cdot e(i)| \le |p_n\cdot e_n(i) -p\cdot e_n(i)| + |p\cdot e_n(i) - p\cdot e(i)|\le\|p_n - p\|_\infty\|e_n(i)\| +
|p\cdot e_n(i) - p\cdot e(i)| \le\varepsilon (e(i)(K) + \varepsilon) +
|p\cdot e_n(i) - p\cdot e(i)|$.

\ 

So, $\text{lim}_n p_n\cdot e_n(i) = p\cdot e(i)$ a.e. In the same way, $\text{lim}_n p_n\cdot x_n(i) = p\cdot x(i)$ a.e.

\section{Some results on Bochner and Gel'fand integrals.}\label{measurability}

\ 

In this section, we present the mathematical statements employed in this paper. We emphasize integration as it is fundamental to defining feasibility in both distributional and individualized economies. We offer proofs here for convenience but claim no originality. We start with the following result, which is crucial for showing separability in the strong topology. 

\begin{lemma}\label{separability}
Let $E$ be a Hausdorff locally convex space and let $\mathcal T\subset \mathcal T'$ be two compatible locally convex topologies on $E$. Let $Y\subset E$ be convex. Then $Y$ is separable with respect to $\mathcal T$ if and
only if it is separable with respect to $\mathcal T'$.
\end{lemma}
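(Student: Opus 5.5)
The plan is to show that the closure of a convex set is the same for every topology in the family of compatible locally convex topologies, and that this forces separability to transfer.

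The direction ``$\mathcal T'$-separable $\Rightarrow$ $\mathcal T$-separable'' is immediate. Since $\mathcal T\subset\mathcal T'$, a countable set $D\subset Y$ that is $\mathcal T'$-dense in $Y$ satisfies $Y\subset \overline{D}^{\mathcal T'}\subset\overline{D}^{\mathcal T}$. Hence $D$ is also $\mathcal T$-dense in $Y$.

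For the converse, let $D=\{y_1,y_2,\dots\}\subset Y$ be countable and $\mathcal T$-dense in $Y$. The idea is to replace $D$ by the set $D_{\mathbb Q}$ of all rational convex combinations $\sum_{i=1}^n q_iy_i$, with $q_i\in\mathbb Q_+$ and $\sum q_i=1$. This set is countable, and by convexity of $Y$ it is contained in $Y$.

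The key fact is that closed convex sets are the same for all topologies compatible with the same duality $\langle E,E^*\rangle$. This is Mazur's theorem, which follows from the Hahn--Banach separation theorem: a closed convex set is the intersection of the closed half-spaces containing it, and these half-spaces depend only on the dual. Let $C=\overline{\operatorname{co}}^{\mathcal T'}(D)$. This set is convex and $\mathcal T'$-closed, hence $\mathcal T$-closed, so $Y\subset\overline D^{\mathcal T}\subset C$.

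It remains to show $C=\overline{D_{\mathbb Q}}^{\mathcal T'}$. Every element of $\operatorname{co}(D)$ is a finite combination $\sum\lambda_iy_i$. Since addition and scalar multiplication are jointly continuous in the topological vector space $(E,\mathcal T')$, it is a $\mathcal T'$-limit of combinations with rational coefficients $q_i\to\lambda_i$. Thus $\operatorname{co}(D)\subset\overline{D_{\mathbb Q}}^{\mathcal T'}$, and taking closures gives $C=\overline{D_{\mathbb Q}}^{\mathcal T'}$. Consequently $D_{\mathbb Q}\subset Y\subset\overline{D_{\mathbb Q}}^{\mathcal T'}$, so $Y$ is $\mathcal T'$-separable.

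The only substantive step is invoking Mazur's theorem, namely that compatible topologies share closed convex sets. I would cite it from \cite{aliprantis_infinite_2006}, where it appears as a consequence of the separation theorem for Hausdorff locally convex spaces. The rest is a routine approximation argument.
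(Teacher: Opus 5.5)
Your proof is correct and follows the paper's own argument: pass to the countable set of rational convex combinations of a $\mathcal T$-dense sequence in $Y$, and use Mazur's theorem that a convex set has the same closure in every topology compatible with the duality. Your bookkeeping is slightly cleaner than the paper's. You only need the inclusions $D_{\mathbb Q}\subset Y\subset\overline{D_{\mathbb Q}}^{\mathcal T'}$, whereas the paper asserts $\overline{\text{co}(D)}^{\mathcal T}=Y$, which tacitly treats $Y$ as closed.
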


\begin{proof}
If $Y$ is $\mathcal T'$-separable, there exists a countable subset $D'$ of $Y$ such that $\overline{D'}^{\mathcal T'} = Y$.  The $\mathcal T$-separability comes from the fact that $\overline{D'}^{\mathcal T'} \subset \overline{D'}^{\mathcal T}$. Conversely, let $Y$ be $\mathcal T$-separable. There exists a countable $\mathcal T$-dense subset $D$ of $Y$. Let $\text{co}(D)$ be the convex hull of $D$ and $\text{co}_{\mathbb Q}(D)$ be the rational convex hull of $D$, i.e., $\text{co}_{\mathbb Q}(D)$ contains only rational convex combinations of $D$. Obviously, $\text{co}_{\mathbb Q}(D)\subset \text{co}(D)$.

We claim that for any topology, $\mathcal T$ or $\mathcal T'$, the closure of $\text{co}_{\mathbb Q}(D)$ coincides with the closure of $\text{co}(D)$. For it, it suffices to show that $\text{co}(D)$ is contained in the closure of $\text{co}_{\mathbb Q}(D)$. Indeed, Suppose \( y \in \text{co}(D) \). Then there exist \( x_1, \dots, x_n \in D \) and real coefficients \( \lambda_1, \dots, \lambda_n \in [0,1] \) such that \( \sum_{i=1}^n \lambda_i = 1 \) and \( y = \sum_{i=1}^n \lambda_i x_i \). Since \( \mathbb{Q} \cap [0,1] \) is dense in \( [0,1] \), there exist sequences \( (\lambda_i^k) \in \mathbb{Q} \cap [0,1] \) such that \( \sum_{i=1}^n \lambda_i^k = 1 \) and \( \lambda_i^k \to \lambda_i \) for all \( i = 1, \dots, n \)\footnote{For each \( i = 1, \dots, n \), choose a sequence of rational numbers \(( \tilde{\lambda}_i^k) \in \mathbb{Q} \cap [0,1] \) such that \(\text{lim}_k \tilde{\lambda}_i^k \to \lambda_i \). These sequences exist since $\mathbb Q\cap[0, 1]$ is dense in $[0, 1]$. Define the normalization factor \( S_k := \sum_{i=1}^n \tilde{\lambda}_i^k \), so that \( S_k \to 1 \). 
Now set \( \lambda_i^k := \tilde{\lambda}_i^k / S_k \in \mathbb{Q} \cap [0,1] \). These satisfy \( \sum_{i=1}^n \lambda_i^k = 1 \), and \(\text{lim}_k \lambda_i^k \to \lambda_i \) for all \( i \).}. Define \( y_k := \sum_{i=1}^n \lambda_i^k x_i \in \text{co}_{\mathbb{Q}}(D) \). Hence, \( y_k \to y \) which implies that $y$ belongs to the closure of $\text{co}_{\mathbb{Q}}(D) \). Since $D\subset \text{co}(D)\subset Y = \overline{D}^{\mathcal T}$, it follows that $ \overline{\text{co}(D)}^{\mathcal T} = Y$. Due to the convexity of $\text{co}(D)$, $\overline{\text{co}(D)}^{\mathcal T} = \overline{\text{co}(D)}^{\mathcal T'}$ (\cite{schaefer_topological_1999}, Theorem 3.1, p. 130) and because of the above claim, $\overline{\text{co}_{\mathbb{Q}}(D)}^{\mathcal T'} = \overline{\text{co}(D)}^{\mathcal T'}$. Therefore, $\overline{\text{co}_{\mathbb{Q}}(D)}^{\mathcal T'} = Y$ so there exists a countable subset of $Y$ which is dense with respect to the topology $\mathcal T'$.\end{proof}

\ 

Let $\kappa\in\{\|\cdot\|, \tau, w\}$ be fixed. Let $F\subset L$ be a convex, $\kappa$-separable and metrizable set which is integrably bounded with respect to the finite measure space $(A, \mathcal A, \nu)$, in the sense that there exists an integrable function $h:A\rightarrow\mathbb R$ such that for every $g:A\rightarrow F$, it follows that $\|g(a)\|\le h(a)$ a.e. By Lemma \ref{separability} above, $F$ is $\|\cdot\|$-separable.

\subsection*{Bochner integral.}\ 

Throughout this section, we assume that $L' = L^{*}$ so Mackey and norm topologies coincide on $L$.

\begin{lemma}\label{smeasurable}
Let $\kappa\in\{\|\cdot\|, w\}$ and let $f:(A, \mathcal A)\rightarrow (F, \mathcal B(F^\kappa))$ be a measurable function. Then, $f$ is strongly measurable.
\end{lemma}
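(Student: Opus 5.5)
The strategy is to apply the Pettis Measurability Theorem (Theorem \ref{pettismeasurability}) and verify its two conditions separately. Throughout, $L'=L^*$, so on $L$ the Mackey topology is the norm topology, and $w=\sigma(L,L^*)$ is the weak topology. In the setting preceding the lemma, $F$ is convex, $\kappa$-separable and metrizable. By Lemma \ref{separability}, applied with $\mathcal T=w\subset\mathcal T'=\|\cdot\|$, both compatible with $\langle L,L^*\rangle$, $F$ is norm separable.

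\textbf{Separability condition.} The values of $f$ lie in $F$, which is $\|\cdot\|$-separable. Hence $f(A)$ is norm separable, and condition 2 of Theorem \ref{pettismeasurability} holds with $N=\emptyset$. (A subset of a separable metric space is separable.)

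\textbf{Weak measurability.} Fix $p\in L^*$. The restriction $p|_F:F\to\mathbb R$ is $\kappa$-continuous in both cases. If $\kappa=\|\cdot\|$, this is because $p$ is norm continuous. If $\kappa=w$, it holds by the definition of $\sigma(L,L^*)$. A continuous map is Borel, so $p|_F$ is $(\mathcal B(F^\kappa),\mathcal B(\mathbb R))$-measurable. Therefore $p\cdot f=p|_F\circ f$ is $\mathcal A$-measurable, being the composition of an $(\mathcal A,\mathcal B(F^\kappa))$-measurable map with a Borel map. So $f$ is weakly measurable, and Pettis' theorem gives strong measurability.

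\textbf{Main obstacle.} The only delicate point is the use of Lemma \ref{separability} to pass from $w$-separability to norm separability. This step needs the convexity of $F$, and it needs $w$ and the norm to be compatible with the same duality, which is exactly why we assume $L'=L^*$. As a cross-check in the case $\kappa=w$, one could instead use Lemma \ref{measurestrongmeasure}. On a norm separable set, the weak and norm Borel $\sigma$-algebras coincide, since norm-closed balls are weakly closed and countably many of them generate the norm-open sets. This would also show that $f$ is $(\mathcal A,\mathcal B(F^{\|\cdot\|}))$-measurable. The Pettis route, however, avoids this comparison of $\sigma$-algebras.
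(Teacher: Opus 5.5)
Your proof is correct and follows essentially the same route as the paper: norm separability of $F$ from Lemma \ref{separability}, weak measurability of $f$ from the $\kappa$-continuity (hence Borel measurability) of each $p\in L^*$, and then the Pettis Measurability Theorem. The only difference is cosmetic: for $\kappa=\|\cdot\|$ the paper cites Lemma \ref{measurestrongmeasure} directly instead of running the Pettis argument, which changes nothing of substance.
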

\begin{proof} For $\kappa = \|\cdot\|$, the result follows from Lemma \ref{measurestrongmeasure}. Let $\kappa  = w$ and notice, first, that $f(A)$ is a $\|\cdot\|$-separable subset of $F$ and $f(a)$ belongs to $F$ a.e. Second, we show that $f$ is weakly-measurable. Indeed, every $p\in L^*$ is $\left(\mathcal B\left(L^\kappa\right), \mathcal B\left(\mathbb R\right)\right)$-measurable (\cite{aliprantis_infinite_2006}, Corollary 4.26, p. 140) and $f$ is $\left(\mathcal A, \mathcal B\left(L^\kappa\right)\right)$-measurable since it is $\left(\mathcal A, \mathcal B\left(F^\kappa\right)\right)$-measurable and its range is in $F$. Hence, $p\cdot f$ is $\left(\mathcal A, \mathcal B\left(\mathbb R\right)\right)$-measurable  (\cite{hildenbrand_core_1974}, p. 42, (4)) so $f$ is weakly measurable, whence strongly by Theorem \ref{pettismeasurability}.\end{proof}

\begin{lemma}\label{Bintegration}
Let $\kappa\in\{\|\cdot\|, w\}$ and let $f:(A, \mathcal A)\rightarrow (F, \mathcal B(F^\kappa))$ be a measurable function. Then $f$ is Bochner integrable.   
\end{lemma}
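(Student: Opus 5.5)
By Lemma \ref{smeasurable}, $f$ is strongly measurable for either choice of $\kappa\in\{\|\cdot\|, w\}$. Theorem \ref{bochnerintegral} then says that $f$ is Bochner integrable exactly when $\int_A\|f(a)\|\,d\nu<\infty$. The plan is therefore to check this scalar integrability condition using the integrable bound on $F$.

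The first step is to make sure $a\mapsto\|f(a)\|$ is a measurable real function. Since $f$ is strongly measurable, there are simple functions $f_n$ with $\|f_n(a)-f(a)\|\to 0$ a.e. Each $a\mapsto\|f_n(a)\|$ is a real simple function. So $\|f(\cdot)\|$ is an a.e. limit of measurable functions and is $\nu$-measurable, after modifying on a null set if needed. Alternatively, one can use Lemma \ref{measurestrongmeasure}: $f$ is $(\mathcal A,\mathcal B(L^{\|\cdot\|}))$-measurable, and the norm is continuous.

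The second step uses the standing hypothesis that $F$ is integrably bounded. There is an integrable $h:A\to\mathbb R$ such that every $g:A\to F$ satisfies $\|g(a)\|\le h(a)$ a.e. Applying this with $g=f$ gives $\|f(a)\|\le h(a)$ a.e. Monotonicity of the Lebesgue integral then yields
\[
\int_A\|f(a)\|\,d\nu\le\int_A h\,d\nu<\infty,
\]
where the right-hand side is finite because $\nu$ is finite and $h$ is integrable. Theorem \ref{bochnerintegral} now gives Bochner integrability.

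The main subtlety is not the estimate but strong measurability when $\kappa=w$. That case needs norm-separability of the range, which comes from Lemma \ref{separability} together with the convexity and $w$-separability of $F$, and it is already handled in Lemma \ref{smeasurable}. One further observation could serve as a fallback if the integrable-bound hypothesis were not available. In all the cases of interest $F\subset X$, and $X$ is $\|\cdot\|$-bounded. Then the constant $h\equiv\sup_{x\in X}\|x\|$ is integrable on the finite measure space, so the same argument goes through.
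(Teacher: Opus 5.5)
Your proof is correct, but it takes a more direct route than the paper. The paper builds simple functions $f_n\colon A\to F$ converging to $f$ in norm a.e. It uses Aliprantis--Border, Theorem 4.38, when $\kappa=\|\cdot\|$, and the construction in the proof of Pettis's theorem when $\kappa=w$. Each $f_n$ is then Bochner integrable by Remark~\ref{simplefunctionsbochner} and dominated by $h$, and the paper deduces integrability of $f$ from the dominated convergence theorem, Theorem~\ref{DCT}. You instead combine strong measurability (Lemma~\ref{smeasurable}) with Bochner's characterization, Theorem~\ref{bochnerintegral}, and apply the integrable bound to $f$ itself.

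Your version is shorter and treats both choices of $\kappa$ uniformly. It also only needs $\|f\|\le h$. The paper's version needs the bound for the approximants, so it must insist that the $f_n$ take values in $F$. That is routine in the norm case but needs some care in the weak case. In effect, the paper re-derives the sufficiency half of Bochner's theorem through dominated convergence. In return it produces an explicit sequence with $\int_A\|f_n-f\|\,d\nu\to 0$.

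One small remark: finiteness of $\nu$ plays no role in $\int_A h\,d\nu<\infty$, since integrability of $h$ suffices. It matters only in your fallback with a constant bound.
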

\begin{proof} Let $\kappa = \|\cdot\|$. Since $F$ is $\|\cdot\|$-separable, there exists a sequence of $\left(\mathcal A, \mathcal B\left(F^{\|\cdot\|}\right)\right)$-measurable simple functions $(f_n)$ from $A$ into $F$ that converges to $f$ almost everywhere in the $\|\cdot\|$-topology (\cite{aliprantis_infinite_2006}, Theorem 4.38, p.~145).
By Remark~\ref{simplefunctionsbochner}, each $f_n$ is Bochner integrable. Moreover, $\|f_n(a)\| \le h(a)$ for some $h \in L^1(\nu)$ almost everywhere. Hence, by Theorem~\ref{DCT}, $f$ is Bochner integrable, since it is strongly measurable by Lemma~\ref{smeasurable}.

Now, let $\kappa = w$. As established in Lemma \ref{smeasurable}, the function $f$ is strongly measurable, from which we deduce the existence of a sequence of simple functions $f_n: A \to F$ such that 
$\text{lim}_n \|f_n(a) - f(a)\| = 0 \quad \text{for almost every } a \in A$
(see the proof of Theorem \ref{pettismeasurability} in \cite{diestel_vector_1977}, p.~42). The proof proceeds along the same lines as in the previous case.\end{proof}

\begin{proposition}\label{KDCT}

Let $\kappa\in\{\|\cdot\|, w\}$ and let $f$ and the sequence $(f_n)$ be measurable functions from $A$ into $F$ such that $\kappa-\textnormal{lim}_n f_n(a) = f(a)$ a.e., then $\kappa-\textnormal{lim}_n\int_A f_n (a)d\mu = \int_A f(a)d\mu$.
\end{proposition}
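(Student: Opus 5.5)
The plan is to split according to $\kappa$ and, in both cases, reduce to the Bochner Dominated Convergence Theorem (Theorem \ref{DCT}) together with Lemma \ref{lebesgueintegral}. Throughout, $F$ is the convex, $\kappa$-separable, metrizable and integrably bounded set fixed above, so there is an integrable $h$ with $\|f_n(a)\|\le h(a)$ and $\|f(a)\|\le h(a)$ a.e. By Lemmas \ref{smeasurable} and \ref{Bintegration}, $f$ and every $f_n$ are strongly measurable and Bochner integrable. Hence all the integrals in the statement exist.

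For $\kappa=\|\cdot\|$, the hypothesis says $\|f_n(a)-f(a)\|\to 0$ a.e. Since $\nu$ is finite, a.e. convergence of strongly measurable functions gives convergence in $\nu$-measure. The domination $\|f_n\|\le h$ is available, so Theorem \ref{DCT} yields $\int_A\|f_n-f\|d\nu\to 0$. The conclusion then follows from
\[
\left\|\int_A f_n\,d\nu-\int_A f\,d\nu\right\|\le\int_A\|f_n-f\|\,d\nu .
\]

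For $\kappa=w$, we have $L'=L^*$, so it suffices to show that $p\cdot\int_A f_n\,d\nu\to p\cdot\int_A f\,d\nu$ for every $p\in L^*$. By Lemma \ref{lebesgueintegral}, this amounts to showing $\int_A p\cdot f_n\,d\nu\to\int_A p\cdot f\,d\nu$. The scalar functions $p\cdot f_n$ are measurable, since the $f_n$ are weakly measurable. They converge a.e. to $p\cdot f$ by the definition of weak convergence, and they satisfy $|p\cdot f_n|\le\|p\|^*h$. The scalar Lebesgue dominated convergence theorem (the case of Theorem \ref{UDCT} with $J$ a singleton) therefore gives the claim.

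The only point needing care is the measurability and integrability bookkeeping, in particular that the a.e. null sets and the dominating function $h$ can be chosen uniformly in $n$. This is precisely what the integrable boundedness of $F$ provides. If one wants the Mackey case $\kappa=\tau$, it coincides with the norm case under $L'=L^*$, as noted at the start of this subsection, so no separate argument is required.
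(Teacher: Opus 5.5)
Your proposal is correct and follows essentially the same route as the paper: the norm case uses the Bochner Dominated Convergence Theorem (Theorem \ref{DCT}) with the integrable bound supplied by $F$, and the weak case reduces via Lemma \ref{lebesgueintegral} to the scalar dominated convergence theorem with the bound $\|p\|h$. Your extra steps, namely that a.e.\ convergence implies convergence in measure on a finite measure space and the estimate $\|\int_A f_n\,d\nu-\int_A f\,d\nu\|\le\int_A\|f_n-f\|\,d\nu$, only make explicit what the paper leaves implicit.
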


\begin{proof} First, let $\kappa = \|\cdot\|$. By Lemma \ref{Bintegration}, each $f_n$ is Bochner integrable and since $F$ is integrably bounded, one can use Theorem \ref{DCT} to get $$\|\cdot\|- \text{lim}_n \int_A f_n (a)d\mu =  \int_A f(a)d\mu$$ 

Second, take $\kappa = w$. We want to show that for every $p\in L'$, $$\text{lim}_n p\cdot\left (\int_A f_n (a)d\mu  
\right) = p\cdot\left (\int_A f (a)d\mu \right)$$ For any $p\in L'$, $\textnormal{lim}_n p\cdot f_n(a) = p\cdot f(a)$ a.e. By Lemma \ref{Bintegration}, both $f_n$ and $f$ are Bochner integrable and by Lemma \ref{lebesgueintegral}, both $p\cdot f_n$ and $p\cdot f$ are Lebesgue integrable. Since \( p \) is linear and continuous with respect to the weak topology $w$, and \( L \) is a Banach space, it follows that \( p \) is also $\|\cdot\|$-continuous (\cite{aliprantis_infinite_2006}, Theorem 6.17, p. 233), hence \( \|p\| < \infty \) (\cite{aliprantis_infinite_2006}, Lemma 6.4 (2), p. 229). Therefore, for each \( n \), the scalar functions \( p \cdot f_n \) and \( p \cdot f \) are measurable and satisfy
$|p\cdot f_n(a)| \le \|p\| \cdot \|f_n(a)\| \le \|p\| \cdot h(a) \quad \text{a.e.}$
where \( h \in L^1(\mu) \) is the integrable function dominating \( \|f_n\| \). Hence, by the Dominated Convergence Theorem for real-valued functions, we obtain $$\text{lim}_n \int_A p\cdot f_n(a) \, d\mu(a) = \int_A p\cdot f(a) \, d\mu(a)$$ Finally, by Lemma~\ref{lebesgueintegral}, we have
$$p\cdot \left( \int_A f_n(a) \, d\mu(a) \right) = \int_A p\cdot f_n(a) \, d\mu(a) \ \text{and} \ p\cdot \left( \int_A f(a) \, d\mu(a) \right) = \int_A p\cdot f(a) \, d\mu(a)$$
so it follows that
$$w\text{-}\text{lim}_n \int_A f_n(a) \, d\mu(a) = \int_A f(a) \, d\mu(a)$$

\end{proof}

From the previous lemmata, we derive the following corollaries, which will be used in the main results of the paper concerning Bochner economies. Recall that by Mackey's Theorem (see \cite{aliprantis_infinite_2006}, Theorem~6.20, p.~234, and Corollary~6.23, p.~236), a subset of \( F \) is \( w \)-bounded if and only if it is \( \|\cdot\| \)-bounded.

\begin{corollary}\label{cBintegration}
Let \( f \colon (A, \mathcal A) \to (F, \mathcal B(F^\kappa)) \) be a measurable function. If \( F \subset L \) is \( w \)-bounded, convex, and \( \kappa \)-Polish, then \( f \) is Bochner integrable.
\end{corollary}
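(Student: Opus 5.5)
The plan is to reduce the corollary to Lemma \ref{Bintegration}, whose standing hypotheses in this appendix are that $F$ is convex, $\kappa$-separable, metrizable and integrably bounded. Throughout we work in the setting $L'=L^*$ of this subsection. There, $\tau$ coincides with the norm topology, so the case $\kappa=\tau$ is literally the case $\kappa=\|\cdot\|$, and it suffices to treat $\kappa\in\{\|\cdot\|,w\}$. The task is therefore to verify the standing hypotheses on $F$ from the assumptions that $F$ is $w$-bounded, convex and $\kappa$-Polish.

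The topological hypotheses come first. A $\kappa$-Polish space is, by definition, separable and (completely) metrizable for $\kappa$. Convexity is assumed, so Lemma \ref{separability} applies to the compatible topologies $w\subset\|\cdot\|$. It yields that $F$ is $\|\cdot\|$-separable, which is the separability used inside Lemmas \ref{smeasurable} and \ref{Bintegration} (via Pettis' Theorem \ref{pettismeasurability}).

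Next comes integrable boundedness, which is the only genuinely new point. By Mackey's Theorem, recalled just before the statement, a $w$-bounded subset of $F$ is $\|\cdot\|$-bounded. Hence there is $M<\infty$ with $\|y\|\le M$ for all $y\in F$. Since $\nu$ is finite, the constant function $h\equiv M$ is integrable, and every $g:A\to F$ satisfies $\|g(a)\|\le h(a)$ for all $a$. So $F$ is integrably bounded in the sense of the appendix.

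With all hypotheses checked, Lemma \ref{smeasurable} gives that $f$ is strongly measurable. Moreover $\int_A\|f\|\,d\nu\le M\nu(A)<\infty$, so Theorem \ref{bochnerintegral} (equivalently Lemma \ref{Bintegration}) gives Bochner integrability. The main obstacle is the bookkeeping that the $\kappa$-Borel measurability of $f$ suffices for weak measurability when $\kappa=w$. This is exactly what Lemma \ref{smeasurable} handles via the $\mathcal B(L^w)$-measurability of each $p\in L^*$, so I would simply invoke it rather than reprove it.
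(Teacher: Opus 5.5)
Your proposal is correct and follows the paper's own route. Mackey's Theorem turns $w$-boundedness into norm-boundedness, which gives integrable boundedness by a constant on the finite measure space; $\kappa$-Polishness together with convexity and Lemma \ref{separability} supplies the standing separability and metrizability hypotheses; and Lemma \ref{Bintegration} then applies, with your direct finish through Lemma \ref{smeasurable} and Theorem \ref{bochnerintegral} being only a shortcut through that same lemma.
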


\begin{corollary}\label{cKDCT}
Let \( f \) and the sequence \( (f_n) \) be measurable functions from \( A \) into \( F \), and suppose that \( f_n(a) \to f(a) \) in the \( \kappa \)-topology for almost every \( a \in A \). If \( F \subset L \) is \( w \)-bounded, convex, and \( \kappa \)-Polish, then
\[
\kappa\text{-}\text{lim}_n \int_A f_n(a)\, d\mu = \int_A f(a)\, d\mu.
\]
\end{corollary}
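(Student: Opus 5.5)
The statement to prove is Corollary \ref{cKDCT}. The plan is to reduce it to Proposition \ref{KDCT}, in the same way Corollary \ref{cBintegration} reduces to Lemma \ref{Bintegration}. That means checking that the hypotheses placed on $F$ here imply the standing hypotheses used in Proposition \ref{KDCT}: that $F$ is convex, $\kappa$-separable, metrizable and integrably bounded. After that the only remaining issue is the case $\kappa = \tau$, which Proposition \ref{KDCT} does not list explicitly.

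First, since $F$ is $\kappa$-Polish it is $\kappa$-separable and metrizable. Since $F$ is convex, Lemma \ref{separability} then makes $F$ norm separable. Second, by Mackey's Theorem, as recalled just before Corollary \ref{cBintegration}, $w$-boundedness of $F$ is equivalent to $\|\cdot\|$-boundedness. So there is $s>0$ with $\|y\|\le s$ for all $y\in F$. The constant function $h\equiv s$ is integrable because the measure is finite, and it dominates $\|g(a)\|$ for every $g:A\to F$. Hence $F$ is integrably bounded in the required sense. The measurable maps $f_n,f$ into $F$ are then Bochner integrable by Lemma \ref{Bintegration}.

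Next we handle the cases. If $\kappa\in\{\|\cdot\|,w\}$, Proposition \ref{KDCT} applies directly and gives $\kappa$-$\lim_n\int_A f_n\,d\mu=\int_A f\,d\mu$. If $\kappa=\tau$, we use the section's standing assumption $L'=L^*$. By Corollary 6.23 of \cite{aliprantis_infinite_2006}, as noted in Section \ref{sec:model}, $\tau(L,L^*)$ is the norm topology. Thus $\tau$-convergence a.e. is norm convergence a.e., and $\tau$-Polishness is norm Polishness, so the norm case of Proposition \ref{KDCT} gives the conclusion. As a direct alternative for the norm case, note that $\|f_n-f\|\le 2s$. The functions $f_n\to f$ a.e. in norm, so they also converge in measure on a finite measure space. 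Theorem \ref{DCT} then yields $\int_A\|f_n-f\|\,d\mu\to0$, and this bounds $\|\int_A f_n\,d\mu-\int_A f\,d\mu\|$.

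The main obstacle is measurability of $f_n$ and $f$ when $\kappa=w$. Here the maps are only $\mathcal B(F^w)$-measurable, and strong measurability is needed before the Bochner integrals make sense. This is exactly what Lemma \ref{smeasurable} supplies, via norm separability of $F$ together with the Pettis Measurability Theorem \ref{pettismeasurability}, so I would cite that lemma. For the weak convergence itself, the proof of Proposition \ref{KDCT} passes through scalar dominated convergence using $|p\cdot f_n|\le\|p\|s$ and Lemma \ref{lebesgueintegral}, and nothing beyond that is needed.
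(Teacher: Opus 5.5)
Your proposal is correct and follows essentially the paper's route. The paper gives no separate proof: it derives the corollary directly from Proposition \ref{KDCT}. It uses Mackey's Theorem to turn $w$-boundedness into $\|\cdot\|$-boundedness, which gives integrable boundedness on a finite measure space, and Polishness together with Lemma \ref{separability} for the separability and metrizability hypotheses. The standing assumption $L'=L^*$ identifies $\tau$ with the norm topology, exactly as you do.
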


\subsection*{Gel'fand integral.}

\ 

We now consider that $F$ is $\|\cdot\|$-bounded and $L$ is a dual space of a Banach space $E$, i.e., $L = E^*$. By \cite{schaefer_topological_1999}, Theorem 1.6, p. 212 or \cite{aliprantis_infinite_2006}, Corollary 6.11 p. 231 for normed spaces, $E\subset E^{**}$ via the canonical embedding and thus $L^* = E^{**}$. Therefore, we identify $E$ with $L'\subset L^*$. Hence, $w = \sigma(L, L')$ is the weak$^*$ topology on $L$, which is weaker than $\tau = \tau(L, L')$, the Mackey topology on $L$. In turn, $\tau$ is weaker than the norm topology $\|\cdot\|$. Similarly, $w' = \sigma(L', L)$ is the weak topology on $L'$, which is weaker than the Mackey topology $\tau' = \tau(L', L)$. Moreover, in this case, $\tau'$ coincides with the norm topology $\|\cdot\|'$.

\begin{lemma}\label{Gintegration}
Let $\kappa\in\{\|\cdot\|, \tau, w\}$ and let $f:A\rightarrow F\subset L $ be a $(\mathcal A, \mathcal B(F^\kappa))$-measurable function. Then $f$ is Gel'fand integrable. 
\end{lemma}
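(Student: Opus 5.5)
The plan is to apply Theorem \ref{Gel'fandintegral} (Corollary 11.53 in \cite{aliprantis_infinite_2006}) with the roles of the spaces arranged as follows. Since $L = E^*$ and $E$ is identified with $L'$, Gel'fand integrability of $f$ needs two ingredients. First, $f$ must be weak$^*$-measurable, that is, $q\cdot f\colon A\to\mathbb R$ must be measurable for every $q\in E = L'$. Second, $\|f(a)\|\le s$ must hold for some constant $s>0$ and all $a\in A$. The second condition is immediate, because $F$ is $\|\cdot\|$-bounded by hypothesis; so there is $s>0$ with $\|y\|\le s$ for all $y\in F$, hence $\|f(a)\|\le s$ everywhere. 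Everything therefore reduces to weak$^*$-measurability.

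For weak$^*$-measurability, fix $q\in L'$ and consider the evaluation map $\hat q\colon L\to\mathbb R$, $\hat q(y) = q\cdot y$. By the definition of $w=\sigma(L,L')$, the map $\hat q$ is $w$-continuous on $L$. Since $w\subset\tau\subset\|\cdot\|$ in this setting (as recorded just before the lemma), $\hat q$ is also $\tau$-continuous and $\|\cdot\|$-continuous. Hence for each $\kappa\in\{\|\cdot\|,\tau,w\}$ its restriction to $F$ is $\kappa$-continuous. It is therefore $\left(\mathcal B(F^\kappa),\mathcal B(\mathbb R)\right)$-measurable, since continuous maps are Borel, exactly as invoked in Lemma \ref{smeasurable} via \cite{aliprantis_infinite_2006}, Corollary 4.26. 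Composing with the $(\mathcal A,\mathcal B(F^\kappa))$-measurable map $f$ shows that $q\cdot f=\hat q\circ f$ is $(\mathcal A,\mathcal B(\mathbb R))$-measurable, as in \cite{hildenbrand_core_1974}, p.~42, (4). As $q\in L'$ was arbitrary, $f$ is weak$^*$-measurable.

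Theorem \ref{Gel'fandintegral} then yields an element $\int_{A'} f\,d\nu\in L$ with $q\cdot\int_{A'} f\,d\nu=\int_{A'} q\cdot f\,d\nu$ for all $q\in L'$ and every $A'\in\mathcal A$. The real integrals exist because $|q\cdot f|\le\|q\|'\,s$ and $\nu$ is finite.

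The only step needing care is the identification of the predual $E$ with $L'$ inside $L^{*}=E^{**}$. We must make sure that the functionals against which weak$^*$-measurability is tested are exactly those in $L'$, and not all of $L^*$. Once this is settled, no separability or Pettis-type argument is needed, in contrast with the Bochner case. So the main point to verify is simply that the coarsest topology, $w$, already makes each $\hat q$ continuous, which handles all three choices of $\kappa$ simultaneously.
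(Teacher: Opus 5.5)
Your proposal is correct and follows essentially the same route as the paper. The uniform bound comes from the $\|\cdot\|$-boundedness of $F$. Weak$^*$-measurability comes from each $p\in L'$ being $w$-continuous, hence $\tau$- and $\|\cdot\|$-continuous, so that $p\cdot f$ is measurable; Theorem \ref{Gel'fandintegral} then finishes the argument. The only cosmetic difference is that you compose $f$ with the restriction of $p$ to $F^\kappa$, whereas the paper treats $f$ as $\mathcal B(L^\kappa)$-measurable, which is equivalent.
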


\begin{proof} Notice that $f(A)$ is norm-bounded. In addition, we remark that for any $p\in L'$, the mapping $a\mapsto p\cdot f(a)$ is $(\mathcal A, \mathcal B(\mathbb R))$-measurable since $f$ is $(\mathcal A, \mathcal B (L^{\kappa}))$-measurable and $p$ is continuous. Therefore, $f$ is $L'$-measurable (or weak$^*$-measurable) and thus, Gel'fand integrable by Theorem \ref{Gel'fandintegral}.\end{proof}

\begin{proposition}\label{GDCT}
Let $ (f_n) $ be a sequence of $ (\mathcal{A}, \mathcal{B}(F^\kappa)) $-measurable functions from $ A $ into $ F $, and let $ f \colon A \to F $ also be $ (\mathcal{A}, \mathcal{B}(F^\kappa)) $-measurable. Suppose that $ f_n(a) \to f(a) $ in the $ \kappa $-topology a.e. Then $$\kappa-\textnormal{lim}_n\int_A f_n(a)d\mu = \int_A f(a)d\mu$$.
\end{proposition}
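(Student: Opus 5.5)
The plan is to reduce the vector statement to scalar dominated convergence, treating $\kappa = w$ first and then the stronger topologies. Throughout, $L = E^*$ and $L' = E$, and $F$ is norm-bounded, say $\|y\| \le s$ for all $y \in F$. By Lemma \ref{Gintegration}, each $f_n$ and $f$ is Gel'fand integrable, and the Gel'fand integrals $\int_A f_n\,d\mu$ and $\int_A f\,d\mu$ lie in $L$ and are characterized by
\[
p\cdot \int_A f_n\,d\mu = \int_A p\cdot f_n\,d\mu \quad \text{for all } p \in L'.
\]

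\emph{Case $\kappa = w$.} Fix $p \in L'$. Since $f_n(a) \to f(a)$ weak$^*$ a.e., we have $p\cdot f_n(a) \to p\cdot f(a)$ a.e. These functions are measurable, as in the proof of Lemma \ref{Gintegration}. They are also dominated by the constant $\|p\|' s$, which is integrable because $\mu$ is finite. Lebesgue's dominated convergence theorem then gives $\int_A p\cdot f_n\,d\mu \to \int_A p\cdot f\,d\mu$. By the defining identity of the Gel'fand integral, this says exactly that $p\cdot \int_A f_n\,d\mu \to p\cdot\int_A f\,d\mu$ for every $p\in L'$, which is $w$-convergence.

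\emph{Case $\kappa = \|\cdot\|$.} Here I would use the dual-norm formula
\[
\Big\|\int_A f_n\,d\mu - \int_A f\,d\mu\Big\| = \sup_{\|p\|'\le 1}\Big|\int_A p\cdot(f_n-f)\,d\mu\Big| \le \int_A \|f_n(a)-f(a)\|\,d\mu.
\]
The last integral tends to $0$ by dominated convergence, with dominating constant $2s$. A measurability subtlety arises here. If $a\mapsto\|f_n(a)-f(a)\|$ is not measurable, I would replace it by a measurable majorant. Alternatively, I would invoke Theorem \ref{UDCT} with $J$ the unit ball of $L'$, applied to $f^j_n = p\cdot f_n$: its condition (ii) is precisely a.e. norm convergence, and its condition (iii) holds with $h = s$.

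\emph{Case $\kappa = \tau$.} This is the main obstacle. The $\tau$-convergence of $\int_A f_n\,d\mu$ means uniform convergence of $p\cdot\int_A f_n\,d\mu$ over each $w'$-compact, convex, circled set $G \subset L'$. I would apply Theorem \ref{UDCT} with $J = G$ and $f^j_n = p\cdot f_n$ for $p\in G$. Condition (ii) is the pointwise hypothesis $f_n(a)\to f(a)$ in $\tau$, which is uniform convergence over $G$. For condition (iii), $G$ is $w'$-compact, hence weakly bounded, hence norm bounded in $L'$ by the uniform boundedness principle, say by $c$. This gives the domination $|p\cdot f_n(a)|\le cs$. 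The conclusion of Theorem \ref{UDCT} is then
\[
\sup_{p\in G}\Big|\int_A p\cdot f_n\,d\mu-\int_A p\cdot f\,d\mu\Big|\to 0,
\]
which, via the Gel'fand identity, is the required $\tau$-convergence.

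The delicate point I expect to check is the measurability requirement (i) of Theorem \ref{UDCT}. It is only needed for each fixed $j$, and for fixed $p$ it holds as in Lemma \ref{Gintegration}. The tacit measurability of the supremum in that theorem is handled in \cite{tausk_some_2001} without further hypotheses. So the $\tau$ case reduces cleanly once the norm bound on $G$ is in place, and the norm case is the special instance in which $G$ is the unit ball.
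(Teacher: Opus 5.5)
Your proposal is correct. For $\kappa=w$ and $\kappa=\tau$ it coincides with the paper's proof:
\begin{itemize}
\item for $\kappa=w$, scalar dominated convergence for each $p\in L'$, followed by the defining identity of the Gel'fand integral;
\item for $\kappa=\tau$, Theorem \ref{UDCT} with $J=G$, using that a $w'$-compact $G$ is norm bounded in $L'$. The paper cites Mackey's theorem for $\langle L',L\rangle$ where you cite uniform boundedness; these are the same fact.
\end{itemize}

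The genuine difference is the norm case. The paper simply invokes Corollary \ref{cKDCT}, i.e.\ Bochner dominated convergence. That corollary was stated under $L'=L^*$, so using it here tacitly requires two facts: the $\|\cdot\|$-measurable $F$-valued maps are strongly measurable (because $F$ is norm separable when $\kappa=\|\cdot\|$), and their Bochner and Gel'fand integrals coincide. You stay entirely inside the Gel'fand framework instead. Since $L=E^*$, the norm of $L$ is the supremum over the unit ball of $L'=E$, so Theorem \ref{UDCT} with $J$ equal to that unit ball gives the result directly. Your route is more self-contained and handles the norm and Mackey cases by a single mechanism. The paper's route reuses machinery already established in the Bochner section.

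Three small points on your write-up.
\begin{itemize}
\item The norm case is a parallel application of Theorem \ref{UDCT}, not a special instance of the $\tau$ case. The unit ball of $E$ is $w'$-compact only when $E$ is reflexive, and $\|\cdot\|$ is in general strictly finer than $\tau(L,L')$. Theorem \ref{UDCT} simply does not need $J$ to be compact.
\item Drop the ``measurable majorant'' fallback in your first variant. Measurable majorants of functions tending to $0$ a.e.\ need not tend to $0$, so it would not rescue the argument. It is also unnecessary: $a\mapsto\|f_n(a)-f(a)\|$ is measurable as a continuous function of $(f_n,f)$ into $F\times F$ with its norm topology. That space is separable and metrizable, so its Borel and product $\sigma$-algebras agree.
\item Theorem \ref{UDCT} needs no measurability of the supremum. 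Choose indices $j_n$ that nearly attain the supremum. The functions $f^{j_n}_n-f^{j_n}$ are measurable, dominated by $2h$, and tend to $0$ a.e., so ordinary dominated convergence already gives the conclusion.
\end{itemize}
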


\begin{proof}

When $\kappa$ is the norm topology, the proof is a special case of Corollary \ref{cKDCT}. Let $\kappa = w$. By Lemma \ref{Gintegration}, $\int_A f_n(a)d\nu(a)$ and $\int_A f(a)d\nu(a)$ exist. We want to show that for every $p \in L'$, 
$\text{lim}_n \, p \cdot \left( \int_A f_n(a) \, d\nu(a) \right) = p \cdot \left( \int_A f(a) \, d\nu(a) \right)$ which is the definition of weak convergence in $L$. Because of Gel'fand integrability, $p\cdot \left(\int_A f_n (a)d\nu(a)\right) = \int_A p\cdot f_n (a)d\nu(a)$ for all $n\in\mathbb N$ and $p\cdot \left(\int_A f(a)d\nu(a)\right) = \int_A p\cdot f(a)d\nu(a)$. Since $F$ is $\|\cdot\|$-bounded, $p$ is continuous and for every $n\in\mathbb N$, $p\cdot f_n$ and $p\cdot f$ are measurable, we make use of the Lebesgue Dominated Convergence
Theorem\footnote{See Theorem 11.21 in \cite{aliprantis_infinite_2006}, p. 415. In this case, Theorem \ref{DCT} can also be applied, as both $p \cdot f_n$ and $p \cdot f$ are measurable functions with values in a norm-bounded subset of the Banach space $\mathbb{R}$.} to get $\textnormal{lim}_n \int_A p\cdot f_n (a)d\nu(a) = \int_A p\cdot f (a)d\nu(a)$. Hence, $w-\textnormal{lim}_n \int_A f_n (a)d\nu(a) = \int_A f(a)d\nu(a)$. 

Finally, consider $\kappa = \tau$. We want to prove that for any $w'$-compact, convex, circled subset of $L'$, $G$, $\text{lim}_n \ \underset{p\in G}{\text{sup}}\left\{\left|p\cdot\left(\int_A [f_n (a)- f(a)]d\nu(a)\right)\right|\right\} = 0$. Since $G$ is $w'$-bounded, it is $\|\cdot\|'$-bounded because $L = (L')^*$ (Mackey's Theorem in \cite{aliprantis_infinite_2006}, p. 234 for the dual pairing $\langle L', L\rangle$). Therefore, there exists $M > 0$ such that $\vert p\cdot f_n (a)\vert \le M$ for all $a\in A$, $p\in G$ and $n\in\mathbb N$. Furthermore, every $p\cdot f_n$ and $p\cdot f$ are measurable (given the continuity of each $p_n$ and $p$ and given the measurability of each $f_n$ and $f$) and $\text{lim}_n \ \underset{p\in G}{\text{sup}}\left\{\left|p\cdot\left(f_n (a)- f(a)\right)\right|\right\} = 0$ by hypothesis. By Theorem~\ref{UDCT}, we have:
\[
\text{lim}_n \ \underset{p\in G}{\text{sup}}\left\{\left| \int_A p\cdot\left[f_n (a)- f(a)\right]d\nu(a)\right|\right\} = 0
\] Because of Gel'fand integral, we conclude that $$\text{lim}_n \ \underset{p\in G}{\text{sup}}\left\{\left|p\cdot\left(\int_A [f_n (a)- f(a)]d\nu(a)\right)\right|\right\} = 0$$ \end{proof}

We finish this section by stating and proving a Gel'fand version of Corollary \ref{chvariable}.

\begin{lemma}\label{lem:gelfand_change_var}
Let $L$ be a dual Banach space, and let $f \colon (A, \mathcal A, \mu) \to (B, \mathcal B)$ be a measurable map. Let $g \colon B \to L$ be a function such that both $g$ and $g \circ f$ are Gel'fand integrable. Suppose that $\mathcal A = f^{-1}(\mathcal B)$ and that $\mu \circ f^{-1} = \nu$ for a measure $\nu$ on $\mathcal B$. Then, for every $B' \in \mathcal B$,
\[
\int_{B'} g \, d\nu = \int_{f^{-1}(B')} (g \circ f) \, d\mu.
\]
\end{lemma}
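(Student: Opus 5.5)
The plan is to reduce everything to scalar integrals, using the defining property of the Gel'fand integral together with the real-valued change of variables formula. Since $L = E^*$ and Gel'fand integrals are determined by their action on $E$, two vectors of $L$ coincide once they agree against every $p \in E$. The reason is that $E$ separates the points of $E^*$, because elements of $E^*$ are functionals on $E$. So fix $B' \in \mathcal B$ and $p \in E$. The goal is then the scalar identity
\[
p\cdot \int_{B'} g\, d\nu = p\cdot \int_{f^{-1}(B')} (g\circ f)\, d\mu .
\]

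The key steps are as follows.

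\begin{enumerate}
\item By Gel'fand integrability of $g$ over $B'$, the left side equals $\int_{B'} p\cdot g(b)\, d\nu(b)$. This requires reading the hypothesis ``$g$ is Gel'fand integrable'' as integrability over each $B' \in \mathcal B$, which matches the definition preceding Theorem \ref{Gel'fandintegral}.
\item Likewise, by Gel'fand integrability of $g\circ f$, the right side equals $\int_{f^{-1}(B')} p\cdot (g\circ f)(a)\, d\mu(a)$. Here the sets $f^{-1}(B')$ belong to $\mathcal A = f^{-1}(\mathcal B)$.
\item The scalar function $p\cdot g$ is $\mathcal B$-measurable, by weak$^*$-measurability of $g$ (implicit in Gel'fand integrability), and it is $\nu$-integrable over $B'$. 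Moreover $p\cdot(g\circ f) = (p\cdot g)\circ f$. So the classical real-valued change of variables formula for image measures applies, giving
\[
\int_{B'} p\cdot g\, d(\mu\circ f^{-1}) = \int_{f^{-1}(B')} (p\cdot g)\circ f\, d\mu .
\]
This is the scalar case of the Dunford--Schwartz lemma quoted before Corollary \ref{chvariable}, with $L=\mathbb R$, where strong measurability is just measurability.
\item Combining the three equalities gives the scalar identity for every $p\in E$, hence equality in $L=E^*$.
\end{enumerate}

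The main obstacle is that the quoted Dunford--Schwartz statement is stated only for non-negative integrands, whereas $p\cdot g$ may change sign. I would handle this by applying the formula separately to $(p\cdot g)^+$ and $(p\cdot g)^-$; each is non-negative and measurable, and both pieces satisfy $(h\circ f)^\pm = h^\pm\circ f$. Integrability of $p\cdot g$ on $B'$ makes both pieces finite, so subtracting is legitimate. Alternatively, I would go through indicator functions, then simple functions, then monotone limits: for indicators the identity is exactly $\nu(B'\cap C)=\mu(f^{-1}(B')\cap f^{-1}(C))$, which follows from $\mu\circ f^{-1}=\nu$.

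The hypothesis $\mathcal A = f^{-1}(\mathcal B)$ is only needed to guarantee that the relevant sets and functions on $A$ are measurable in the right $\sigma$-algebra. I do not expect it to cause difficulty.
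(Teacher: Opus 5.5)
Your proposal is correct and follows the paper's proof essentially step for step. You test against each $p$ in the predual, use the defining property of the Gel'fand integral on both sides, apply the scalar change-of-variables formula to $p\cdot g$, and conclude because the predual separates the points of $L$. The only difference is that you split $p\cdot g$ into positive and negative parts to stay within the non-negative Dunford--Schwartz statement, whereas the paper directly cites Hildenbrand's scalar formula for integrable functions; either route is fine.
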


\begin{proof}
Let $p \in L'$. Since $g$ is Gel'fand integrable with respect to $\nu$, we have:
\[
p\cdot\left( \int_{B'} g \, d\nu \right) = \int_{B'} p\cdot g \, d\nu = \int_{B'} p(g) \, d\nu = \int_{B'} p\cdot g \, d(\mu \circ f^{-1})
\]

Now, applying the standard change-of-variables formula for the scalar-valued function $p\cdot g : B \to \mathbb{R}$ (see, e.g., \cite{hildenbrand_core_1974}, (36), p.~50), we obtain:
\[
\int_{B'} p\cdot g \, d(\mu \circ f^{-1}) = \int_{f^{-1}(B')} p\cdot g \circ f \, d\mu.
\]
Since $(p\cdot g \circ f)(a) = p\cdot g\cdot f(a)$, the right-hand side becomes:
\[
\int_{f^{-1}(B')} p\cdot(g \circ f) \, d\mu.
\]
Finally, because $g \circ f$ is Gel'fand integrable, 
\[
\int_{f^{-1}(B')} p\cdot(g \circ f) \, d\mu = p\cdot\left( \int_{f^{-1}(B')} (g \circ f) \, d\mu \right).
\]
Combining these equalities, we have shown that for every $p \in L'$,
\[
p\cdot\left( \int_{B'} g \, d\nu \right) = p\cdot\left( \int_{f^{-1}(B')} g \circ f \, d\mu \right).
\]
Since $L'$ separates points in $L$, the desired equality follows.
\end{proof}

As in the case of the change-of-variables formula for Bochner integrals, we have the following immediate corollary since $\mathcal A$ is a sub-$\sigma$-algebra of the Borel $\sigma$-algebra $\mathcal B(A)$.

\begin{corollary}\label{wchvariable}
Let $L$ be a dual Banach space, and let $f \colon (A, \mathcal B(A), \mu) \to (B, \mathcal B(B), \nu)$ be a measurable map. Let $g \colon B \to L$ be a function such that both $g$ and $g \circ f$ are Gel'fand integrable. If $\mu \circ f^{-1} = \nu$, then
\[
\int_{B} g \, d\nu = \int_{A} g \circ f \, d\mu.
\]
\end{corollary}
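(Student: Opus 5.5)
The plan is to apply the scalar dominated convergence theorem to the Gel'fand-integrable functions $g$ and $g\circ f$, after pairing with a fixed $p \in L'$; everything reduces to two scalar integrals, exactly as in the proof of Lemma \ref{lem:gelfand_change_var}. In fact the corollary is the special case $B' = B$ of that lemma. The only point to check is whether the hypothesis $\mathcal A = f^{-1}(\mathcal B)$ used there is needed, since here $\mu$ lives on the full Borel $\sigma$-algebra $\mathcal B(A)$.

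I would run the argument directly rather than cite the lemma, so that this $\sigma$-algebra issue never arises.

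\begin{enumerate}
\item Fix $p \in L'$. Gel'fand integrability of $g$ with respect to $\nu$ gives
\[
p\cdot\int_B g\,d\nu = \int_B p\cdot g \, d\nu .
\]
This presupposes that $p\cdot g$ is $\mathcal B(B)$-measurable and $\nu$-integrable, which is part of the definition of weak$^*$-measurability and Gel'fand integrability.
\item Since $\nu = \mu\circ f^{-1}$, apply the classical change of variables for real-valued functions (\cite{hildenbrand_core_1974}, (36), p.~50) to the $\nu$-integrable function $p\cdot g$ and the $(\mathcal B(A),\mathcal B(B))$-measurable map $f$:
\[
\int_B p\cdot g\,d\nu = \int_A (p\cdot g)\circ f\,d\mu .
\]
This identity holds for any measure on any $\sigma$-algebra making $f$ measurable. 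It is first verified on indicators of sets in $\mathcal B(B)$, then extended to simple functions, then to nonnegative functions by monotone convergence, and finally to integrable functions by splitting into positive and negative parts. Nothing requires $\mathcal B(A)$ to equal $f^{-1}(\mathcal B(B))$.
\item Note that $(p\cdot g)\circ f = p\cdot(g\circ f)$. Gel'fand integrability of $g\circ f$ then gives
\[
\int_A p\cdot (g\circ f)\,d\mu = p\cdot \int_A g\circ f\,d\mu .
\]
\item The three steps give equal pairings $p\cdot\int_B g\,d\nu = p\cdot\int_A g\circ f\,d\mu$ for every $p\in L'$. Since $L = (L')^*$, the space $L'$ separates points of $L$, so the two Gel'fand integrals coincide.
\end{enumerate}

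The main obstacle is the mismatch of $\sigma$-algebras flagged above; if one instead insists on citing Lemma \ref{lem:gelfand_change_var}, the sub-$\sigma$-algebra remark preceding the corollary covers it. Step 2 handles this cleanly, because the scalar change of variables only uses the measurability of $f$ and the pushforward identity $\nu=\mu\circ f^{-1}$ on $\mathcal B(B)$.

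A subtler point is whether $\mathcal B(A)$-measurability of $p\cdot(g\circ f)$ together with $\mu$-integrability is guaranteed. It is, because $p\cdot g$ is Borel and $f$ is measurable, and integrability transfers through the same change of variables applied to $|p\cdot g|$.
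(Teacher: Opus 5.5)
Your proof is correct and is essentially the paper's argument. The paper obtains the corollary from Lemma~\ref{lem:gelfand_change_var}: pair with $p\in L'$, use Gel'fand integrability of $g$, apply the scalar change of variables, use Gel'fand integrability of $g\circ f$, and conclude by separation of points, with the remark that $f^{-1}(\mathcal B(B))$ is a sub-$\sigma$-algebra of $\mathcal B(A)$. You simply inline that argument with $B'=B$ directly on $\mathcal B(A)$, which makes it visible that the lemma's $\sigma$-algebra hypothesis is not needed. One small slip: your opening sentence announces the scalar dominated convergence theorem, but no limit is taken anywhere, and the only scalar tool you actually use is the change-of-variables formula.
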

\end{appendices}

\end{document}